\documentclass[11pt,a4paper]{article}

\usepackage[utf8]{inputenc}
\usepackage[T1]{fontenc}
\usepackage{lmodern}

\usepackage{amsmath,amssymb,amsthm}
\usepackage{bm}              % Bold math symbols (\bm)
\usepackage{mathtools}        % Extensions to amsmath

\usepackage{booktabs}
\usepackage{array}
\usepackage{tabularx}

\usepackage[margin=2.5cm]{geometry}
\usepackage{setspace}
\usepackage{parskip}          % Paragraph spacing instead of indentation

\usepackage[authoryear,round]{natbib}

\usepackage{tikz}
\usepackage{pgfplots}
\pgfplotsset{compat=1.18}
\usetikzlibrary{arrows.meta, decorations.markings, patterns, calc}

\usepackage[colorlinks=true,linkcolor=blue,citecolor=blue,urlcolor=blue]{hyperref}
\usepackage{cleveref}

\newtheorem{theorem}{Theorem}
\newtheorem{proposition}{Proposition}
\newtheorem*{proposition*}{Proposition}  % Unnumbered variant
\newtheorem{corollary}{Corollary}
\theoremstyle{remark}
\newtheorem*{remark}{Remark}

\newcommand{\w}{\mathbf{w}}
\newcommand{\R}{\mathbf{R}}
\newcommand{\one}{\mathbf{1}}
\newcommand{\V}{\mathbf{V}}
\newcommand{\Z}{\mathbf{Z}}
\newcommand{\Sig}{\boldsymbol{\Sigma}}
\newcommand{\muvec}{\boldsymbol{\mu}}
\newcommand{\emu}{\mathbf{e}_\mu}
\newcommand{\CV}{\mathrm{CV}}
\newcommand{\SD}{\mathrm{SD}}
\newcommand{\SR}{\mathrm{SR}}
\newcommand{\Cov}{\mathrm{Cov}}
\newcommand{\Var}{\mathrm{Var}}
\newcommand{\tw}{\tau_w}
\newcommand{\rf}{r_f}
\DeclareMathOperator*{\argmax}{arg\,max}

\title{Asset Returns, Portfolio Choice, and Proportional Wealth Taxation}
\author{Anders G Fr{\o}seth\thanks{Independent Researcher.
  E-mail: \href{mailto:indrefjorden@pm.me}{indrefjorden@pm.me}.}}
\date{\today}

\begin{document}
\maketitle

\begin{abstract}
We analyse the effect of a proportional wealth tax on asset returns, portfolio
choice, and asset pricing in a partial equilibrium setting. The tax is levied
annually on the market value of all holdings at a uniform rate. We show that
such a tax is economically equivalent to the government acquiring a
proportional stake in the investor's portfolio each period---a form of risk
sharing in which expected wealth and risk are reduced by the same factor,
while the return per share is unaffected. This multiplicative separability
between the tax factor and the return realisation drives four main results.
First, the coefficient of variation of wealth is invariant to the tax rate,
since the tax reduces expected wealth and risk by the same proportion. Second,
the optimal portfolio weights---and in particular the tangency
portfolio---are independent of the tax rate. Third, the wealth tax is
orthogonal to portfolio choice: in discrete time it induces a homothetic
contraction of the opportunity set in the mean--standard deviation plane that
preserves the Sharpe ratio of every portfolio. Fourth, both taxed and untaxed
investors are willing to pay the same price per share for any asset. The
results are derived first under geometric Brownian motion and then generalised
to any return distribution in the location-scale family. A complementary
Modigliani-Miller analysis, treating the tax claim as a separate security,
confirms pricing neutrality and identifies an inconsistency in the existing
literature regarding the discount rate used for after-tax cash flows.
Imposing the Capital Asset Pricing Model as a special case confirms that
after-tax betas equal pre-tax betas and the security market line contracts
uniformly by $(1-\tw)$; under CRRA preferences, general-equilibrium
returns and prices are unchanged. This resolves an error in
\citet{Fama2021}, who overstates the price effect by adding the wealth tax
to the cost of capital without adjusting the discount rate.
The neutrality results depend on two conditions that are commonly violated
in practice: universal taxation at market value, and frictionless markets.
We formalise three channels through which relaxing these conditions breaks
neutrality---book-value taxation, liquidity frictions, and dividend
extraction---and show that they have opposing effects on asset prices.
\end{abstract}

\medskip
\noindent\textbf{JEL Classification:} G11, G12, H21, H24.

\smallskip
\noindent\textbf{Keywords:} Wealth tax, portfolio choice, asset pricing, CAPM,
Sharpe ratio, Modigliani--Miller, tax neutrality, book-value taxation.

%\tableofcontents
%\newpage

% =========================================================================
\section{Introduction}\label{sec:intro}

This paper develops a framework for analysing the effect of a proportional
wealth tax on asset returns, portfolio choice, and asset pricing. We compare
two investors who are identical in all respects except that one (Investor~A)
pays an annual proportional wealth tax on the market value of all holdings,
while the other (Investor~B) does not.

The main results are:
\begin{enumerate}
  \item A proportional wealth tax on all assets is economically equivalent to
        the government acquiring a proportional claim on the investor's
        assets, sharing both risk and return pro rata. From the investor's
        cash-flow perspective, this amounts to a periodic partial sale of
        shares to meet the tax obligation.
  \item The tax does not alter the return distribution per unit of asset, the
        risk-reward profile (coefficient of variation), or the optimal
        portfolio weights.
  \item The wealth tax is orthogonal to portfolio choice: it operates purely
        in the return dimension while portfolio optimisation operates in the
        risk dimension.
  \item Both investors should be willing to pay the same price per share for
        any asset.
\end{enumerate}

These results hold under the economic assumptions of proportional taxation on
all assets and partial equilibrium. We develop them in two stages. In
Sections~\ref{sec:single_gbm}--\ref{sec:gbm_summary}, we derive the results
under geometric Brownian motion (GBM), which provides a concrete and tractable
setting with closed-form expressions for the moments of the wealth
distribution. In \Cref{sec:generalisation}, we show that the GBM assumption
can be substantially relaxed: the results hold for any return distribution in
the location-scale family (and for some results, any distribution with finite
second moments). The distributional assumption turns out to play no
substantive economic role---the structural engine behind all four results is
the multiplicative separability of the wealth tax from the return realisation.

\Cref{sec:conditions} discusses the conditions under which the results hold
and clarifies their independence from any specific asset pricing model. A
complementary analysis under the CAPM (\Cref{sec:capm_specialcase}) confirms
that after-tax betas and the security market line are preserved, that
general equilibrium is undisturbed under CRRA preferences, and identifies a
pricing error in \citet{Fama2021}. A Modigliani-Miller perspective
completes the discussion; a survey of the related literature is collected
in \Cref{app:literature}.
\Cref{sec:beyond_neutrality} then asks what happens when the
key conditions fail. Three channels of non-neutrality are formalised:
book-value taxation (where the tax base diverges from market value),
liquidity frictions (where forced selling incurs transaction costs), and
dividend extraction (where the tax forces payouts that displace profitable
investment). These channels have opposing effects on asset prices, and their
net impact depends on the asset class. \Cref{sec:conclusion} summarises the
contributions and outlines directions for further theoretical and empirical
work. \Cref{app:dividends} analyses alternative tax payment mechanisms.

% =========================================================================
\section{Setup}\label{sec:setup}

\subsection{Asset Return Dynamics}\label{sec:setup_dynamics}

There are $K$ risky assets with prices following a multivariate GBM:
\begin{equation}\label{eq:gbm_scalar}
  \frac{dP_i}{P_i} = \mu_i \, dt + \sum_j \sigma_{ij} \, dZ_j,
  \qquad i = 1, \ldots, K.
\end{equation}
In vector notation:
\begin{equation}\label{eq:gbm_vector}
  \frac{d\mathbf{P}}{\mathbf{P}} = \muvec \, dt + \Sig \, d\Z
\end{equation}
where $\muvec$ is the $K \times 1$ vector of expected returns and
$\V = \Sig\Sig^\top$ is the $K \times K$ covariance matrix. A risk-free asset
with continuous return $\rf$ is also available.

The GBM assumption will be relaxed in \Cref{sec:generalisation}, where we show
that the main results depend only on the existence of well-defined first and
second moments. In the multivariate setting, the natural distributional class
is the family of elliptical distributions
\citep{OwenRabinovitch1983}, which nests the multivariate normal (and hence
GBM) as a special case.

\subsection{Tax Structure}\label{sec:setup_tax}

We consider a single tax instrument: a personal wealth tax at rate
$\tw \in (0,1)$, levied annually on the market value of all assets held by the
investor (including the risk-free asset). There is no income tax and no capital
gains tax. Two features of this specification are worth highlighting.
First, the tax is \emph{proportional}: the rate~$\tw$ is the same regardless
of wealth level. Second, the tax is \emph{universal}: all assets are taxed at
the same rate on market value, so no asset bears a higher or lower effective
tax burden than any other. Together, proportionality and universality generate
the multiplicative separability that drives the main results; relaxing either
opens channels of non-neutrality (\Cref{sec:beyond_neutrality}).

For expositional clarity, dividends are assumed to be paid out at year-end
and consumed; the wealth tax is then paid by selling shares.  This is a
simplifying convention, not a necessary condition for neutrality: if
dividends are instead reinvested in assets that remain within the tax base,
the multiplicative separability is preserved and all results carry through
(\Cref{app:dividends}).

\subsection{Investors}\label{sec:setup_investors}

\textbf{Investor~A} pays wealth tax $\tw > 0$ on all asset holdings.
\textbf{Investor~B} pays no wealth tax ($\tw = 0$). The two investors are
otherwise identical in preferences, information, and endowments.

\subsection{Notation}\label{sec:notation}

\Cref{tab:notation} collects the principal symbols used throughout the paper.

\begin{table}[ht]
\centering
\caption{Notation guide.}
\label{tab:notation}
\small
\begin{tabular}{@{}ll@{\qquad}ll@{}}
\toprule
\multicolumn{2}{l}{\textit{Assets and returns}}
  & \multicolumn{2}{l}{\textit{Portfolio}} \\
\midrule
$K$
  & Number of risky assets
  & $\w$
  & Portfolio weight vector \\
$P_i$, $P_t$
  & Asset price
  & $\w^*$
  & Optimal weights \\
$\muvec$, $\mu_i$
  & Expected return (vector / scalar)
  & $\w_T$
  & Tangency portfolio \\
$\Sig$
  & Diffusion (volatility) matrix
  & $\mu_P(\w)$
  & Portfolio expected return \\
$\V = \Sig\Sig^\top$
  & Covariance matrix
  & $\sigma_P(\w)$
  & Portfolio volatility \\
$\rf$
  & Risk-free rate
  & $R_P(\w)$
  & Portfolio rate of return (net) \\
$\R$
  & Return vector $(R_1,\ldots,R_K)^\top$
  & $G_P$
  & Portfolio gross return $1+R_P$ \\
$\Z$
  & Standard Brownian motion /
  & &\\
  & \quad standardised random vector
  & &\\
$G^{(n)}$
  & Cumulative gross return $P_n/P_0$
  & &\\[6pt]
\midrule
\multicolumn{2}{l}{\textit{Wealth, tax, and investors}}
  & \multicolumn{2}{l}{\textit{After-tax quantities}} \\
\midrule
$W_0$, $W_n$
  & Wealth at time $0$, $n$
  & $R_W(\w,\tw)$
  & After-tax return on wealth \\
$W_n^A$, $W_n^B$
  & Wealth of Investor~A, B
  & $\mu_W(\w,\tw)$
  & After-tax expected return \\
$N_0$, $N_n$
  & Number of shares at time $0$, $n$
  & $\sigma_W(\w,\tw)$
  & After-tax volatility \\
$\tw$
  & Wealth tax rate
  & $\rf^A$
  & After-tax risk-free rate \\
$\gamma$
  & Risk-aversion parameter
  & $\SR_W(\w)$
  & After-tax Sharpe ratio \\[6pt]
\midrule
\multicolumn{2}{l}{\textit{Statistical measures}}
  & \multicolumn{2}{l}{\textit{MM perspective (\Cref{sec:mm_perspective})}} \\
\midrule
$E[\cdot]$
  & Expectation
  & $V^0$, $V$
  & Firm value (pre-/post-tax) \\
$\SD(\cdot)$
  & Standard deviation
  & $\bar{x}$, $x_t$
  & Expected / realised cash flow \\
$\CV(\cdot)$
  & Coefficient of variation
  & $k$, $k^A$
  & Cost of capital (pre-/after-tax) \\
$\Cov(\cdot)$
  & Covariance
  & $\beta_U$
  & Unlevered (asset) beta \\
$\SR(\cdot)$
  & Sharpe ratio
  & $\beta_{\text{tax}}$
  & Beta of the tax claim \\[6pt]
\midrule
\multicolumn{2}{l}{\textit{CAPM (\Cref{sec:capm_specialcase})}}
  & \multicolumn{2}{l}{\textit{Beyond neutrality
    (\Cref{sec:beyond_neutrality})}} \\
\midrule
$\beta_j$, $\beta_j^A$
  & Asset beta; after-tax beta ($= \beta_j$)
  & $B$, $\theta$
  & Book value; book-to-market $B/V$ \\
$\text{MRP}$
  & Market risk premium $E[R_M] - \rf$
  & $V_{\textup{PE}}$, $V_{\textup{GE}}$
  & PE / GE asset value \\
$\text{MRP}^A$
  & After-tax MRP: $(1-\tw)\!\cdot\!\text{MRP}$
  & $c_j$
  & Illiquidity cost (stochastic) \\
$\beta_j^\tau$
  & After-tax beta (heterog.\ tax)
  & $\delta$
  & Payout ratio \\
$\tau_j$
  & Asset-specific effective tax rate
  & $\rho$
  & Internal rate of return \\
\bottomrule
\end{tabular}

\smallskip
\noindent\footnotesize
Subscript~$P$ denotes portfolio (pre-tax) quantities; subscript~$W$ denotes
after-tax wealth quantities. Superscripts~$A$ and~$B$ identify the taxed and
untaxed investor, respectively. Bold symbols ($\w$, $\R$, $\muvec$, $\V$,
$\Sig$, $\Z$) denote vectors or matrices; unbolded variants denote scalars.
\end{table}

% =========================================================================
\section{Single Asset Case Under GBM}\label{sec:single_gbm}

\subsection{Wealth Dynamics}\label{sec:gbm_wealth}

Consider a single risky asset with return parameters $(\mu, \sigma)$. The
investor holds $N_0$ shares at date~$0$ with initial price $P_0$, so initial
wealth is $W_0 = N_0 P_0$.

At each year-end $i = 1, 2, \ldots, n$:
\begin{enumerate}
  \item The share price evolves to $P_i$ according to the GBM.
  \item Dividends are paid and consumed (simplifying convention; see \Cref{app:dividends}).
  \item The wealth tax $\tw N_{i-1} P_i$ is due on the market value of holdings.
  \item The investor sells a fraction $\tw$ of shares to pay the tax.
\end{enumerate}

After $n$ years, the number of shares held is:
\begin{equation}\label{eq:shares_gbm}
  N_n = N_0 (1 - \tw)^n.
\end{equation}
The investor's equity wealth is:
\begin{equation}\label{eq:wealth_gbm}
  W_n = N_n P_n = W_0 (1 - \tw)^n \frac{P_n}{P_0}.
\end{equation}
Since $P_n / P_0$ is determined entirely by the GBM and is independent of the
investor's tax status, we can write:
\begin{equation}\label{eq:wealth_gbm_explicit}
  W_n = W_0 (1 - \tw)^n
    \exp\!\left[\left(\mu - \tfrac{1}{2}\sigma^2\right)n + \sigma Z_n\right]
\end{equation}
where $Z_n \sim N(0, n)$.

\subsection{Moments of the Wealth Distribution}\label{sec:gbm_moments}

\textbf{Expected wealth:}
\begin{equation}\label{eq:EW}
  E[W_n] = W_0 (1 - \tw)^n \, e^{\mu n}.
\end{equation}

\textbf{Standard deviation of wealth:}
\begin{equation}\label{eq:SDW}
  \SD(W_n) = W_0 (1 - \tw)^n \, e^{\mu n} \sqrt{e^{\sigma^2 n} - 1}.
\end{equation}

\textbf{Coefficient of variation:}
\begin{equation}\label{eq:CVW}
  \CV(W_n) = \frac{\SD(W_n)}{E[W_n]} = \sqrt{e^{\sigma^2 n} - 1}.
\end{equation}

\subsection{Key Result: Scaling Invariance}\label{sec:gbm_scaling}

The wealth tax enters only through the deterministic prefactor $(1 - \tw)^n$,
which multiplies both $E[W_n]$ and $\SD(W_n)$ identically. Therefore:

\begin{proposition}[CV Invariance, GBM]\label{prop:cv_gbm}
Under GBM with a proportional wealth tax on all assets, the coefficient of
variation of wealth at any horizon is invariant to the wealth tax rate:
\begin{equation}
  \CV_A(W_n) = \CV_B(W_n) = \sqrt{e^{\sigma^2 n} - 1}.
\end{equation}
The wealth tax reduces absolute expected wealth and absolute risk (standard
deviation) by the same proportion $(1 - \tw)^n$, leaving the relative
risk-reward profile unchanged.
\end{proposition}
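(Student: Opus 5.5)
The plan is to start from the wealth representation \eqref{eq:wealth_gbm_explicit}. It writes terminal wealth as
\[
W_n = c_\tau\, X_n, \qquad c_\tau := W_0(1-\tw)^n, \qquad X_n := \exp\!\left[\left(\mu-\tfrac12\sigma^2\right)n+\sigma Z_n\right].
\]
The factor $c_\tau$ is a deterministic positive constant. The random factor $X_n = P_n/P_0$ is the same for Investor~A ($\tw>0$) and Investor~B ($\tw=0$), because the price path is exogenous and does not depend on who holds the asset. The whole argument rests on this multiplicative separation. I will first check that $N_n=N_0(1-\tw)^n$ in \eqref{eq:shares_gbm} follows by induction from selling the fraction $\tw$ of shares each year. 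The point to note is that the tax $\tw N_{i-1}P_i$ is paid by selling $\tw N_{i-1}$ shares at price $P_i$, so the share count contracts deterministically and independently of the price realisation.

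Next I will compute the moments of $X_n$. Since $\sigma Z_n \sim N(0,\sigma^2 n)$, the lognormal moment generating function gives
\[
E[X_n]=e^{(\mu-\frac12\sigma^2)n}e^{\frac12\sigma^2 n}=e^{\mu n}, \qquad E[X_n^2]=e^{2(\mu-\frac12\sigma^2)n}e^{2\sigma^2 n}=e^{2\mu n+\sigma^2 n}.
\]
Hence
\[
\Var(X_n)=e^{2\mu n}\bigl(e^{\sigma^2 n}-1\bigr).
\]
Linearity of expectation and $\SD(cX)=|c|\,\SD(X)$ then give \eqref{eq:EW} and \eqref{eq:SDW}, each carrying the common factor $c_\tau$.

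Finally I will take the ratio. Because $c_\tau>0$ for $\tw\in(0,1)$, it cancels:
\[
\CV(W_n)=\frac{c_\tau\,\SD(X_n)}{c_\tau\,E[X_n]}=\CV(X_n)=\sqrt{e^{\sigma^2 n}-1}.
\]
The right-hand side contains no $\tw$. Setting $\tw=0$ recovers Investor~B, which gives $\CV_A=\CV_B$. The statement that expected wealth and standard deviation are both reduced by the proportion $(1-\tw)^n$ is read directly off the two moment formulas, since each equals $(1-\tw)^n$ times its untaxed counterpart.

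No step is a real obstacle. The only point that needs care is justifying that $X_n$ has the same law for both investors, i.e.\ that the tax does not feed back into prices (the partial-equilibrium assumption) and that the tax base is market value, so the number of shares sold is deterministic. I will state this explicitly before cancelling $c_\tau$.
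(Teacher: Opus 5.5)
Your proposal is correct and follows the paper's argument. Like the paper, you factor $W_n$ into the deterministic prefactor $W_0(1-\tw)^n$ times the tax-independent lognormal factor $P_n/P_0$, compute the mean and standard deviation of that factor, and cancel the prefactor in the ratio; your explicit moment-generating-function computation simply fills in the moment formulas for $E[W_n]$ and $\SD(W_n)$ that the paper states without derivation.
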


This result generalises to any return distribution with finite second moments;
see \Cref{prop:cv_gen} in \Cref{sec:gen_cv}.

\subsection{Economic Interpretation: Proportional Dilution and Risk Sharing}\label{sec:gbm_liquidation}

The wealth tax is economically equivalent to an annual transfer of a
fraction $\tw$ of the investor's position to the government. Each share
retained has the same return distribution as it would in the absence of the
tax. The tax operates on the \emph{quantity} of the position, not on the
\emph{quality} of the return.

This equivalence is exact under the following conditions:
\begin{itemize}
  \item The tax rate $\tw$ is proportional (no exemptions, no progressive
        rates, no caps).
  \item The tax applies to all assets at the same rate.
  \item The asset price process $P_t$ is unaffected by the investor's tax
        status (partial equilibrium).
\end{itemize}

\begin{remark}[Risk sharing]
The proportional-dilution mechanism admits a dual interpretation.
From the Modigliani-Miller perspective developed in
\Cref{sec:mm_perspective}, the same mechanism can be viewed as the
government holding a proportional claim on the investor's assets. The
government participates in both upside and downside pro rata: the tax
reduces the investor's expected return and risk exposure by the same
factor~$(1-\tw)$. The wealth tax is therefore a form of risk sharing
between the investor and the state, not a one-sided extraction. These
two descriptions---proportional dilution (the investor's cash-flow
perspective) and proportional claim (the valuation perspective)---are
economically equivalent.
\end{remark}

\Cref{fig:proportional_dilution} illustrates the mechanism over several periods.

\begin{figure}[ht]
\centering
\resizebox{\textwidth}{!}{%
\begin{tikzpicture}[>=Stealth,
  yearbox/.style={draw, rounded corners=3pt, minimum width=2.0cm,
    minimum height=1.3cm, align=center, font=\small},
  arrowlabel/.style={midway, above, font=\footnotesize},
  taxlabel/.style={midway, below, font=\footnotesize, red!70!black}
]

% --- Year 0 ---
\node[yearbox, fill=blue!8] (y0) at (0,0)
  {$N_0$ shares\\[2pt]$W_0 = N_0 P_0$};
\node[font=\small\bfseries, above=4pt] at (y0.north) {Year 0};

% --- Year 1 ---
\node[yearbox, fill=blue!8] (y1) at (4.2,0)
  {$N_0(1-\tw)$\\[2pt]shares};
\node[font=\small\bfseries, above=4pt] at (y1.north) {Year 1};

% --- Year 2 ---
\node[yearbox, fill=blue!8] (y2) at (8.4,0)
  {$N_0(1-\tw)^2$\\[2pt]shares};
\node[font=\small\bfseries, above=4pt] at (y2.north) {Year 2};

% --- Year n ---
\node[yearbox, fill=blue!8] (yn) at (13.2,0)
  {$N_0(1-\tw)^n$\\[2pt]shares};
\node[font=\small\bfseries, above=4pt] at (yn.north) {Year $n$};

% --- Arrows ---
\draw[->, thick] (y0.east) -- (y1.west)
  node[arrowlabel] {$\times P_1/P_0$}
  node[taxlabel] {transfer $\tw N_0$};

\draw[->, thick] (y1.east) -- (y2.west)
  node[arrowlabel] {$\times P_2/P_1$}
  node[taxlabel] {transfer $\tw N_1$};

% --- Dots ---
\node[font=\Large] at (10.8,0) {$\cdots$};

% --- Bottom annotation ---
\node[anchor=north, font=\small, text width=13cm, align=center]
  at (6.6,-1.5) {%
  \textbf{Return per share:} $P_n/P_0$ \quad (identical for both investors)\\[3pt]
  \textbf{Wealth:} $W_n^A = \underbrace{(1-\tw)^n}_{\text{tax
  factor}} \times \underbrace{N_0 \cdot P_n}_{\text{untaxed wealth }W_n^B}$%
  };

\end{tikzpicture}%
}
\caption{The proportional-dilution mechanism. At each period end, a fraction
$\tw$ of the investor's shares is transferred to the government as tax. The
share price process $P_t$ is unaffected. After $n$~periods, the number of
shares has decayed by the deterministic factor $(1-\tw)^n$, while the return
per share remains $P_n/P_0$---identical for taxed and untaxed investors.}
\label{fig:proportional_dilution}
\end{figure}
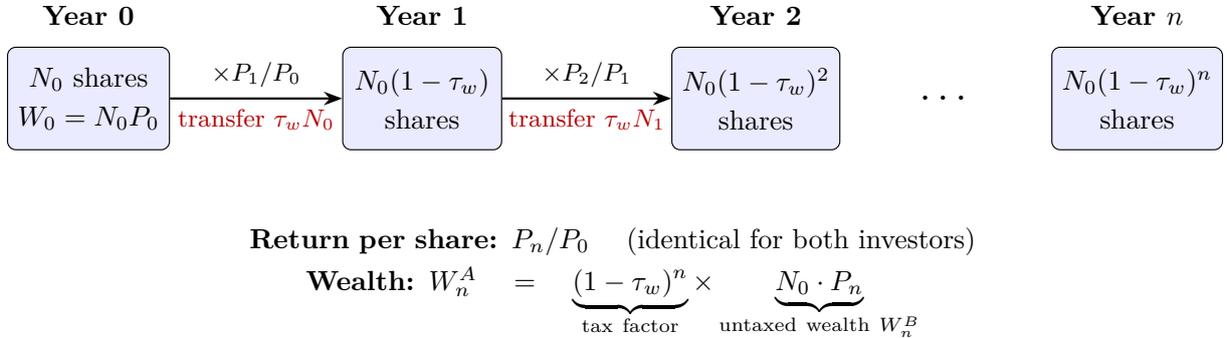

% =========================================================================
\section{Multi-Asset Portfolio Choice Under GBM}\label{sec:portfolio_gbm}

\subsection{Wealth Dynamics with Portfolio Weights}\label{sec:gbm_portfolio_dynamics}

Investor~B holds portfolio weights $\w$ in risky assets and
$(1 - \one^\top \w)$ in the risk-free asset. Following the continuous-time
framework of \citet{Merton1969}, wealth evolves as:
\begin{equation}\label{eq:dWB}
  \frac{dW_B}{W_B}
    = \bigl[\rf + \w^\top (\muvec - \rf \one)\bigr] dt
      + \w^\top \Sig \, d\Z.
\end{equation}
For Investor~A, the wealth tax applies to total wealth across all assets at
the same rate $\tw$. In the continuous-time approximation:
\begin{equation}\label{eq:dWA}
  \frac{dW_A}{W_A}
    = \bigl[\rf + \w^\top (\muvec - \rf \one) - \tw\bigr] dt
      + \w^\top \Sig \, d\Z.
\end{equation}
Define the portfolio expected return and volatility:
\begin{align}
  \mu_P(\w) &= \rf + \w^\top (\muvec - \rf \one), \label{eq:muP}\\
  \sigma_P(\w) &= \sqrt{\w^\top \V \w}. \label{eq:sigP}
\end{align}
Then the wealth dynamics decompose as:
\begin{align}
  \mu_W(\w, \tw) &= \mu_P(\w) - \tw, \label{eq:muW_ct}\\
  \sigma_W(\w, \tw) &= \sigma_P(\w). \label{eq:sigW_ct}
\end{align}
The wealth tax enters the drift additively and does not appear in the
diffusion coefficient.

\subsection{Optimality Conditions}\label{sec:gbm_foc}

The investor chooses $\w$ to maximise an objective $f(\mu_W, \sigma_W)$ that
is increasing in $\mu_W$ and decreasing in $\sigma_W$
\citep{Markowitz1952}. For concreteness, consider the mean-variance objective
with risk aversion parameter $\gamma$:
\begin{equation}\label{eq:mv_objective}
  \max_{\w} \quad \mu_P(\w) - \tw - \frac{\gamma}{2} \sigma_P^2(\w).
\end{equation}
The first-order condition is:
\begin{equation}
  \nabla_{\w} \mu_P = \gamma \, \V \w
  \qquad\Longrightarrow\qquad
  \muvec - \rf \one = \gamma \, \V \w^*.
\end{equation}
Solving:
\begin{equation}\label{eq:wstar_gbm}
  \w^* = \frac{1}{\gamma} \V^{-1} (\muvec - \rf \one).
\end{equation}

\begin{proposition}[Portfolio Invariance, GBM]\label{prop:portfolio_gbm}
The optimal portfolio weights $\w^*$ are independent of the wealth tax
rate~$\tw$:
\begin{equation}
  \w_A^* = \w_B^* = \frac{1}{\gamma} \V^{-1} (\muvec - \rf \one).
\end{equation}
This holds for any objective function $f(\mu_W, \sigma_W)$ where $\tw$ enters
only through an additive shift to $\mu_W$.
\end{proposition}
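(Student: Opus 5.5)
The argument rests on the decomposition \eqref{eq:muW_ct}--\eqref{eq:sigW_ct}. Investor~A's optimisation problem is $\max_{\w} f\bigl(\mu_P(\w)-\tw,\ \sigma_P(\w)\bigr)$, and Investor~B's is the same problem with $\tw=0$. The portfolio weights $\w$ enter only through $\mu_P(\w)$ and $\sigma_P(\w)$, neither of which depends on $\tw$. I will first treat the concrete mean--variance case, then the general claim, and finally note where an assumption is needed.

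\textbf{Mean--variance case.} In \eqref{eq:mv_objective} the term $-\tw$ is a constant with respect to $\w$. It therefore drops out of the gradient. The first-order condition is $\muvec-\rf\one=\gamma\V\w$, identical for both investors, and it gives \eqref{eq:wstar_gbm}. The objective is strictly concave because $\V$ is positive definite, which I assume (this is also implicit in writing $\V^{-1}$). So this stationary point is the unique global maximiser for every $\tw$. This yields $\w_A^*=\w_B^*$.

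\textbf{General objective.} For an arbitrary $f$, an additive shift in $\mu_W$ does not change the argmax unless the shift preserves the ranking of $(\mu,\sigma)$ pairs. The planned route is to read the statement's phrase ``$\tw$ enters only through an additive shift to $\mu_W$'' as meaning that $f(\mu-\tw,\sigma)=g(\tw)+h(\mu,\sigma)$, or more generally $f(\mu-\tw,\sigma)=\phi_{\tw}(h(\mu,\sigma))$ with $\phi_{\tw}$ strictly increasing. Under that reading, $\argmax_{\w} f(\mu_P(\w)-\tw,\sigma_P(\w))=\argmax_{\w} h(\mu_P(\w),\sigma_P(\w))$, which does not involve $\tw$.

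\textbf{Main obstacle.} The delicate step is the general-$f$ claim, since it is false for arbitrary increasing/decreasing $f$. For example, $f=\log(\mu)-\sigma$ changes its trade-off when $\mu$ is shifted. I would therefore state the invariance precisely under the quasi-linear condition above, which covers mean--variance and any CARA-type certainty equivalent.

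I would then add the stronger, objective-free observation that the tangency portfolio is invariant. Maximising $(\mu_W-\rf^A)/\sigma_W$ with $\rf^A=\rf-\tw$ gives the ratio $(\mu_P-\rf)/\sigma_P$. The tax cancels exactly, so $\w_T\propto\V^{-1}(\muvec-\rf\one)$ is the same for both investors. By two-fund separation, any mean--$\sigma$ investor facing the same frontier slope therefore chooses the same risky mix.
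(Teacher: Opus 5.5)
Your proposal is correct. For the explicit formula it follows the paper's route: in \eqref{eq:mv_objective} the constant $-\tw$ drops out of the first-order condition, which gives \eqref{eq:wstar_gbm} for both investors. Your strict-concavity remark (with $\gamma>0$ and $\V$ positive definite) adds the global-optimality step that the paper leaves implicit.

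On the general-$f$ clause you are more careful than the paper. The paper's only support for that clause is the Orthogonality-subsection observation that $\nabla_{\w}\mu_W$ and $\nabla_{\w}\sigma_W$ do not involve $\tw$, plus the assertion that indifference curves ``translate by the same amount''. For a general objective, however, the first-order condition is $f_\mu(\mu_P-\tw,\sigma_P)\,\nabla_{\w}\mu_P + f_\sigma(\mu_P-\tw,\sigma_P)\,\nabla_{\w}\sigma_P = 0$. The partial derivatives are evaluated at the shifted point. For the optimum to be $\tw$-free in general, the marginal rate of substitution $-f_\sigma/f_\mu$ must not depend on $\mu$. Equivalently, $f=\Phi(\mu-g(\sigma))$ with $\Phi$ increasing, which is exactly your ordinal quasi-linearity condition and the only charitable reading of the paper's clause. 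Your example $f=\log\mu-\sigma$ is a valid counterexample to the literal statement. It also refutes the ``invariant for all $f(\mu_W,\sigma_W)$'' entry in the continuous-time column of \Cref{tab:ct_dt}. So your restriction is a correction of the statement, not a gap in your proof.

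Your closing tangency argument is the continuous-time analogue of \Cref{prop:portfolio_gen}(a). It gives invariance of the risky \emph{composition} for every monotone $f$, with no further assumption. Invariance of the full weight vector, including the scale of the risky position, comes only from the first-order-condition route, for mean--variance or more generally your quasi-linear class.
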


This full weight invariance is specific to the continuous-time formulation. In
discrete time, the tangency portfolio is still invariant, but the total risky
allocation may depend on preferences; see \Cref{prop:portfolio_gen} in
\Cref{sec:gen_portfolio} and the comparison in \Cref{app:ct_dt}.

\subsection{Orthogonality}\label{sec:gbm_orthogonality}

The invariance of $\w^*$ follows from a deeper structural property. The
gradients with respect to $\w$ are:
\begin{equation}
  \nabla_{\w} \mu_W = \muvec - \rf \one,
  \qquad
  \nabla_{\w} \sigma_W = \frac{\V \w}{\sigma_P}.
\end{equation}
Neither depends on $\tw$. The wealth tax shifts the objective in the
$\mu$-direction by a constant $-\tw$, which is orthogonal to the control
variable $\w$ that operates through both $\mu_P$ and $\sigma_P$.

\begin{proposition}[Orthogonality, GBM]\label{prop:orthogonality_gbm}
In the $(\sigma, \mu)$ plane, the wealth tax is a vertical translation of the
entire opportunity set (efficient frontier, capital allocation line, and
risk-free rate) by~$-\tw$. The indifference curves of any risk-averse investor
translate by the same amount. The tangency point---and hence the optimal
portfolio---moves purely vertically. The tax is orthogonal to portfolio choice.
\end{proposition}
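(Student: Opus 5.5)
The plan is to work directly from the decomposition \eqref{eq:muW_ct}--\eqref{eq:sigW_ct} and to model the tax as the map $T_{\tw}:(\sigma,\mu)\mapsto(\sigma,\mu-\tw)$ on the $(\sigma,\mu)$ plane. The proof then has three steps: show that $T_{\tw}$ carries the untaxed opportunity set onto the taxed one, show that it carries indifference curves onto indifference curves, and conclude that the optimum moves only vertically.

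\textbf{Step 1: the opportunity set.} For each weight vector $\w$, Investor~B attains the point $(\sigma_P(\w),\mu_P(\w))$. Investor~A attains $(\sigma_P(\w),\mu_P(\w)-\tw)$. The set of attainable $\w$ is the same for both investors, since the tax changes no constraint. So the taxed opportunity set is exactly the image of the untaxed one under $T_{\tw}$.

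Next, efficiency is preserved. $T_{\tw}$ does not change $\sigma$ and lowers every $\mu$ by the same constant, so it preserves the ordering in $\mu$ at each fixed $\sigma$. Hence a frontier portfolio before tax, maximising $\mu$ for given $\sigma$, is still a frontier portfolio after tax. The risk-free asset ($\w=0$) maps to $(0,\rf-\tw)$, which we denote $\rf^A$.

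The capital allocation line through a risky portfolio $\w$ has slope
\[
\frac{\mu_P(\w)-\tw-(\rf-\tw)}{\sigma_P(\w)}=\frac{\mu_P(\w)-\rf}{\sigma_P(\w)} ,
\]
which does not depend on $\tw$. So every CAL, and in particular the one of maximal slope, is the vertical translate of its untaxed counterpart. The tangency portfolio $\w_T\propto\V^{-1}(\muvec-\rf\one)$ is therefore the same for both investors.

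\textbf{Step 2: indifference curves.} This is the step needing most care, because the statement claims this for ``any risk-averse investor''. I would make precise that preferences are given by $f(\mu_W,\sigma_W)$ in which $\tw$ enters only through $\mu_W$, which is exactly the hypothesis of \Cref{prop:portfolio_gbm}.

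With that reading, a level set $\{f=c\}$ describes the investor's ranking of attainable $(\sigma,\mu_P)$ pairs. Written in terms of $\mu_W$, the taxed investor's level set is $\{(\sigma,\mu_P-\tw): f=c\}=T_{\tw}\{f=c\}$. For the mean-variance case \eqref{eq:mv_objective} I would verify this directly: the curve $\mu=c+\tw'+\tfrac{\gamma}{2}\sigma^2$ shifts rigidly.

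The obstacle is a general $f$ whose curvature depends on the level of $\mu$. For such $f$ the ``same'' curve would not be a rigid translate. I would handle this by stating that the proposition concerns the continuous-time objective, in which only $\mu_W$ and $\sigma_W$ enter and the instantaneous problem is myopic. There the relevant fact is the ordering of portfolios, and that ordering is invariant because $\mu_W=\mu_P-\tw$ for every $\w$ simultaneously.

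\textbf{Step 3: conclusion.} For every $\w$, the taxed objective satisfies $f(\mu_P(\w)-\tw,\sigma_P(\w))$, and by construction it ranks portfolios exactly as the untaxed objective does. Hence the argmax is unchanged, $\w_A^*=\w_B^*$, consistent with \Cref{prop:portfolio_gbm}.

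Applying $T_{\tw}$ to the optimal point gives $(\sigma_P(\w^*),\mu_P(\w^*)-\tw)$: the same $\sigma$ and a lower $\mu$, so the tangency point moves purely vertically. Finally, I would formalise ``orthogonal'' by noting that $\partial_{\tw}(\mu_W,\sigma_W)=(-1,0)$ is constant in $\w$. The tax therefore has zero cross-derivative with the controls: $\partial^2\mu_W/\partial\tw\,\partial\w=0$ and $\partial\sigma_W/\partial\tw=0$. This is the gradient computation already displayed in the paper just before the statement.
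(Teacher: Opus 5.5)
\textbf{Verdict: genuine gap in Steps~2--3.} Your Step~1 is sound and more explicit than the paper. Every CAL slope $(\mu_P(\w)-\rf)/\sigma_P(\w)$ is free of $\tw$, so the tangency portfolio $T$ moves purely vertically for \emph{any} preferences.

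The gap is the claim that the ranking of portfolios is invariant ``because $\mu_W=\mu_P-\tw$ for every $\w$ simultaneously.'' Lowering every $\mu$ by the same constant preserves the ranking under $f$ only if the indifference map is itself invariant under vertical translation. That means $f(\mu,\sigma)=g(\mu-h(\sigma))$ with $g$ increasing, or equivalently that the marginal rate of substitution $-f_\sigma/f_\mu$ does not depend on $\mu$. You identified exactly this obstacle (level-dependent curvature) and then assumed it away. Neither myopia nor the fact that $\tw$ enters only through $\mu_W$ removes it.

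For a counterexample, take $f(\mu,\sigma)=\log\mu-\tfrac{\gamma}{2}\sigma^2$ on $\mu>0$. It is increasing in $\mu$ and decreasing in $\sigma$. Along the after-tax CML $\mu=\rf-\tw+s\sigma$, with $s>0$ the tangency Sharpe ratio, the first-order condition is $\gamma\sigma(\rf-\tw+s\sigma)=s$. Its root $\sigma^*$ strictly increases with $\tw$. The composition of the risky part is unchanged, but the risky/riskless split is not, so the optimal portfolio does not move purely vertically.

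Your closing cross-derivative remark does not rescue this. It shows that the maps $\w\mapsto(\mu_W,\sigma_W)$ have $\tw$-free gradients. But the first-order condition $f_\mu\nabla_{\w}\mu_P+f_\sigma\nabla_{\w}\sigma_P=\mathbf{0}$ also contains $f_\mu$ and $f_\sigma$ evaluated at the shifted point. Their ratio moves with $\tw$ unless the MRS is $\mu$-independent.

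\textbf{How to repair it.} State the restriction explicitly; it is what the paper's argument actually relies on. The paper works with the mean-variance objective \eqref{eq:mv_objective}, in which $-\tw$ is an additive constant of the objective itself (``the wealth tax shifts the objective \ldots\ by a constant $-\tw$''). Hence $\tw$ drops out of the first-order condition. More generally, for $f=g(\mu-h(\sigma))$ one has $f(\mu_P-\tw,\sigma_P)=g\bigl(\mu_P-h(\sigma_P)-\tw\bigr)$. This is a monotone transform of the untaxed objective, so the argmax is unchanged. In Merton's problem with CRRA utility, the homothetic value function makes the local objective a positive multiple of $\mu_W-\tfrac{\gamma}{2}\sigma_W^2$, which is of this form.

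With that hypothesis in place---and it is what ``indifference curves translate by the same amount'' has to mean---your Steps~2--3 go through. The rest of your argument, including the vertical move of the optimum and the orthogonality computation, then coincides with the paper's.
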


Formally, define the direction of the wealth tax effect as
$\emu = (0, -1)$ in $(\sigma, \mu)$ space. Portfolio choice operates along the
efficient frontier, whose tangent direction has a nonzero $\sigma$-component at
any interior point. The inner product
$\emu \cdot \nabla_{\w} \sigma_P = 0$, confirming orthogonality.

\Cref{fig:orthogonality_ct} illustrates this geometry.

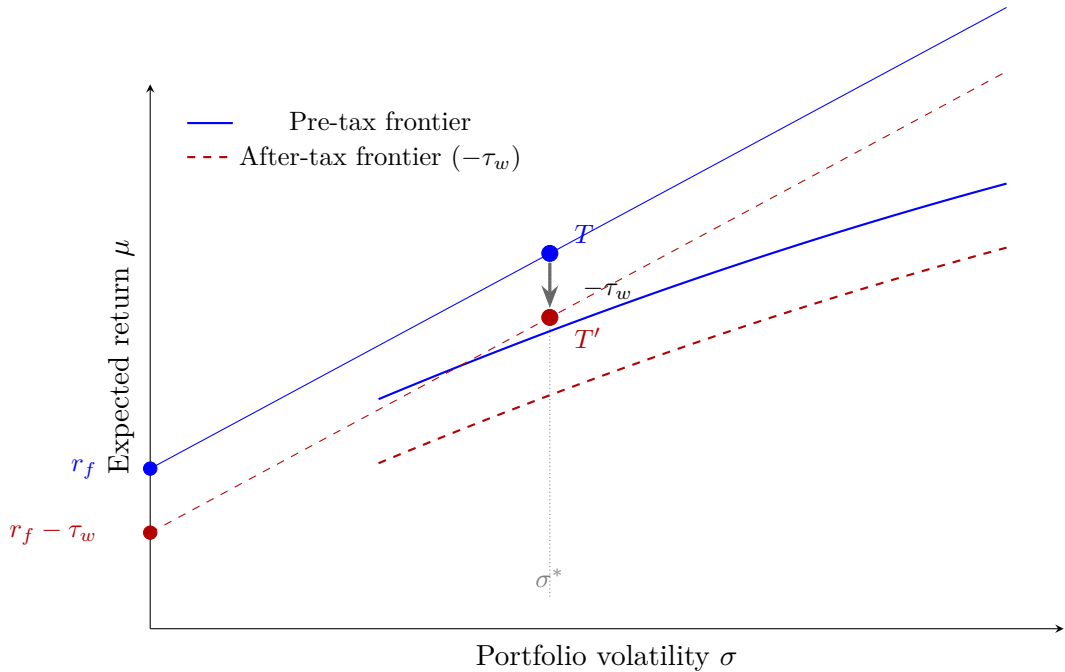
\begin{figure}[ht]
\centering
\begin{tikzpicture}[>=Stealth, scale=1]
\begin{axis}[
  width=0.85\textwidth, height=0.55\textwidth,
  xlabel={Portfolio volatility $\sigma$},
  ylabel={Expected return $\mu$},
  xmin=0, xmax=0.32, ymin=-0.01, ymax=0.16,
  xtick=\empty,
  ytick=\empty,
  axis lines=left,
  legend style={at={(0.03,0.97)}, anchor=north west, font=\small,
    draw=none, fill=white, fill opacity=0.8, text opacity=1},
  clip=false
]

% --- Pre-tax efficient frontier (hyperbola upper branch) ---
\addplot[blue, thick, domain=0.08:0.30, samples=80]
  ({x}, {0.03 + 0.42*x - 0.3*x^2});
\addlegendentry{Pre-tax frontier}

% --- After-tax efficient frontier (shifted down by tau_w = 0.02) ---
\addplot[red!70!black, thick, dashed, domain=0.08:0.30, samples=80]
  ({x}, {0.01 + 0.42*x - 0.3*x^2});
\addlegendentry{After-tax frontier ($-\tw$)}

% --- Pre-tax risk-free rate ---
\addplot[mark=*, mark size=2.5pt, blue, only marks] coordinates {(0, 0.04)};
\node[blue, anchor=east, font=\small] at (axis cs:-0.015,0.04) {$\rf$};

% --- After-tax risk-free rate ---
\addplot[mark=*, mark size=2.5pt, red!70!black, only marks] coordinates {(0, 0.02)};
\node[red!70!black, anchor=east, font=\small] at (axis cs:-0.015,0.02) {$\rf - \tw$};

% --- Pre-tax CAL ---
\addplot[blue, thin, domain=0:0.30, samples=2]
  ({x}, {0.04 + 0.48*x});

% --- After-tax CAL (same slope, lower intercept) ---
\addplot[red!70!black, thin, dashed, domain=0:0.30, samples=2]
  ({x}, {0.02 + 0.48*x});

% --- Tangency point pre-tax ---
\addplot[mark=*, mark size=3pt, blue, only marks] coordinates {(0.14, 0.1072)};
\node[blue, anchor=south west, font=\small] at (axis cs:0.145,0.1072) {$T$};

% --- Tangency point after-tax (same sigma, mu - tau_w) ---
\addplot[mark=*, mark size=3pt, red!70!black, only marks] coordinates {(0.14, 0.0872)};
\node[red!70!black, anchor=north west, font=\small] at (axis cs:0.145,0.0872) {$T'$};

% --- Vertical arrow showing the shift ---
\draw[->, thick, black!60, line width=1.2pt]
  (axis cs:0.14, 0.1042) -- (axis cs:0.14, 0.0902);
\node[anchor=west, font=\small] at (axis cs:0.148, 0.096) {$-\tw$};

% --- Dashed vertical line at sigma* ---
\draw[gray, thin, densely dotted]
  (axis cs:0.14, 0) -- (axis cs:0.14, 0.1072);
\node[gray, anchor=south, font=\small] at (axis cs:0.14, 0) {$\sigma^*$};

\end{axis}
\end{tikzpicture}
\caption{Orthogonality in continuous time. The wealth tax translates the
entire opportunity set vertically by~$-\tw$: the efficient frontier, the
risk-free rate, and the capital allocation line all shift down by the same
amount. The tangency portfolio~$T$ moves to~$T'$ at the same volatility
$\sigma^*$---the tax is orthogonal to portfolio choice.}
\label{fig:orthogonality_ct}
\end{figure}

In discrete time, the geometric interpretation changes from a vertical
translation to a homothetic contraction; see \Cref{prop:orthogonality_gen} in
\Cref{sec:gen_orthogonality} and the comparison in \Cref{app:ct_dt}.

% =========================================================================
\section{Asset Pricing Implications Under GBM}\label{sec:pricing_gbm}

\subsection{Per-Share Valuation}\label{sec:gbm_pershare}

Consider one share purchased at price $P_0$. After one year:
\begin{itemize}
  \item \textbf{Investor~B} holds 1~share worth $P_1$. Return: $P_1 / P_0$.
  \item \textbf{Investor~A} holds 1~share worth $P_1$, pays tax $\tw P_1$,
        retains $(1 - \tw)$ shares. Return per share held: still $P_1 / P_0$.
\end{itemize}
Investor~A receives this return on a shrinking number of shares, but the
return \emph{per share}---and hence the price they are willing to pay for any
individual share---is unchanged.

\begin{proposition}[Pricing Neutrality, GBM]\label{prop:pricing_gbm}
Under GBM with a proportional wealth tax on all assets, both investors are
willing to pay the same price per share for any asset. The wealth tax does not
create a pricing wedge.
\end{proposition}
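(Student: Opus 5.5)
The plan is to formalise ``willingness to pay'' as the price at which an investor is indifferent between buying a marginal share and keeping the funds in their existing optimal portfolio. I then show that this indifference condition is the same for Investors~A and~B.

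The first step uses the proportional-dilution representation of \Cref{sec:gbm_wealth}. Suppose an investor adds one share at price $P_0$, financed by reducing other holdings by $P_0$. Then at every horizon $n$ the taxed investor's incremental terminal position is $(1-\tw)^n P_n$. This holds because the tax is levied at the same rate on all holdings, including the asset sold to finance the purchase. The funding leg therefore also scales by $(1-\tw)^n$. So the net incremental payoff is
\begin{equation*}
  \Delta W_n^A = (1-\tw)^n \bigl( P_n - P_0 G_n^{\text{fund}} \bigr) = (1-\tw)^n \, \Delta W_n^B ,
\end{equation*}
where $G_n^{\text{fund}}$ is the gross return on the funding position, for example $e^{\rf n}$ if the share is financed by the risk-free asset. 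The key point is that the deterministic factor $(1-\tw)^n$ multiplies the entire wealth process, by \eqref{eq:wealth_gbm}. Hence the sign of the marginal gain is identical for both investors, in every state of the world.

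The second step turns this into a statement about reservation prices. For a mean-variance or any $f(\mu_W,\sigma_W)$ objective, I use \eqref{eq:muW_ct}--\eqref{eq:sigW_ct}: the marginal effect on the objective of tilting $\w$ towards asset $i$ is $(\mu_i-\rf) - \gamma(\V\w^*)_i$. This expression does not contain $\tw$. Combined with \Cref{prop:portfolio_gbm}, which gives $\w_A^*=\w_B^*$, the first-order condition that pins down the reservation price is the same for both investors. Equivalently, the required expected return per share is $\mu_i$ for both. Discounting the common per-share payoff $P_1$ (return $P_1/P_0$, as in the per-share calculation above) at a common required return gives the same $P_0$ for both.

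I expect the main obstacle to be ruling out the tempting but incorrect argument that the taxed investor should discount after-tax cash flows $(1-\tw)P_1$ at the pre-tax rate, which would lower the price by a factor $(1-\tw)$. I would address this explicitly. The claim $(1-\tw)P_1$ is not the payoff of one share to the taxed investor. The remaining $\tw$ fraction is the tax on wealth, and it is paid identically whichever asset the wealth is held in. The opportunity cost of funds is therefore also scaled by $(1-\tw)$, and the factor cancels in the indifference condition above. If space permits, I would add a brief check via utility: with CRRA utility $u(W)=W^{1-\gamma}/(1-\gamma)$, the factor $(1-\tw)^{n(1-\gamma)}$ factors out of the expected marginal utility. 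The Euler equation $P_0 = E[m P_n]$ then holds with the same pricing kernel for both investors.
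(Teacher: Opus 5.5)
Your proposal is correct, but it takes a more formal route than the paper. The paper's argument is the per-share observation: a share bought at $P_0$ returns $P_1/P_0$ to both investors, because the tax removes a fraction $\tw$ of the \emph{quantity} held rather than changing the return per share. Since prices reflect asset returns, not any investor's wealth-accumulation rate, valuations coincide; the later MM derivation with $1+k^A=(1-\tw)(1+k)$ gives the valuation counterpart.

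You instead define willingness to pay as a marginal indifference price and make the financing leg explicit, obtaining $\Delta W_n^A=(1-\tw)^n\Delta W_n^B$ state by state. This adds something real. The per-share observation alone would hold verbatim even if, say, the risk-free asset were exempt, where neutrality fails. Your funding-leg step shows exactly where universality does the work, which the paper leaves implicit in its ``proportional personal expense'' analogy and states only later (the $k^A$ derivation and the Fama discussion).

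The cost is that your Step~2 brings in preference structure the claim does not need. The expression $(\mu_i-\rf)-\gamma(\V\w^*)_i$ is specific to the quasi-linear mean--variance objective. For a general $f(\mu_W,\sigma_W)$, the first-order condition involves $f_\sigma/f_\mu$ evaluated at the shifted point, and this is $\tw$-free only if indifference curves are vertical translates. Your CRRA check works for a different reason, homotheticity. Since Step~1 already gives a deterministic statewise scaling, an NPV/no-arbitrage argument finishes the proof without preferences: $E[m\,\Delta W_n^B]=0$ implies $E[m\,(1-\tw)^n\Delta W_n^B]=0$. That matches the distribution-free scope the paper claims in Proposition~$4'$.

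One wording fix: $(1-\tw)P_1$ \emph{is} the taxed investor's after-tax payoff from one share. The fallacy lies in discounting it at the pre-tax rate rather than at $k^A$, which is what your following sentences actually argue.
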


This result---which parallels the DCF-based finding of
\citet{BjerksundSchjelderup2022}---is in fact completely distribution-free;
see \Cref{prop:pricing_gen} in \Cref{sec:gen_pricing}.

\subsection{Discussion}\label{sec:gbm_pricing_discussion}

This result may appear counterintuitive. The common argument is: ``Investor~A
faces an additional cost (the wealth tax), so they should demand a higher
return, which implies they would pay a lower price.'' The error in this
reasoning is the conflation of the return on the \emph{asset} with the return
on \emph{wealth}:
\begin{itemize}
  \item The \textbf{return on the asset} is $P_1 / P_0$, which is identical
        for both investors.
  \item The \textbf{return on wealth} is $P_1 / P_0$ for Investor~B and
        $(1 - \tw) P_1 / P_0$ for Investor~A.
\end{itemize}
The wealth tax reduces the return on wealth, not the return on the asset.
Since asset prices reflect asset returns---not any particular investor's wealth
accumulation rate---both investors value the asset identically.

The analogy is exact: the wealth tax is equivalent to any other proportional
personal expense (such as a fixed consumption rate proportional to wealth).
Such expenses reduce wealth accumulation but do not affect asset valuations.

% =========================================================================
\section{Summary of GBM Results}\label{sec:gbm_summary}

\begin{table}[ht]
\centering
\caption{Comparison of taxed and untaxed investors under GBM.}
\label{tab:gbm_summary}
\small
\begin{tabularx}{\textwidth}{@{}l X X@{}}
\toprule
Property & Investor~A (wealth tax $\tw$) & Investor~B (no tax) \\
\midrule
Return per share
  & $P_n / P_0$ & $P_n / P_0$ \\[4pt]
Shares held after $n$ years
  & $N_0 (1 - \tw)^n$ & $N_0$ \\[4pt]
Expected wealth $E[W_n]$
  & $W_0 (1-\tw)^n e^{\mu n}$ & $W_0 e^{\mu n}$ \\[4pt]
$\SD(W_n)$
  & $W_0 (1-\tw)^n e^{\mu n} \sqrt{e^{\sigma^2 n}-1}$
  & $W_0 e^{\mu n} \sqrt{e^{\sigma^2 n}-1}$ \\[4pt]
Coefficient of variation
  & $\sqrt{e^{\sigma^2 n}-1}$ & $\sqrt{e^{\sigma^2 n}-1}$ \\[4pt]
Optimal weights $\w^*$
  & $\frac{1}{\gamma}\V^{-1}(\muvec - \rf\one)$
  & $\frac{1}{\gamma}\V^{-1}(\muvec - \rf\one)$ \\[4pt]
Price per share & $P$ & $P$ \\[4pt]
Sharpe ratio
  & $\frac{\mu_T - \rf}{\sigma_T}$
  & $\frac{\mu_T - \rf}{\sigma_T}$ \\
\bottomrule
\end{tabularx}

\smallskip
\noindent\footnotesize
$\mu_T$ and $\sigma_T$ denote the tangent portfolio return and volatility.
\end{table}

% =========================================================================
\section{Generalisation Beyond Geometric Brownian Motion}\label{sec:generalisation}

\subsection{Motivation}\label{sec:gen_motivation}

The results in Sections~\ref{sec:single_gbm}--\ref{sec:pricing_gbm} were
derived under the assumption that asset prices follow a geometric Brownian
motion. While GBM provides a tractable and well-understood framework, it is a
strong assumption: it restricts returns to be lognormally distributed, rules
out fat tails, and imposes a continuous-time diffusion structure on prices.

We now show that none of these features are essential to the main results. The
four propositions depend not on the specific distributional form of returns,
but on the algebraic structure of how a proportional wealth tax interacts with
the return process. Specifically, the results rest on two pillars:
\begin{enumerate}
  \item \textbf{The wealth tax is a multiplicative scalar on wealth},
        deterministic and independent of the return realisation.
  \item \textbf{Asset returns have well-defined first and second moments}, so
        that the mean-variance framework applies.
\end{enumerate}
The first pillar is a consequence of proportionality and universality---the
same conditions already identified in the GBM analysis. The second replaces
the GBM assumption with a much weaker requirement that encompasses any
distribution in the \textbf{location-scale family} (normal, Student-$t$,
logistic, uniform, and more generally, the class of elliptical distributions).

We re-derive each proposition, clearly identifying the minimal assumptions
required. It will emerge that Propositions~$1'$ and~$4'$ require essentially no
distributional assumption at all, while Propositions~$2'$ and~$3'$ exploit the
location-scale structure to justify working in the $(\sigma, \mu)$ plane.

\subsection{Generalised Assumptions}\label{sec:gen_assumptions}

We retain the economic assumptions of \Cref{sec:setup} (proportional taxation,
universality, partial equilibrium, dividends consumed) and replace the GBM
return dynamics with the following:

\medskip
\noindent\textbf{Assumption A1 (Finite moments).}
The vector of asset returns $\R = (R_1, \ldots, R_K)^\top$ has a joint
distribution with well-defined mean vector $E[\R] = \muvec$ and
positive-definite covariance matrix $\Cov(\R) = \V$. No further assumption is
made about the form of the distribution.

\medskip
\noindent\textbf{Assumption A2 (Location-scale family).}
The return distribution belongs to the location-scale family: there exists a
standardised random vector $\Z$ with $E[\Z] = \mathbf{0}$ and
$\Cov(\Z) = \mathbf{I}$ such that
\begin{equation}\label{eq:loc_scale}
  \R \stackrel{d}{=} \muvec + \Sig \Z
\end{equation}
where $\Sig$ is a matrix with $\Sig\Sig^\top = \V$, and $\Z$ may follow any
distribution with well-defined second moments. The shape of the distribution
(e.g.\ tail behaviour, kurtosis) is encoded in $\Z$ and is invariant under
affine transformations.

\begin{remark}
Assumption~A2 is a strengthening of~A1 that is needed only for
Propositions~$2'$ and~$3'$. When A2 holds, the distribution is fully
characterised (up to shape) by its first two moments, which justifies the use
of mean-variance analysis \citep{Meyer1987,Chamberlain1983}. In the
multivariate setting, the natural extension is the class of \textbf{elliptical
distributions}, which nests the multivariate normal (and hence GBM) as a
special case \citep{OwenRabinovitch1983,HamadaValdez2008}. The precise role of
the distributional assumption is discussed in \Cref{app:distributional}.
\end{remark}

\noindent\textbf{Assumption A3 (Tax structure).}
Unchanged from \Cref{sec:setup_tax}: a proportional wealth tax at rate
$\tw \in (0,1)$ is levied on the market value of all assets (including the
risk-free asset) at each period end. The tax is paid by selling a fraction
$\tw$ of all positions.

\medskip
\noindent\textbf{Assumption A4 (Partial equilibrium).}
The return distribution $(\muvec, \V)$ is exogenous---it does not depend on
the tax rate.

\subsection{The Tax as a Multiplicative Scalar}\label{sec:gen_scalar}

The proportional-dilution interpretation from \Cref{sec:gbm_liquidation} is
entirely distribution-free. At each period end, a fraction
$\tw$ of all positions is transferred to the government as tax. After $n$ periods, the number of shares
held is:
\begin{equation}\label{eq:shares_gen}
  N_n = N_0(1 - \tw)^n
\end{equation}
and wealth is:
\begin{equation}\label{eq:wealth_gen}
  W_n^A = N_0(1 - \tw)^n P_n = (1 - \tw)^n \cdot W_0 \cdot \frac{P_n}{P_0}.
\end{equation}
Define the \textbf{cumulative gross return} $G^{(n)} \equiv P_n / P_0$. This
is a random variable whose distribution is determined by the asset return
process and is, by Assumption~A4, independent of the investor's tax status.
Then:
\begin{equation}\label{eq:mult_sep}
  W_n^A = (1 - \tw)^n \cdot W_0 \cdot G^{(n)} = (1 - \tw)^n \cdot W_n^B
\end{equation}
where $W_n^B = W_0 \cdot G^{(n)}$ is the wealth of the untaxed investor. The
wealth tax enters as a \textbf{deterministic multiplicative scalar}
$(1 - \tw)^n$ that is independent of the return realisation $G^{(n)}$.

This multiplicative separability is the structural engine behind all four
results. It requires only proportionality (the tax rate is constant) and
universality (all assets are taxed at the same rate). It does not require GBM,
continuity, or any specific distributional form.

Under GBM, the cumulative gross return takes the specific form
$G^{(n)} = \exp[(\mu - \tfrac{1}{2}\sigma^2)n + \sigma Z_n]$, but the
results below hold for any $G^{(n)}$ with well-defined moments.

\subsection{Generalised Proposition 1: CV Invariance}\label{sec:gen_cv}

\noindent\textbf{Requires:} A1 (finite moments), A3 (proportional tax). No
distributional assumption.

Since $W_n^A = (1 - \tw)^n \cdot W_n^B$ and $(1 - \tw)^n > 0$ is a
deterministic scalar:
\begin{align}
  E[W_n^A] &= (1 - \tw)^n \, E[W_n^B], \\
  \SD(W_n^A) &= (1 - \tw)^n \, \SD(W_n^B), \\
  \CV(W_n^A) &= \frac{\SD(W_n^A)}{E[W_n^A]}
              = \frac{\SD(W_n^B)}{E[W_n^B]} = \CV(W_n^B).
\end{align}

\begin{proposition}[Generalised CV Invariance]\label{prop:cv_gen}
Let asset returns have any joint distribution with well-defined first and
second moments (Assumption~A1). Under a proportional wealth tax on all assets
(Assumption~A3), the coefficient of variation of wealth at any horizon~$n$ is
invariant to the tax rate:
\begin{equation}
  \CV_A(W_n) = \CV_B(W_n) = \frac{\SD(G^{(n)})}{E[G^{(n)}]}
\end{equation}
where $G^{(n)} = P_n/P_0$ is the cumulative gross return. The result holds for
any return distribution---normal, lognormal, fat-tailed, skewed, discrete, or
continuous---provided the first two moments exist.
\end{proposition}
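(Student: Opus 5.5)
The plan is to derive everything from the multiplicative separability identity \eqref{eq:mult_sep}, $W_n^A = (1-\tw)^n W_n^B$ with $W_n^B = W_0 G^{(n)}$. I would first recall why this identity holds without any distributional input. Under Assumption~A3, each period end transfers the fraction $\tw$ of every position to the government, so the share count obeys the deterministic recursion $N_i = (1-\tw)N_{i-1}$. This recursion gives \eqref{eq:shares_gen}. By Assumption~A4, the price path $P_n$, and hence $G^{(n)}$, has the same law for both investors. Applying this position by position to the risky assets and the risk-free asset shows that total wealth also scales by the common factor $(1-\tw)^n$. Here universality is essential: because every asset is taxed at the same rate, the portfolio composition is unaffected.

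Next I would apply linearity of expectation and the scaling rule for standard deviations. Let $c = (1-\tw)^n W_0$, which is a deterministic constant. We need $E|G^{(n)}| < \infty$ and $E[(G^{(n)})^2] < \infty$, which is what ``provided the first two moments exist'' supplies. Then $E[W_n^A] = c\,E[G^{(n)}]$ and $\Var(W_n^A) = c^2 \Var(G^{(n)})$. Since $c>0$, this gives $\SD(W_n^A) = c\,\SD(G^{(n)})$. Setting $\tw = 0$ gives the same formulas for Investor~B with $c = W_0$. In each ratio $\SD/E$ the constant $c$ cancels, so both coefficients of variation equal $\SD(G^{(n)})/E[G^{(n)}]$, independent of $\tw$.

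The only real obstacle is a technical one: making sure the objects are well defined. First, the CV requires $E[G^{(n)}] \neq 0$. I would note that $G^{(n)} = P_n/P_0$ is a ratio of nonnegative prices, so $E[G^{(n)}] > 0$ unless $P_n = 0$ almost surely. That degenerate case can be excluded, or the statement read as an equality of both sides being undefined. Second, Assumption~A1 concerns one-period returns, whereas the proposition needs finite second moments of the $n$-period product $G^{(n)} = \prod_{i}(1+R_i)$. Finite one-period moments guarantee this when the periods are independent, but not in general. I would therefore state the proposition's hypothesis directly as finiteness of the first two moments of $G^{(n)}$, as its final clause does, rather than deriving it from A1. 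Once $G^{(n)} \in L^2$ is granted, the proof is the two-line scaling argument above; no other distributional property is used.
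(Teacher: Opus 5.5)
Your proposal is correct and matches the paper's proof: both rest on the separability $W_n^A=(1-\tw)^n W_n^B$ and the identity $\CV(cX)=\CV(X)$ for a deterministic $c>0$. The paper takes $c=(1-\tw)^n$ and $X=W_n^B$, while you take $c=(1-\tw)^n W_0$ and $X=G^{(n)}$, which gives the explicit value $\SD(G^{(n)})/E[G^{(n)}]$ directly. Your extra caveats are valid refinements that the paper leaves implicit: one needs $E[G^{(n)}]>0$, and finiteness of the second moment of the $n$-period product $G^{(n)}$ does not follow from one-period Assumption~A1 without independence.
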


\begin{proof}
The proof is a single line of algebra exploiting the linearity of the
expectation operator and the absolute homogeneity of the standard deviation:
\begin{equation}
  \CV(cX) = \frac{|c| \, \SD(X)}{c \, E[X]} = \frac{\SD(X)}{E[X]} = \CV(X)
\end{equation}
for any positive constant $c$ and random variable $X$ with $E[X] > 0$. Setting
$c = (1 - \tw)^n$ and $X = W_n^B$ completes the argument.
\end{proof}

\begin{remark}
Under GBM, \Cref{prop:cv_gbm} gave the specific formula
$\CV(W_n) = \sqrt{e^{\sigma^2 n} - 1}$, which depended on the volatility
parameter $\sigma$ and the horizon $n$ through the lognormal moment structure.
\Cref{prop:cv_gen} shows that the \emph{invariance to~$\tw$} does not depend
on this formula---it is a consequence of the multiplicative structure of the
tax alone.
\end{remark}

\subsection{Generalised Proposition 2: Portfolio Weight Invariance}%
\label{sec:gen_portfolio}

\noindent\textbf{Requires:} A1 (finite moments), A3 (proportional tax). Full
result requires A2 (location-scale) or CRRA preferences.

\subsubsection{Setup}

Consider $K$ risky assets and a risk-free asset with one-period gross return
$(1 + \rf)$. The investor allocates portfolio weights
$\w = (w_1, \ldots, w_K)^\top$ to risky assets, with the remainder
$(1 - \one^\top \w)$ in the risk-free asset. The portfolio rate of return is:
\begin{equation}\label{eq:RP_gen}
  R_P(\w) = \rf + \w^\top(\R - \rf \one)
\end{equation}
with mean and variance:
\begin{equation}\label{eq:muP_sigP_gen}
  \mu_P(\w) = \rf + \w^\top(\muvec - \rf \one),
  \qquad
  \sigma_P^2(\w) = \w^\top \V \w.
\end{equation}
This notation is consistent with the portfolio return and volatility defined in
\Cref{sec:gbm_portfolio_dynamics}, but now $\R$ may follow any distribution
satisfying~A1.

\subsubsection{After-Tax Returns}

After paying the wealth tax, end-of-period wealth is:
\begin{equation}\label{eq:W1A_gen}
  W_1^A = (1 - \tw) \cdot W_0(1 + R_P) = (1 - \tw) \cdot W_1^B.
\end{equation}
The after-tax rate of return (on beginning-of-period wealth) is:
\begin{equation}\label{eq:RW_gen}
  R_W(\w, \tw) = (1 - \tw)(1 + R_P(\w)) - 1
\end{equation}
with moments:
\begin{align}
  \mu_W(\w, \tw) &= (1 - \tw)(1 + \mu_P(\w)) - 1, \label{eq:muW_gen}\\
  \sigma_W(\w, \tw) &= (1 - \tw) \, \sigma_P(\w). \label{eq:sigW_gen}
\end{align}
The after-tax risk-free return (setting $\w = \mathbf{0}$) is:
\begin{equation}\label{eq:rfA}
  \rf^A = (1 - \tw)(1 + \rf) - 1 = \rf - \tw(1 + \rf).
\end{equation}
Note the contrast with the continuous-time GBM case
(\Cref{sec:gbm_portfolio_dynamics}), where the tax entered the drift
additively ($\mu_W = \mu_P - \tw$) and did not affect volatility
($\sigma_W = \sigma_P$). In the discrete-time formulation, the tax scales both
the mean and the standard deviation by $(1 - \tw)$. This distinction is
important for the portfolio choice result but does not affect the Sharpe ratio,
as we now show.

\subsubsection{Sharpe Ratio Preservation}

The after-tax excess return of portfolio $\w$ over the after-tax risk-free rate
is:
\begin{equation}
  \mu_W - \rf^A = (1 - \tw)(1 + \mu_P) - 1 - [\rf - \tw(1 + \rf)]
               = (1 - \tw)(\mu_P - \rf).
\end{equation}
The after-tax Sharpe ratio \citep{Sharpe1966} is therefore:
\begin{equation}\label{eq:SR_preservation}
  \SR_W(\w) = \frac{\mu_W - \rf^A}{\sigma_W}
           = \frac{(1 - \tw)(\mu_P - \rf)}{(1 - \tw)\sigma_P}
           = \frac{\mu_P - \rf}{\sigma_P}
           = \SR_P(\w).
\end{equation}
The $(1 - \tw)$ factors cancel exactly. The Sharpe ratio of every portfolio is
invariant to the wealth tax.

\subsubsection{Tangency Portfolio Invariance}

The tangency portfolio $\w_T$ is the portfolio that maximises the Sharpe ratio:
\begin{equation}
  \w_T = \argmax_{\w} \frac{\mu_P(\w) - \rf}{\sigma_P(\w)}
       = \argmax_{\w} \SR_P(\w).
\end{equation}
Since $\SR_W(\w) = \SR_P(\w)$ for all $\w$, the tangency portfolio is the same
in the after-tax space as in the pre-tax space. The standard solution is:
\begin{equation}\label{eq:wT}
  \w_T \propto \V^{-1}(\muvec - \rf \one)
\end{equation}
which does not depend on $\tw$.

\subsubsection{Full Weight Invariance Under CRRA}

For the allocation between the tangency portfolio and the risk-free asset, we
need to specify preferences. Under CRRA utility
$U(W) = W^{1-\gamma}/(1-\gamma)$:
\begin{equation}
  E[U(W_1^A)] = \frac{[(1 - \tw) W_0]^{1-\gamma}}{1 - \gamma}
    \, E\!\left[(1 + R_P(\w))^{1-\gamma}\right].
\end{equation}
The prefactor $[(1 - \tw) W_0]^{1-\gamma}$ is a positive constant that does
not affect the $\argmax$ over $\w$. For log utility ($\gamma = 1$):
\begin{equation}
  E[\log W_1^A] = \log[(1 - \tw) W_0] + E[\log(1 + R_P(\w))].
\end{equation}
Again, the first term is constant. Therefore:

\begin{proposition}[Generalised Portfolio Invariance]\label{prop:portfolio_gen}
Let asset returns have any joint distribution with well-defined first and
second moments (Assumption~A1), and let the wealth tax be proportional on all
assets (Assumption~A3).

\noindent (a) The Sharpe ratio of every portfolio is invariant to the wealth
tax rate:
\begin{equation}
  \SR_W(\w) = \SR_P(\w) \qquad \forall \, \w.
\end{equation}
The tangency portfolio---the optimal allocation among risky assets---is
identical for both investors.

\noindent (b) Under CRRA preferences (including log utility), the full optimal
weight vector $\w^*$ (including the allocation between risky and risk-free
assets) is independent of $\tw$:
\begin{equation}
  \w_A^* = \w_B^*.
\end{equation}
These results hold without any specific distributional assumption on returns.
\end{proposition}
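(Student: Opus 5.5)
The plan is to prove both parts from the one-period multiplicative separability \eqref{eq:W1A_gen}, $W_1^A = (1-\tw)\,W_1^B$, where $(1-\tw)$ is a deterministic positive scalar that does not depend on $\w$. Part~(a) uses only the first two moments. Part~(b) uses the homotheticity of CRRA utility. Neither step needs anything about the distribution of $\R$ beyond Assumption~A1.

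For part~(a), I will first take moments of \eqref{eq:RW_gen}. Linearity of expectation gives \eqref{eq:muW_gen}. Absolute homogeneity of the standard deviation, together with the fact that the additive constants $(1-\tw)-1$ are non-random, gives \eqref{eq:sigW_gen}. Setting $\w=\mathbf{0}$ then yields the after-tax risk-free rate \eqref{eq:rfA}. Subtracting, the excess mean is $(1-\tw)(\mu_P(\w)-\rf)$, so the ratio \eqref{eq:SR_preservation} cancels for every $\w$ with $\sigma_P(\w)>0$. Positive-definiteness of $\V$ under A1 guarantees $\sigma_P(\w)>0$ whenever $\w\neq\mathbf{0}$. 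Since $\SR_W$ and $\SR_P$ agree as functions of $\w$, they have the same maximiser set. The tangency portfolio is therefore common to both investors. The closed form \eqref{eq:wT} follows from the standard first-order condition of the Sharpe ratio, which involves only $\muvec$, $\V$ and $\rf$. I will note that this uses only A1, because the Sharpe ratio is defined from first and second moments regardless of the distribution.

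For part~(b), I will fix any admissible $\w$ for which the expected utility is finite and write
\[
E[U(W_1^A)] = \frac{[(1-\tw)W_0]^{1-\gamma}}{1-\gamma}\,E\!\left[(1+R_P(\w))^{1-\gamma}\right]
            = (1-\tw)^{1-\gamma}\,E[U(W_1^B)].
\]
So the taxed investor's objective is the untaxed one multiplied by the strictly positive constant $(1-\tw)^{1-\gamma}$, which does not depend on $\w$. A strictly increasing transformation preserves the $\argmax$, so $\w_A^*=\w_B^*$. For $\gamma=1$, the objective shifts by the additive constant $\log(1-\tw)$ instead, with the same conclusion. The sign of $1-\gamma$ causes no trouble: the factor multiplies $U(W_1^B)$ itself, which already carries the $1/(1-\gamma)$ sign, and the multiplier $(1-\tw)^{1-\gamma}$ is positive in all cases.

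The main obstacle is the precise scope of the claim ``without any specific distributional assumption'' in part~(b). Existence of the CRRA optimum requires $E[(1+R_P)^{1-\gamma}]$ to be finite and $1+R_P>0$ almost surely on the feasible set, and A1 alone does not guarantee this. I will handle it by showing that the two investors face identical feasible sets. The constraint $W_1^A>0$ is equivalent to $W_1^B>0$, and integrability conditions are unchanged by a positive constant factor. I will then state the conclusion as an equivalence of the two optimisation problems: whenever a maximiser exists for one investor, it is a maximiser for the other. I will also remark that under CRRA the total risky allocation generally depends on the full return distribution and not only on $(\muvec,\V)$. Neutrality with respect to $\tw$ nevertheless holds, because the tax never interacts with the random component of wealth.
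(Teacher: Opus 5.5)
Your proposal is correct and follows the paper's own argument. For (a) you use the $(1-\tw)$-scaled moments, so the Sharpe ratio cancels. For (b) you use the positive constant $(1-\tw)^{1-\gamma}$, or the additive $\log(1-\tw)$ for log utility, in CRRA expected utility. Your added care about $\sigma_P(\w)>0$, the sign of $1-\gamma$, and the identical feasibility and integrability conditions tightens points the paper passes over without changing the route.
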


\begin{remark}[Mean-variance preferences]
For a mean-variance investor maximising
$\mu_W - \frac{\gamma}{2}\sigma_W^2$ over after-tax returns, the first-order
condition yields
$\w^* = \frac{1}{\gamma(1-\tw)}\V^{-1}(\muvec - \rf\one)$. The tax affects
the total risky allocation through the factor $1/(1-\tw)$, but the
\emph{composition} of the risky portfolio (the tangency portfolio) is
unchanged. In the continuous-time limit, this scaling effect vanishes and the
full weight vector is invariant, recovering \Cref{prop:portfolio_gbm}. See
\Cref{app:ct_dt} for a systematic comparison.
\end{remark}

\subsection{Generalised Proposition 3: Orthogonality}\label{sec:gen_orthogonality}

\noindent\textbf{Requires:} A2 (location-scale family), to justify the
$(\sigma, \mu)$ representation.

\subsubsection{The Discrete-Time Geometry}

Under the location-scale assumption (A2), any investor's feasible set can be
represented in the mean-standard deviation plane. The efficient frontier, the
capital allocation line, and the indifference curves all live in this plane.

Consider the transformation induced by the wealth tax on after-tax returns.
Writing in terms of the \textbf{gross return} $G_P = 1 + R_P$ and the
after-tax gross return $G_W = (1 - \tw) G_P$:
\begin{equation}
  E[G_W] = (1 - \tw) \, E[G_P],
  \qquad
  \SD(G_W) = (1 - \tw) \, \SD(G_P).
\end{equation}
In the $(\sigma, G)$ plane, the wealth tax maps every point
$(\sigma_P, \bar{G}_P)$ to
$((1-\tw)\sigma_P, \, (1-\tw)\bar{G}_P)$. This is a \textbf{homothetic
contraction} centred at the origin, with factor $(1 - \tw) < 1$.

\subsubsection{Properties of the Contraction}

The homothetic contraction has several important geometric properties:
\begin{enumerate}
  \item \textbf{Every point moves radially inward toward the origin.} The
        direction from the origin to any point is preserved; only the distance
        changes.
  \item \textbf{The slope of every ray through the origin is preserved.} In
        particular, the slope of the capital allocation line (which is the
        Sharpe ratio plus a constant related to $\rf$) is invariant.
  \item \textbf{The efficient frontier contracts uniformly.} Every point on
        the frontier maps to a point on the after-tax frontier at the same
        angular position relative to the origin.
  \item \textbf{The tangency point is preserved.} Since both the frontier and
        the risk-free point contract along their respective rays, and the slope
        of the line connecting them is preserved, the tangency occurs at the
        same portfolio.
\end{enumerate}

\subsubsection{Comparison with the Continuous-Time Case}

In continuous time (\Cref{sec:gbm_orthogonality}), the wealth tax induced a
\textbf{pure vertical translation} of the entire $(\sigma, \mu)$ plane by
$-\tw$: the efficient frontier shifted down, the risk-free rate shifted down,
and the tangency point moved straight down. The tax direction
$\emu = (0, -1)$ was literally orthogonal to the frontier's tangent, which had
a nonzero $\sigma$-component.

In discrete time, the transformation is richer: it is a contraction (scaling of
both axes), not a translation (shift of one axis). Nevertheless, the economic
content is the same:

\begin{proposition}[Generalised Orthogonality]\label{prop:orthogonality_gen}
Under the location-scale assumption (A2), the wealth tax induces a homothetic
contraction of the opportunity set in the $(\sigma, \, 1+\mu)$ plane by the
factor $(1 - \tw)$. The contraction preserves:
\begin{itemize}
  \item the Sharpe ratio of every portfolio,
  \item the composition of the tangency portfolio,
  \item the slope of the capital allocation line.
\end{itemize}
The tax operates purely as a radial scaling of the opportunity set. Portfolio
optimisation selects the tangency point based on the angular position of the
frontier relative to the risk-free ray---a quantity that is invariant under
radial scaling. In this sense, the wealth tax remains orthogonal to portfolio
choice.
\end{proposition}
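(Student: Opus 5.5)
The plan is to reduce everything to the multiplicative separability \eqref{eq:mult_sep} in its one-period form \eqref{eq:W1A_gen}. For every $\w$ we have $G_W(\w) = (1-\tw)\,G_P(\w)$ with $(1-\tw)$ deterministic and independent of the return realisation. By Assumption~A2, $G_P(\w) = 1+\rf+\w^\top(\muvec-\rf\one)+\w^\top\Sig\Z$. Since $\w^\top\Sig\Z$ has the same law as $\sigma_P(\w)$ times a fixed standardised scalar variable, each portfolio's distribution is pinned down, up to the common shape of $\Z$, by the pair $(\sigma_P(\w),\,1+\mu_P(\w))$. This is the step that justifies working in the $(\sigma,\,1+\mu)$ plane. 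Multiplying by the constant $(1-\tw)$ keeps $G_W(\w)$ in the same location-scale family, with parameters $((1-\tw)\sigma_P,\,(1-\tw)(1+\mu_P))$. The feasible set therefore maps to its image under the linear map $D=(1-\tw)I_2$, a homothety centred at the origin. This establishes the first claim.

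I then verify the three preserved quantities in turn. The Sharpe ratio is exactly part~(a) of \Cref{prop:portfolio_gen}, via \eqref{eq:SR_preservation}. Geometrically it is the slope of the segment from the risk-free point $(0,1+\rf)$ to $(\sigma_P,1+\mu_P)$, and $D$ preserves slopes of all segments, not only rays through the origin, since it scales both coordinate differences by the same factor. The capital allocation line through a given risky portfolio maps under $D$ onto the line through the images of its two endpoints. Hence its slope, equal to the Sharpe ratio, is unchanged. For the tangency portfolio I will argue by ordering. $D$ is a bijection of the feasible set that preserves the Sharpe ratio of each $\w$, so the maximiser of $\SR_W$ is the maximiser of $\SR_P$. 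This gives $\w_T$ as in \eqref{eq:wT}, independent of $\tw$.

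The main obstacle is making precise the informal closing sentence about \enquote{angular position relative to the risk-free ray}. The risk-free point itself moves from $(0,1+\rf)$ to $(0,(1-\tw)(1+\rf))$, so the relevant angles are not measured from a fixed point. I would handle this by noting that $D$ maps the whole configuration (frontier, risk-free point, CAL) to a similar configuration. Tangency is an affine-invariant incidence relation: a line touching a convex set at one point is mapped to a line touching the image set at the image point. So the tangency point of the contracted picture is the image of the original tangency point and corresponds to the same $\w$. The conclusion that the tax is orthogonal to portfolio choice then follows. The objects that determine the choice among risky assets are the slopes and the tangency incidence, and all of them are invariant under $D$. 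The only thing $D$ changes is the scale, which is a radial direction.
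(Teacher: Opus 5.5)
Your proposal is correct and follows the paper's route. The identity $G_W=(1-\tw)G_P$ gives the homothety $(\sigma_P,1+\mu_P)\mapsto(1-\tw)(\sigma_P,1+\mu_P)$; slopes, and hence Sharpe ratios and the CAL slope, survive it; and tangency invariance follows from $\SR_W(\w)=\SR_P(\w)$ for all $\w$. Your observation that a homothety preserves the slope of \emph{every} line, not only rays through the origin, is the precise reason the CAL slope is preserved, since the CAL does not pass through the origin.

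One minor caveat concerns your claim that $\w^\top\Sig\Z$ is, in law, $\sigma_P(\w)$ times a fixed standardised scalar. That holds for spherically symmetric $\Z$ (the elliptical case the paper relies on in \Cref{app:distributional}), but not for an arbitrary standardised $\Z$. None of the three preserved properties depends on that step.
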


In the continuous-time limit ($\tw \, dt \to 0$), the contraction degenerates
into a vertical translation, recovering the exact orthogonality of
\Cref{prop:orthogonality_gbm}. \Cref{fig:contraction_dt} illustrates the
discrete-time geometry; compare with the continuous-time case in
\Cref{fig:orthogonality_ct}.

\begin{figure}[ht]
\centering
\begin{tikzpicture}[>=Stealth, scale=1]
\begin{axis}[
  width=0.85\textwidth, height=0.6\textwidth,
  xlabel={$\SD(G_P)$},
  ylabel={$E[G_P] = 1 + \mu_P$},
  xmin=0, xmax=0.32, ymin=0, ymax=1.25,
  xtick=\empty,
  ytick=\empty,
  axis lines=left,
  clip=false
]

% --- Pre-tax efficient frontier ---
\addplot[blue, thick, domain=0.06:0.28, samples=80]
  ({x}, {1.03 + 0.55*x - 0.4*x^2});

% --- After-tax efficient frontier (homothetic contraction, factor 0.85) ---
\addplot[red!70!black, thick, dashed, domain=0.06:0.28, samples=80]
  ({0.85*x}, {0.85*(1.03 + 0.55*x - 0.4*x^2)});

% --- Pre-tax CAL ---
\addplot[blue, thin, domain=0:0.28, samples=2]
  ({x}, {1.04 + 0.55*x});

% --- After-tax CAL (same slope, lower intercept) ---
\addplot[red!70!black, thin, dashed, domain=0:0.24, samples=2]
  ({x}, {0.85*1.04 + 0.55*x});

% --- Pre-tax risk-free gross return ---
\addplot[mark=*, mark size=2.5pt, blue, only marks] coordinates {(0, 1.04)};

% --- After-tax risk-free gross return ---
\addplot[mark=*, mark size=2.5pt, red!70!black, only marks]
  coordinates {(0, 0.884)};

% --- Risk-free labels (placed well left of axis) ---
\node[blue, anchor=east, font=\small] at (axis cs:-0.015, 1.04)
  {$1+\rf$};
\node[red!70!black, anchor=east, font=\small] at (axis cs:-0.015, 0.884)
  {$(1{-}\tw)(1{+}\rf)$};

% --- Tangency point pre-tax ---
\addplot[mark=*, mark size=3pt, blue, only marks] coordinates {(0.14, 1.1095)};

% --- Tangency point after-tax ---
\addplot[mark=*, mark size=3pt, red!70!black, only marks]
  coordinates {(0.119, 0.943)};

% --- Tangency labels (placed away from curves) ---
\node[blue, anchor=south, font=\small] at (axis cs:0.14, 1.12) {$T$};
\node[red!70!black, anchor=north, font=\small] at (axis cs:0.119, 0.93) {$T'$};

% --- Ray from origin through tangency ---
\draw[gray, thin, densely dotted] (axis cs:0,0) -- (axis cs:0.158, 1.25);

% --- Arrow showing radial contraction ---
\draw[->, thick, black!60, line width=1.2pt]
  (axis cs:0.136, 1.075) -- (axis cs:0.122, 0.965);
\node[anchor=west, font=\small] at (axis cs:0.145, 1.02) {$(1{-}\tw)$};

% --- Origin ---
\addplot[mark=*, mark size=1.5pt, black, only marks] coordinates {(0, 0)};

% --- Legend (placed underneath respective curves) ---
\node[anchor=north, font=\small, text=blue] at (axis cs:0.22, 1.06)
  {Pre-tax frontier};
\node[anchor=north, font=\small, text=red!70!black] at (axis cs:0.20, 0.88)
  {After-tax frontier};

\end{axis}
\end{tikzpicture}
\caption{Orthogonality in discrete time. In the $(\sigma, \, 1{+}\mu)$ plane,
the wealth tax induces a homothetic contraction of the opportunity set toward
the origin by the factor $(1-\tw)$. Every point moves radially inward: the
efficient frontier, the risk-free gross return, and the tangency portfolio~$T$
all contract along their respective rays from the origin. The slope of the
capital allocation line---and hence the Sharpe ratio---is preserved.}
\label{fig:contraction_dt}
\end{figure}
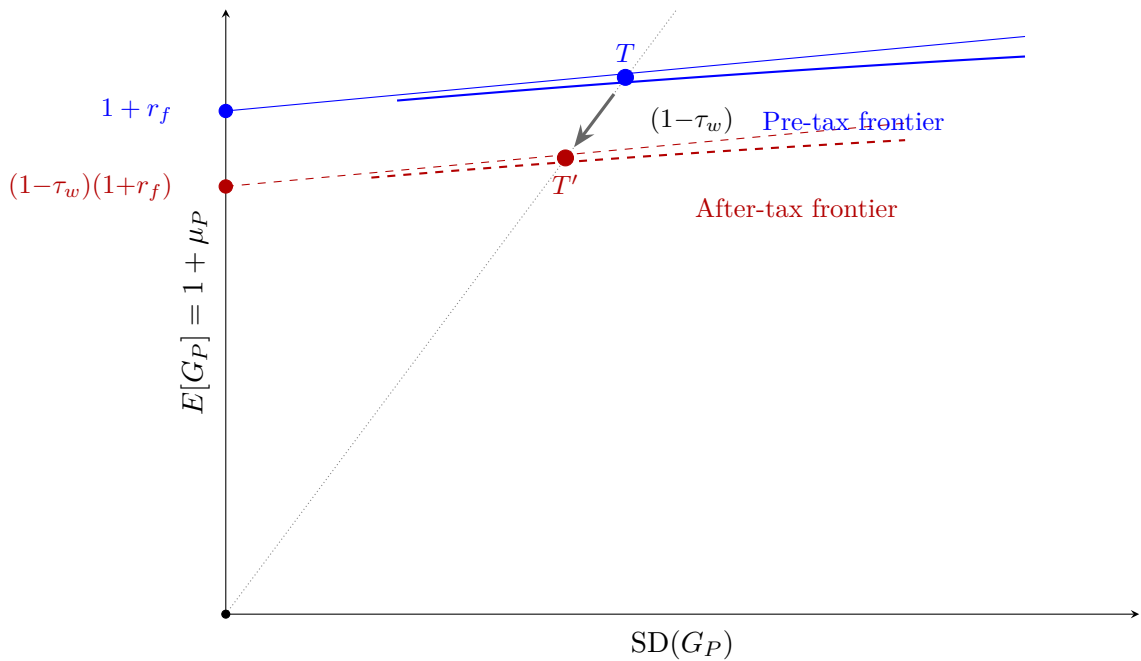

\subsection{Generalised Proposition 4: Pricing Neutrality}\label{sec:gen_pricing}

\noindent\textbf{Requires:} A3 (proportional tax), A4 (partial equilibrium).
No distributional assumption.

The pricing result is the most general of the four. The argument of
\Cref{sec:gbm_pershare} is entirely distribution-free: the return per share is
$P_1/P_0$ regardless of the investor's tax status, because the tax operates on
the \emph{number} of shares, not on the \emph{return per share}. No assumption
about the distribution of $P_1$ is needed.

For discounted cash flow valuation, this result is confirmed by
\citet{BjerksundSchjelderup2022}, who show that the domestic (taxed) investor's
NPV equals that of the foreign (untaxed) investor. The mechanism is discussed
in \Cref{sec:independence_capm}.

\begin{proposition}[Generalised Pricing Neutrality]\label{prop:pricing_gen}
Under a proportional wealth tax on all assets (Assumption~A3) and exogenous
asset return distributions (Assumption~A4), both investors are willing to pay
the same price per share for any asset. This result is completely
distribution-free: it does not require GBM, finite moments, or any specific
return process.
\end{proposition}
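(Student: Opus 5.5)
The plan is to make ``willingness to pay per share'' precise as a reservation price and then exploit the multiplicative separability \eqref{eq:mult_sep}. Fix an asset with exogenous price process $(P_t)$ (Assumption~A4) and payoff per share $X_1 = P_1 + D_1$ at the end of the first period. Each investor's reservation price is the $P_0$ at which buying one marginal share, financed by selling the risk-free asset or another asset already held, leaves the investor indifferent. No moments of $X_1$ enter this definition, so nothing below needs a distributional assumption.

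First, I would record how the tax acts on an arbitrary position. By Assumption~A3 the tax removes the fraction $\tw$ of \emph{every} holding at period end, including the risk-free asset and any asset used to finance the purchase. Any end-of-period position of Investor~A therefore equals $(1-\tw)$ times the corresponding position of Investor~B, state by state. This is exactly \eqref{eq:W1A_gen}, and it holds pathwise rather than only in distribution. Iterating over periods gives \eqref{eq:mult_sep} for the continuation value of one share held from date $0$: Investor~A holds $(1-\tw)^n$ shares at date~$n$, and each share is worth $P_n$ regardless of tax status. In addition, each dividend received per originally purchased share is scaled by the same deterministic factor as every other component of A's wealth.

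Second, I would compare the two investors' marginal trade-offs. Buying one share at $P_0$ and selling $P_0$ worth of a reference asset (say the risk-free asset) changes B's terminal wealth by $\Delta^B = X_1 - P_0(1+\rf)$, or more generally by $X_1 - P_0 G^{\mathrm{ref}}$ for a reference gross return $G^{\mathrm{ref}}$. It changes A's terminal wealth by $\Delta^A = (1-\tw)\Delta^B$ in every state, because both legs of the trade are taxed at the same rate. The key observation follows from this. For any monotone preference over terminal wealth $W$, the trade is weakly improving for A at wealth $(1-\tw)W$ exactly when the scaled trade is improving for B. By the price-taking logic of Section~\ref{sec:gbm_pershare}, the zero-NPV condition $P_0 = $ (value of $X_1$ per share) is the same equation for both investors, since the common factor $(1-\tw)$ cancels from both sides. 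Equivalently, any pricing kernel $m$ that prices B's marginal trades, $E[m\Delta^B]=0$, also gives $E[m\Delta^A] = (1-\tw)E[m\Delta^B] = 0$. The reservation prices therefore coincide.

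The main obstacle is the preference-dependence hidden in ``willing to pay.'' For general utility, A's indifference condition is evaluated at the lower wealth level $(1-\tw)W$, and the marginal utilities differ. The way around this is to argue, as the paper does, through arbitrage and the return per share rather than through utility levels. If A's reservation price differed from B's, then at the common market price $P_0$ (fixed by A4) one investor could form the zero-cost long--short position above. For A that position yields $(1-\tw)\Delta^B$, which has the same sign as $\Delta^B$ state by state. So a strictly positive-value trade for one investor is one for the other, and no price wedge can be sustained. I would state this sign-preservation argument as the core step. I would also remark that under CRRA the level effect vanishes anyway, as in Proposition~\ref{prop:portfolio_gen}(b), and that finite moments are never invoked.
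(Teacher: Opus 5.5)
Your route is more explicit than the paper's. The paper's case for Proposition~\ref{prop:pricing_gen} is the per-share observation of Section~\ref{sec:gbm_pershare}: the tax removes shares but leaves the per-share return $P_1/P_0$ untouched. Why the lower return on A's \emph{wealth} does not lower A's willingness to pay is argued only informally in Section~\ref{sec:gbm_pricing_discussion}. It becomes quantitative in the MM derivation, where the after-tax payoff $(1-\tw)(x+V)$ and the discount factor $1+k^A=(1-\tw)(1+k)$ carry the same factor and cancel. Your zero-cost trade, financed by a reference asset that is itself taxed, builds that cancellation in from the start, since $\Delta^A=(1-\tw)\Delta^B$ state by state. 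You never need to compute $k^A$, and end-of-period assessment is handled automatically. It also makes transparent that universality (the funding leg is taxed too) drives the result, which is the same point the paper makes against \citet{Fama2021}.

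Two steps need repair.

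\textbf{Sign preservation is not enough.} For a risky asset, $\Delta^B=X_1-P_0(1+\rf)$ takes both signs. State-by-state sign agreement therefore only shows that A and B face the same arbitrage trades, and hence the same no-arbitrage price bounds. Those bounds collapse to a single price only when the payoff is spanned. What actually delivers equality is the linear identity $E[m\Delta^A]=(1-\tw)E[m\Delta^B]$ for a common pricing kernel $m$. That is the no-arbitrage framework the paper works in (Table~\ref{tab:frameworks}), so make this identity the core step. Also drop the sentence about ``any monotone preference''. A ranks $(1-\tw)(W+\Delta^B)$ against $(1-\tw)W$, and this matches B's ranking only for homothetic (e.g.\ CRRA) utility.

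\textbf{Dividends must be in the tax base.} The identity $\Delta^A=(1-\tw)\Delta^B$ with $X_1=P_1+D_1$ requires the dividend to sit in the tax base at the assessment date. Under the paper's baseline timing, the dividend is paid and consumed and the tax is levied on $N_{i-1}P_i$. You then get $\Delta^A=(1-\tw)\Delta^B+\tw D_1$ instead. A's zero-NPV price exceeds B's by $\tfrac{\tw}{1-\tw}E[mD_1]$, which is the narrower-base case of \citet{Kruschwitz2023}. For the same reason, your remark that each dividend carries the same factor as the rest of A's wealth is off by one period. The date-$i$ dividend per original share is scaled by $(1-\tw)^{i-1}$, not $(1-\tw)^i$. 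You have two ways out:
\begin{itemize}
  \item restrict to the price component, as the paper's per-share argument implicitly does; or
  \item state explicitly the retained-dividend convention used in the MM derivation and Appendix~\ref{app:dividends}.
\end{itemize}
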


\begin{remark}[Tax base timing]\label{rem:tax_base_timing}
In the model, the wealth tax is levied at period end on the current
market value of holdings (Assumption~A3).  The neutrality result survives
because the tax operates as a fixed fractional sale of shares: the
investor sells a fraction~$\tw$ of all positions regardless of the
realised price, so the multiplicative structure $(1-\tw)^n$ is
deterministic and separates cleanly from the stochastic return.
\citet{Kruschwitz2023} show that using end-of-period market values can
create arbitrage in a cost-of-capital framework that sets
$k^\tau = k - \tw$.  Our after-tax discount rate
$k^A = (1-\tw)(1+k) - 1$ (equation~\eqref{eq:mm_kA}) differs from the
naive $k - \tw$ by a term~$-\tw k$, and is consistent with no-arbitrage
under end-of-period assessment.  See \Cref{sec:conditions_economic} for
a fuller discussion and the connection to Norwegian institutional
practice.
\end{remark}

The role of distributional assumptions---which results require
the location-scale family and which are distribution-free---is discussed
in \Cref{app:distributional}.

% =========================================================================
\section{Discussion}%
\label{sec:conditions}

\subsection{Essential Economic Conditions}\label{sec:conditions_economic}

The results in Sections~\ref{sec:single_gbm}--\ref{sec:generalisation} hold
under specific economic assumptions. Each assumption, if relaxed, opens a
channel through which the wealth tax \emph{does} affect the risk-reward profile
or portfolio choice:

\textbf{Proportionality.} The tax must be a fixed fraction of market value.
Exemption thresholds, progressive rates, or caps introduce nonlinear
transformations that change the shape of the wealth distribution and break CV
invariance.

\textbf{Universality.} The tax must apply to all assets at the same rate. If
some assets are exempt (e.g., the risk-free asset, retirement accounts, real
estate), the tax creates differential costs across assets, which distorts
relative Sharpe ratios and portfolio weights.

\textbf{Tax base specification.} The results assume the wealth tax base is the
market value of holdings.  Two aspects of the tax base matter: the
\emph{valuation basis} (market value vs.\ book value, analysed in
\Cref{sec:bv_taxation}) and the \emph{timing} (beginning vs.\ end of
period).

\citet{Kruschwitz2023} show that in a cost-of-capital framework, using
end-of-period market values with the naive discount rate
$k^\tau = k - \tw$ creates arbitrage opportunities.  Our model levies
the tax at period end on current market values (Assumption~A3), but
avoids this problem for two reasons.  First, the proportional-dilution
mechanism---transferring a fixed fraction~$\tw$ of shares regardless of the
realised price---generates the deterministic factor $(1-\tw)^n$ that
separates from the stochastic return without requiring a modified
discount rate.  Second, where a discount rate is needed (the MM
perspective, \Cref{sec:mm_perspective}), we derive
$k^A = (1-\tw)(1+k) - 1$, which differs from $k - \tw$ by the
correction term~$-\tw k$ and is arbitrage-free under end-of-period
assessment.

The alternative arbitrage-free discount rate derived by
\citet{Kruschwitz2023} for end-of-period assessment,
$k^K = k - \tw(1+k)/(1+\rf)$, rests on a narrower tax base.  In their
pricing equation, $(1-\tw)$ multiplies only the continuation market
value~$V_{t+1}^\tau$, not the cash flow~$\text{CF}_{t+1}$: the wealth
tax is levied on the ex-dividend market value of the security, while the
dividend escapes the tax base entirely.  Their formula therefore requires
that no part of the dividend remains in the investor's taxable estate at
the assessment date---an implicit assumption that is difficult to reconcile
with a comprehensive wealth tax on net worth.  If dividends are instead
retained (as cash, deposits, or reinvested capital) and therefore remain
within the tax base, then $(1-\tw)$ applies to the entire gross payoff and
our multiplicative formula obtains (see \Cref{app:dividends} for a detailed
analysis of alternative payment mechanisms).

\begin{remark}[Consistency of the Kruschwitz et al.\ setup]
The arbitrage identified by \citet{Kruschwitz2023} is in fact present in
their model even at $\tw = 0$, before any tax is introduced. Their
two-state binomial setup specifies three securities (one bond and two risky
assets) with independently chosen costs of capital $k_1$ and~$k_2$.
In a two-state world, however, no-arbitrage constrains the risk premia:
\begin{equation}\label{eq:kruschwitz_noarb}
  \frac{k_1 - r_f}{1 + k_1}
  = \frac{u_1}{u_2} \cdot \frac{k_2 - r_f}{1 + k_2},
\end{equation}
where $u_1/u_2$ is the ratio of up-state payoffs. For the numerical
parameters used by Kruschwitz et al., this condition fails. Setting their
arbitrage-profit expression (their equation~9) to $\tw = 0$ yields a
non-zero profit, confirming that the overcomplete market is already
mispriced. The arbitrage they attribute to the naive discount rate
$k^\tau = k - \tw$ is therefore an artefact of a pre-existing model
inconsistency, not a consequence of wealth taxation.  We thank Petter
Bjerksund for drawing our attention to this observation.
\end{remark}

\citet{BjerksundSchjelderup2022} assume the tax base is the
beginning-of-period market value, for which $k - \tw$ is the correct
arbitrage-free discount rate \citep{Kruschwitz2023}.  In the Norwegian
institutional setting, both timing conventions coexist.  Listed shares
held \emph{directly} are assessed at market value on 1~January of the
assessment year---effectively the end-of-period price
(\emph{skatteloven}~\S~4-12(1), \S~4-1).  However, most substantial
Norwegian investors hold listed equities through unlisted holding
companies under the participation exemption
(\emph{fritaksmetoden}).  The holding company's shares are assessed at
tax book value per 1~January of the \emph{income} year---the beginning
of the period---so the effective tax base for these investors is a
lagged, predetermined quantity, consistent with the
\citeauthor{BjerksundSchjelderup2022} assumption.  For unlisted
operating companies, the tax base is the company's net asset value at
1~January of the income year, which is the book-value case analysed in
\Cref{sec:bv_taxation}.  The distinction between end-of-period and
beginning-of-period assessment is thus primarily relevant for
\emph{direct} holdings of listed shares.

Our framework nests both cases.  End-of-period assessment is the general
case analysed throughout the paper, with after-tax discount rate
$k^A = (1-\tw)(1+k) - 1 = k - \tw - \tw k$.  Beginning-of-period
assessment is the special case in which the tax base is predetermined
(known at the start of the period).  When the tax base is deterministic,
the cross term~$\tw k$ vanishes from the pricing equation, and $k^A$
reduces to $k - \tw$---the \citeauthor{BjerksundSchjelderup2022} rate.
In both cases, the tax reduces the numerator and denominator of the
valuation in the same proportion, and pricing neutrality ($V = V^0$)
obtains.

\textbf{Partial equilibrium.} The asset price process $(\muvec, \V)$ is taken
as given. In general equilibrium where all investors face the wealth tax, the
return process may in principle adjust. However, under CRRA preferences the
optimal portfolio weights are independent of~$\tw$
(\Cref{prop:portfolio_gen}b), so that market clearing produces the same
equilibrium returns and prices---the partial-equilibrium assumption is
self-fulfilling (\Cref{sec:capm_pe_ge}). Departure from partial equilibrium
therefore requires either non-homothetic preferences or a violation of the
other conditions listed here (universality, market-value tax base).

\textbf{No borrowing constraints.} If the wealth tax reduces the effective
return sufficiently that investors wish to lever up, binding leverage
constraints create differential effects across wealth levels.

\textbf{All wealth remains within the tax base.} The main text assumes
dividends are consumed and the tax is paid by selling shares, which yields
the clean proportional-dilution formula $N_n = N_0(1-\tw)^n$.  This is a
simplifying convention.  What matters for neutrality is that all wealth---
including reinvested dividends---remains subject to wealth tax at market
value.  If dividends are reinvested in taxable assets, the multiplicative
separability $W_n^A = (1-\tw)^n W_n^B$ is preserved, because the tax factor
$(1-\tw)$ applies to total wealth regardless of its composition
(\Cref{app:dividends}).  Neutrality would break if dividends could be
channelled into assets that escape the tax base---but that is a violation of
universality, not a consequence of reinvestment per se.

\textbf{Frictionless rebalancing.} The multi-asset portfolio results
(Propositions~\ref{prop:portfolio_gbm}--\ref{prop:portfolio_gen}) assume
that the investor can rebalance to the optimal weights~$\w^*$ at each period
end without transaction costs. If rebalancing incurred proportional costs
(bid-ask spreads, commissions, or market impact), these costs would reduce
wealth in a way that depends on portfolio composition, the degree of weight
drift, and the volatility of relative asset returns---introducing a channel
through which the wealth tax could interact with portfolio choice indirectly,
since a lower wealth base may alter the cost-benefit calculus of
rebalancing. In the single-asset case, no rebalancing is needed and this
condition is vacuous.

\textbf{Tax payment mechanism.} The main text assumes the tax is paid by
selling shares (proportional dilution). If instead the tax is paid from
dividend income, the results are preserved when the dividend yield is
deterministic, but the multiplicative separability can break down when the
dividend yield is stochastic. See \Cref{app:dividends} for a detailed
analysis.

\medskip
\noindent
\Cref{sec:beyond_neutrality} formalises three channels through which
relaxing these conditions produces non-neutral effects: book-value taxation
(\Cref{sec:bv_taxation}), liquidity frictions (\Cref{sec:liq_frictions}),
and dividend extraction (\Cref{sec:div_extraction}).

\subsection{Independence from Asset Pricing Models}\label{sec:independence_capm}

A common source of confusion in discussions of wealth taxation and asset
valuation is the role of asset pricing models---in particular, whether results
such as those presented here presuppose the Capital Asset Pricing Model (CAPM)
or some other equilibrium pricing framework. We address this explicitly,
because the distinction between portfolio theory and equilibrium pricing is
frequently conflated.

\textbf{Our results do not assume CAPM.} The propositions in this paper are
derived from \textbf{portfolio theory} \citep{Markowitz1952}---the investor's
demand-side optimisation problem. We take the return distribution
$(\muvec, \V)$ as exogenous (Assumption~A4) and solve for the optimal
portfolio weights. This is not an equilibrium statement: we never derive
$\muvec$ from a market-clearing condition or from a factor model. The expected
return vector could come from CAPM, from an APT model, from a multi-factor
model, or simply from historical estimation---the results are identical in all
cases. What matters is that $(\muvec, \V)$ exists and does not depend on the
tax rate; how it is determined is irrelevant to the analysis.

The distinction is important: CAPM is a statement about equilibrium asset
prices (it says the expected excess return on any asset is proportional to its
beta with the market portfolio). Our propositions are statements about an
individual investor's portfolio problem, conditional on whatever expected
returns the market offers. These are logically independent.

\textbf{Bjerksund and Schjelderup (2022) do not depend on CAPM.} Their pricing
neutrality result (which corresponds to our \Cref{prop:pricing_gen}) is
sometimes misread as requiring CAPM, perhaps because they refer to a ``cost of
capital'' $k$ and to ``relevant risk characteristics'' of the asset. In fact,
their derivation rests on a \textbf{no-arbitrage / efficient markets}
assumption: in an efficient market, the investor's opportunity return (the
discount rate) equals the expected return on an investment with the same risk
characteristics as the asset. This is a no-arbitrage condition, not a
CAPM-specific one.

CAPM appears in \citet{BjerksundSchjelderup2022} only in footnotes---footnote~6
(p.~876) mentions the CAPM formula
$r = \rf + \beta(r_M - \rf)$ as one example of how the expected
return~$r$ could be explained, and footnote~7 (p.~877) notes that ``within the
capital asset pricing model, for instance, the relevant risk characteristic is
measured by beta.'' Both statements are illustrative, not assumptions. The
mechanism behind pricing neutrality---that the tax simultaneously reduces the
expected cash flows and the discount rate, and these effects cancel
exactly---works for \textbf{any} pricing model that provides a discount rate
consistent with no-arbitrage.

\textbf{Summary of what is assumed.} \Cref{tab:frameworks} clarifies the
pricing framework used at each level.

\begin{table}[ht]
\centering
\caption{Pricing frameworks and their role in this paper.}
\label{tab:frameworks}
\small
\begin{tabular}{@{}llll@{}}
\toprule
Framework & Provides & Assumes & Used here? \\
\midrule
No-arbitrage & Discount rate $=$ opp.\ return
  & Efficient markets & Yes (Prop $4'$, B\&S) \\
Markowitz & Optimal portfolio weights
  & $(\muvec,\V)$ exist; risk aversion & Yes (Props $1'$--$3'$) \\
CAPM & $\muvec$ from equilibrium
  & Homogeneous expect. & \textbf{No} \\
Multi-factor / APT & $\muvec$ from factor loadings
  & Factor structure & \textbf{No} \\
\bottomrule
\end{tabular}
\end{table}

This distinction is a strength of the analysis: the results hold for any
pricing model consistent with no-arbitrage, including CAPM, but they do not
require it. The compatibility of our distributional assumptions with CAPM (via
the elliptical class) is discussed in \Cref{app:distributional} and in the entry on
\citet{HamadaValdez2008} in \Cref{app:literature}.

\subsection{The CAPM Special Case}\label{sec:capm_specialcase}

The preceding subsection established that our results do not require CAPM.
We now show what happens when CAPM \emph{is} imposed as the equilibrium
pricing model. This exercise provides three additional insights: it confirms
that the partial-equilibrium results extend to general equilibrium under
CRRA preferences, sharpening the caveat in \Cref{sec:conditions_economic};
it identifies a precise error in \citet{Fama2021}; and it provides the
equilibrium mechanism for the beta-dependent pricing in
\Cref{sec:bv_multiperiod}.

\subsubsection{After-tax beta invariance}

Under the proportional wealth tax, the after-tax return on asset~$j$ is
$R_j^A = (1 - \tw)(1 + R_j) - 1$ (\Cref{eq:RW_gen}) and the after-tax
market return is $R_M^A = (1 - \tw)(1 + R_M) - 1$. The after-tax beta is:
\begin{equation}\label{eq:beta_aftertax}
  \beta_j^A \;=\; \frac{\Cov(R_j^A,\, R_M^A)}{\Var(R_M^A)}
             \;=\; \frac{(1-\tw)^2\,\Cov(R_j,\, R_M)}
                        {(1-\tw)^2\,\Var(R_M)}
             \;=\; \beta_j.
\end{equation}
The $(1-\tw)^2$ cancels: after-tax beta equals pre-tax beta. This is the
CAPM counterpart of the tangency-portfolio invariance in
\Cref{prop:portfolio_gen}.

\subsubsection{The after-tax security market line}\label{sec:capm_sml}

Investor~B (untaxed) faces the standard SML:
\begin{equation}\label{eq:sml_pretax}
  E[R_j] \;=\; \rf + \beta_j \cdot
  \underbrace{\bigl(E[R_M] - \rf\bigr)}_{\text{MRP}}.
\end{equation}
Investor~A (taxed) faces the after-tax SML:
\begin{equation}\label{eq:sml_aftertax}
  E[R_j^A] \;=\; \rf^A + \beta_j \cdot
  \underbrace{\bigl(E[R_M^A] - \rf^A\bigr)}_{\text{MRP}^A}
\end{equation}
where, from \eqref{eq:rfA} and \eqref{eq:muW_gen},
\begin{equation}\label{eq:mrp_aftertax}
  \text{MRP}^A \;=\; (1-\tw)\bigl(E[R_M] - \rf\bigr)
               \;=\; (1-\tw)\cdot\text{MRP}.
\end{equation}
The after-tax SML has a lower intercept ($\rf^A < \rf$) and a flatter
slope ($(1-\tw)\cdot\text{MRP}$), but the same beta. In \textbf{gross
return} space the relationship simplifies to a uniform scaling:
\begin{equation}\label{eq:sml_gross}
  1 + E[R_j^A]
    \;=\; (1-\tw)\bigl(1 + E[R_j]\bigr)
    \qquad \forall\; j.
\end{equation}
The after-tax SML is a vertical contraction of the pre-tax SML by the
factor $(1-\tw)$, mirroring the homothetic contraction of the efficient
frontier in \Cref{fig:contraction_dt}. Both the capital market line (CML)
and the SML contract in the same way: the CML retains the same slope (the
Sharpe ratio is preserved by \eqref{eq:SR_preservation}) but shifts to a
lower intercept $\rf^A$; the SML retains the same betas but compresses both
intercept and slope by $(1-\tw)$. \Cref{fig:sml_capm} illustrates the
geometry.

\begin{figure}[ht]
\centering
\begin{tikzpicture}[>=Stealth, scale=1]
\begin{axis}[
  width=0.85\textwidth, height=0.6\textwidth,
  xlabel={$\beta$},
  ylabel={$E[1+R]$},
  ylabel style={at={(axis description cs:-0.12,0.5)}},
  xmin=0, xmax=2.15, ymin=0.75, ymax=1.35,
  xtick=\empty,
  ytick=\empty,
  axis lines=left,
  clip=false
]

% --- Pre-tax SML: 1+E[R] = 1.04 + 0.06*beta ---
\addplot[blue, thick, domain=0:2, samples=2]
  ({x}, {1.04 + 0.06*x});

% --- After-tax SML: (1-tw)*(1+E[R]) = 0.85*(1.04 + 0.06*x) ---
\addplot[red!70!black, thick, dashed, domain=0:2, samples=2]
  ({x}, {0.85*(1.04 + 0.06*x)});

% --- Fama's implied SML: 1+E[R] = (1.04 + 0.15) + 0.06*beta ---
\addplot[black!50, thick, densely dotted, domain=0:2, samples=2]
  ({x}, {1.19 + 0.06*x});

% --- Risk-free points (on the y-axis) ---
\addplot[mark=*, mark size=2.5pt, blue, only marks]
  coordinates {(0, 1.04)};
\addplot[mark=*, mark size=2.5pt, red!70!black, only marks]
  coordinates {(0, 0.884)};
\addplot[mark=*, mark size=2.5pt, black!50, only marks]
  coordinates {(0, 1.19)};

% --- Market portfolio points (beta=1) ---
\addplot[mark=*, mark size=3pt, blue, only marks]
  coordinates {(1, 1.10)};
\addplot[mark=*, mark size=3pt, red!70!black, only marks]
  coordinates {(1, 0.935)};

% --- Risk-free labels (to the left of y-axis) ---
\node[blue, anchor=east, font=\small] at (axis cs:-0.05, 1.04)
  {$1{+}\rf$};
\node[red!70!black, anchor=east, font=\small] at (axis cs:-0.05, 0.884)
  {$(1{-}\tw)(1{+}\rf)$};
\node[black!50, anchor=east, font=\small] at (axis cs:-0.05, 1.19)
  {$1{+}\rf{+}\tw$};

% --- Market portfolio labels ---
\node[blue, font=\small, anchor=south west] at (axis cs:1.03, 1.10)
  {$M$};
\node[red!70!black, font=\small, anchor=north west] at (axis cs:1.03, 0.935)
  {$M^A$};

% --- Line labels ---
\node[blue, font=\small, anchor=south] at (axis cs:1.75, 1.145)
  {Pre-tax SML};
\node[red!70!black, font=\small, anchor=north] at (axis cs:1.75, 0.975)
  {After-tax SML};
\node[black!50, font=\small, anchor=south] at (axis cs:1.6, 1.29)
  {\citeauthor{Fama2021}};

% --- Arrow: correct contraction (pre-tax -> after-tax) ---
\draw[->, thick, red!70!black, line width=1pt]
  (axis cs:1, 1.08) -- (axis cs:1, 0.955);
\node[red!70!black, font=\footnotesize, anchor=west] at (axis cs:1.05, 1.015)
  {$(1{-}\tw)$};

% --- Arrow: Fama's shift (pre-tax -> Fama) ---
\draw[->, thick, black!50, line width=1pt]
  (axis cs:1, 1.12) -- (axis cs:1, 1.23);
\node[black!50, font=\footnotesize, anchor=west] at (axis cs:1.05, 1.175)
  {${+}\tw$};

\end{axis}
\end{tikzpicture}
\caption{The security market line under a proportional wealth tax. The pre-tax
SML (solid blue) connects the risk-free gross return $1{+}\rf$ to the market
portfolio~$M$. The correct after-tax SML (dashed red) is a uniform vertical
contraction by $(1{-}\tw)$: both the intercept and the slope are scaled down,
preserving the Sharpe ratio and all betas. \citeauthor{Fama2021}'s
(\citeyear{Fama2021}) implicit SML (dotted grey) shifts upward by~$\tw$,
adding the wealth tax to the cost of capital while keeping the risk-free rate
unchanged; this overstates the required return because it does not account for
the wealth tax on the discount rate itself. Parameters:
$\rf = 0.04$, $\text{MRP} = 0.06$, $\tw = 0.15$ (exaggerated for
visibility).}
\label{fig:sml_capm}
\end{figure}
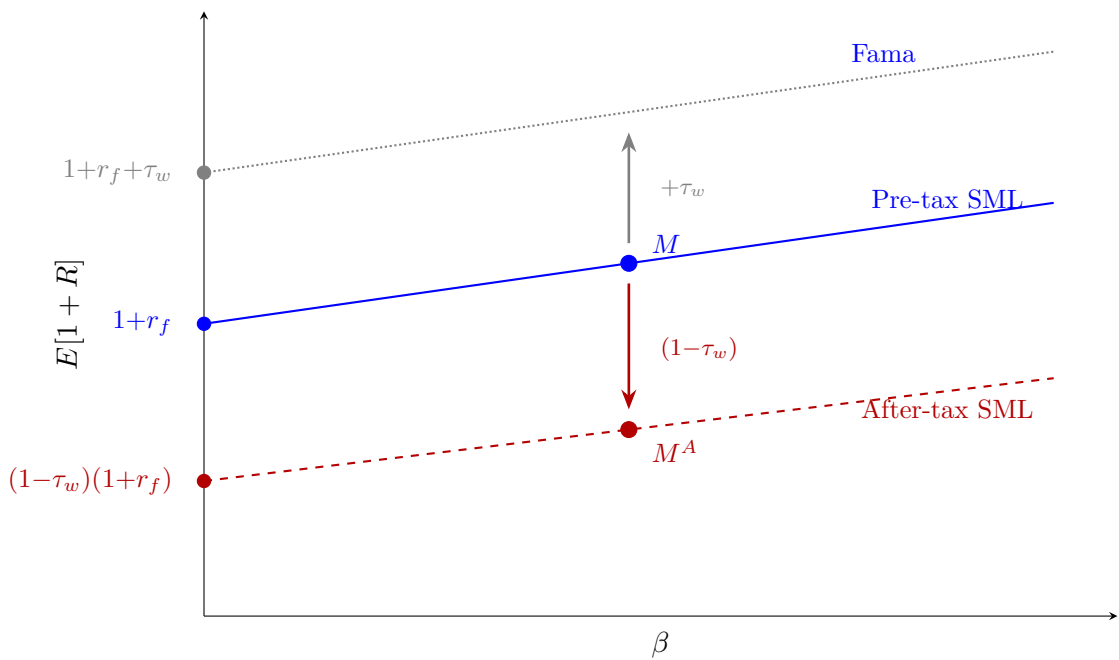

\subsubsection{From partial to general equilibrium}\label{sec:capm_pe_ge}

The results above are partial equilibrium: the return distribution
$(\muvec, \V)$ is taken as given. We now close the model by imposing CAPM
market clearing.

In the standard CAPM derivation, each investor maximises expected utility
subject to a budget constraint, and market clearing requires that aggregate
demand equals the market capitalisation vector. Under CRRA preferences,
\Cref{prop:portfolio_gen}(b) establishes that the optimal weight
vector~$\w^*$ is independent of~$\tw$. Since the portfolio demand functions
are unaffected by the wealth tax, market clearing produces the same
equilibrium expected returns and the same asset prices as in the untaxed
economy:
\begin{equation}\label{eq:capm_ge_invariance}
  \muvec^{\text{GE, tax}} = \muvec^{\text{GE, no tax}}, \qquad
  \V^{\text{GE, tax}} = \V^{\text{GE, no tax}}.
\end{equation}
The partial-equilibrium assumption~(A4)---that $(\muvec, \V)$ does not
depend on the tax rate---is therefore \emph{self-fulfilling} under CRRA.
This sharpens the caveat in \Cref{sec:conditions_economic}: the wealth tax
does not disturb general equilibrium when preferences are CRRA and the tax
is proportional and universal.

The result extends to a mixed economy with both taxed and untaxed investors.
Since both investor types hold the same portfolio weights
(\Cref{prop:portfolio_gen}), aggregate demand is invariant to the fraction
of wealth held by each type. The wealth tax is invisible in CAPM general
equilibrium: it transfers a fraction~$\tw$ of end-of-period wealth from
investors to the government without distorting any price or allocation.

\subsubsection{The error in Fama (2021)}\label{sec:fama_error}

\citet{Fama2021} argues that a wealth tax lowers asset prices. In a
Gordon-growth perpetuity, his mechanism is:
\begin{equation}\label{eq:fama_pricing}
  P_{\textup{Fama}} = \frac{D}{r + \tw} < \frac{D}{r} = P^0.
\end{equation}
The wealth tax adds~$\tw$ to the investor's required pre-tax return,
raising the effective discount rate from $r$ to $r + \tw$. In CAPM terms,
this amounts to shifting individual assets upward along a fixed SML---the
dotted grey line in \Cref{fig:sml_capm}.

The error is that the SML is not fixed. The wealth tax contracts the
\emph{entire} after-tax opportunity set: the risk-free rate falls from
$\rf$ to $\rf^A = \rf - \tw(1+\rf)$, and the market risk premium falls
from $\text{MRP}$ to $(1-\tw)\cdot\text{MRP}$. The investor cannot demand
the old after-tax return because every alternative---including the risk-free
asset---is also taxed. In the notation of \Cref{sec:mm_perspective}, the
correct after-tax discount rate is $k^A = (1-\tw)(1+k) - 1$, which is
\emph{lower} than $k$, not higher. Both the expected return and the
opportunity cost of capital are reduced by the factor $(1-\tw)$, so the
ratio---and hence the price---is preserved ($V = V^0$).

Fama's result is internally consistent only if the investor has access to an
untaxed outside option---a foreign opportunity set that pins the after-tax
required return at the pre-tax level. This is a partial-equilibrium
assumption in which the foreign investor's returns are exogenous and
the domestic investor is a price-taker against an untaxed benchmark. It
produces the same formula as \citet{JohnsenLensberg2014}:
$V = (k-\tw)/k \cdot V^0$. When the wealth tax applies to all domestic
assets including the risk-free rate, the outside option vanishes, both the
numerator and denominator of the valuation are scaled by $(1-\tw)$, and
$V = V^0$.

\begin{remark}[Scope of the Fama critique]
\citeauthor{Fama2021}'s conclusion that wealth taxes lower prices is not
wrong \emph{per se}---it follows from his assumption that asset returns must
clear against an untaxed benchmark. The error lies in presenting this as a
general equilibrium result when it is, in fact, a partial equilibrium
statement: it requires an exogenous opportunity set unaffected by the tax.
When the tax is universal, no such benchmark exists, and the price effect
vanishes.
\end{remark}

\begin{remark}[Residence-based wealth taxes and the ``untaxed outside option'']
\label{rem:residence_based}
A common argument in the policy debate is that, in a small open economy,
the marginal price-setting investor is a foreign (untaxed) investor, and
that the relevant opportunity set is therefore untaxed. Under this view, a
domestic wealth tax raises the required return for domestic investors, who
must compete against an untaxed foreign benchmark---exactly Fama's mechanism.

This argument overlooks a fundamental feature of how wealth taxes are
implemented in practice. All OECD countries that levy (or have levied) a net
wealth tax do so on a \emph{residence basis}: the tax applies to the
investor's \emph{worldwide} assets, not merely to domestic
holdings~\citep{OECD2018}. A Norwegian investor who buys US equities,
German bonds, or any other foreign asset still pays the wealth tax on those
holdings. No asset accessible to a domestic resident is exempt---there is no
``untaxed outside option'' in the investor's opportunity set.

This has three implications. First, the universality condition
(\Cref{sec:conditions_economic}) is satisfied automatically for any taxed
investor under a residence-based wealth tax on market values: the entire
opportunity set---domestic and foreign, risky and risk-free---is taxed at
the same rate. Second, the distinction between partial and general
equilibrium collapses. The taxed investor's portfolio weights are unchanged
(\Cref{prop:portfolio_gen}), the untaxed foreign investor's demand is
trivially unchanged, and aggregate demand is therefore invariant to the
tax---so equilibrium prices do not adjust. For a small open economy, where
domestic demand is negligible in global markets, this holds even without the
CRRA assumption. Third, even emigration---changing tax residency from taxed
to untaxed---does not affect asset prices, because the paper's pricing
neutrality (\Cref{prop:pricing_gen}) establishes that both investor types
are willing to pay the same price per share. The migration channel affects
government revenue but not market prices or portfolio allocations.

\citeauthor{Fama2021}'s price effect $P = D/(r + \tw)$ therefore requires
not merely that some investors are untaxed, but that the \emph{taxed}
investor can access an asset whose return is not subject to the wealth
tax---a condition that no existing or historical OECD wealth tax permits for
its residents. The channels through which a wealth tax \emph{does} affect
prices are the non-universality channels of \Cref{sec:beyond_neutrality}:
book-value taxation (which creates heterogeneous effective rates across
assets), liquidity frictions, and dividend extraction.
\end{remark}

\subsubsection{When universality fails: the Security Market
Fan}\label{sec:capm_smf}

The CAPM framework also illuminates the equilibrium consequences when
universality fails. Under book-value taxation (\Cref{sec:bv_taxation}), the
effective tax rate on asset~$j$ is $\tw\theta_j$ where
$\theta_j = B_j / V_j$ varies across assets. This creates exactly the
heterogeneous-tax setting analysed by \citet{EiksethLindset2009}, who show
that the SML fans out into a family of lines---one for each effective tax
rate---which they term the \emph{Security Market Fan}. In their framework,
the after-tax beta becomes
\begin{equation}\label{eq:el_beta}
  \beta_j^\tau
    \;=\; \frac{(1-\tau_j)\,\beta_j}
               {1 - \textstyle\sum_i w_{i,M}\,\tau_i\,\beta_i}
\end{equation}
where $\tau_j$ is asset~$j$'s effective tax rate and $w_{i,M}$ its weight
in the market portfolio. For a uniform $\tau_j = \tw$ across all assets,
the numerator and denominator scale by the same factor and
$\beta_j^\tau = \beta_j$, recovering \eqref{eq:beta_aftertax}. For
heterogeneous $\tau_j = \tw\theta_j$, assets with low book-to-market ratios
(high~$\theta_j$) see their after-tax betas compressed, while assets with
high book-to-market ratios (low~$\theta_j$) retain betas closer to the
pre-tax values. The SML fans out, and the equilibrium risk-return
relationship is no longer a single line.

This provides the equilibrium mechanism behind the beta-dependent GE
pricing formula in \Cref{prop:bv_multiperiod}(b): book-value taxation
creates heterogeneous effective tax rates across assets, which reshapes the
SML from a line into a fan and introduces the $(1-\beta)$ term in the
valuation formula. \citet{Sandvik2016} makes the same point for Norwegian
unlisted shares: the measured beta increases by $1/\theta$ but the risk
premium per unit of beta falls by~$\theta$, leaving the product for each
individual asset unchanged while distorting the cross-sectional ranking.

\subsection{The Modigliani-Miller Perspective: The Tax Claim as a Separate
Security}\label{sec:mm_perspective}

The pricing neutrality result (\Cref{prop:pricing_gen}) can be derived from a
complementary angle using the Modigliani-Miller framework
\citep{ModiglianiMiller1958}. This perspective treats the government's wealth
tax claim as a separate security backed by the firm's cash flows, analogous to
how debt is treated in the classical MM analysis of capital structure
\citep{Hamada1972,Rubinstein1973}.

\textbf{Setup.} Consider a firm generating a perpetual stream of random cash
flows~$x_t$ with $E[x_t] = \bar{x}$ and systematic risk $\beta_U$. In the
absence of taxes, the firm's value and cost of capital are
\[
V^0 = \frac{\bar{x}}{k}, \qquad k = \rf + \beta_U \mu,
\]
where $\mu$ is the market risk premium. Now introduce a proportional wealth
tax at rate~$\tw$ on the firm's market value. The government receives~$\tw V$
per period, where $V$ is the post-tax market value. The firm's pre-tax cash
flows are now split between two claimants: the equity holder and the
government.

\textbf{The tax claim has the same risk as the firm.} The tax
payment~$\tw V$ is proportional to the firm's market value, so the
government's claim has the same systematic risk as the underlying asset:
$\beta_{\text{tax}} = \beta_U$. Using the \citet{Hamada1972} leverage formula
with the tax claim playing the role of ``debt,''
\[
\beta_{\text{equity}}
= \frac{V^0}{V}\,\beta_U - \frac{T}{V}\,\beta_{\text{tax}}
= \frac{V^0}{V}\,\beta_U - \frac{T}{V}\,\beta_U
= \beta_U.
\]
The equity beta is unchanged. The government becomes a silent proportional
partner, sharing both upside and downside pro rata. Removing a proportional
slice does not change the risk profile of what remains.

\textbf{The discount rate must reflect the tax.} The wealth tax contracts the
entire after-tax opportunity set by the factor~$(1-\tw)$, as established in
\Cref{sec:gen_scalar}. Both the risk-free gross return and the risk premium
are scaled:
\begin{align}
1 + \rf^A &= (1-\tw)(1+\rf), \label{eq:mm_rf_after}\\
\mu^A &= (1-\tw)\,\mu. \label{eq:mm_mu_after}
\end{align}
The after-tax cost of capital for a $\beta_U$-risk asset is therefore
\begin{equation}\label{eq:mm_kA}
k^A = \rf^A + \beta_U\,\mu^A = (1-\tw)(1+k) - 1.
\end{equation}
Crucially, $k^A \neq k$: the wealth tax reduces the investor's discount rate
together with the expected cash flows.

\textbf{Pricing neutrality from the MM perspective.} The after-tax investor
holds a position worth~$V$ at the start of the period. At year-end, the
total position before tax is $(x + V)$ (dividend plus continuation value),
and the tax takes $\tw(x+V)$, leaving $(1-\tw)(x + V)$. In a stationary
equilibrium the investor reinvests~$V$, so the required-return condition is
\[
V(1+k^A) = (1-\tw)\bigl(E[x] + V\bigr).
\]
Rearranging:
\[
V\bigl(k^A + \tw\bigr) = (1-\tw)\,E[x] = (1-\tw)\,k\,V^0.
\]
Now, $k^A + \tw = (1-\tw)(1+k) - 1 + \tw = (1-\tw)\,k$, giving
\begin{equation}\label{eq:mm_pricing}
V = \frac{(1-\tw)\,k\,V^0}{(1-\tw)\,k} = V^0.
\end{equation}
The market price is unchanged. The mechanism is transparent: the
factor~$(1-\tw)$ appears in both the numerator (reduced after-tax cash flows)
and the denominator (reduced opportunity cost of capital), and cancels
exactly. This is the valuation counterpart of the orthogonality result
(\Cref{prop:orthogonality_gen}) and confirms
\Cref{prop:pricing_gen} from the MM side.

\textbf{Comparison with \citet{JohnsenLensberg2014}.} Johnsen and Lensberg
model the wealth tax as a separate claim in an MM framework applied to
Norwegian listed firms. For the case where the tax base is market value,
they derive
\begin{equation}\label{eq:jl_formula}
V_{JL} = V^0\!\left(1 - \frac{\tw}{k}\right).
\end{equation}
This formula subtracts the tax from the cash flows
($E[x] - \tw V^0 = (k-\tw)V^0$) but retains the pre-tax discount
rate~$k$ in the denominator. In their partial equilibrium, where untaxed
foreign investors set both the market price and the discount rate, this
approach is internally motivated. However, the underlying inconsistency is
that the cash flows are reduced by the tax while the discount rate is not:
\[
V_{JL} = \frac{\overbrace{(k-\tw)}^{\text{after-tax CF}}}
              {\underbrace{k\vphantom{()}}_{\text{pre-tax rate}}}
         \; V^0.
\]
Using the after-tax discount rate~$k^A$ from \eqref{eq:mm_kA} for the same
after-tax cash flows yields $V = V^0$, as shown in~\eqref{eq:mm_pricing}.
The discrepancy arises because the wealth tax contracts the \emph{entire}
after-tax opportunity set---not just the cash flow from the asset in
question. Both the expected return and the opportunity cost of capital are
reduced by the same factor~$(1-\tw)$, so the ratio---and hence the
price---is preserved. When the tax base is book value rather than market
value, two additional errors in JL's framework emerge; see
\Cref{sec:bv_jl}.

\begin{remark}[Tax base specification]
The neutrality of the MM decomposition depends on the tax being proportional
to market value, so that $\beta_{\emph{tax}} = \beta_U$. If the tax base
diverges from market value---for example, book value, assessed value, or
lagged historical cost---the tax claim acquires a different beta, and the
\citet{Hamada1972} formula produces an equity beta that differs
from~$\beta_U$. In particular, a tax on book value (with
$\beta_{\text{tax}} \approx 0$) acts like riskless debt, creating a
leverage-like increase in equity beta
\citep[cf.][Section~2.3]{JohnsenLensberg2014}. The tax base specification
discussed in \Cref{sec:conditions_economic} is therefore material for the
MM decomposition as well. \Cref{sec:bv_taxation} refines this observation:
$\beta_{\text{tax}} \approx 0$ is correct for the one-period tax obligation,
but $\beta_{\text{tax}} = \beta_U$ for the perpetual tax claim under a
stationary book-to-market ratio.
\end{remark}

\subsection{CRRA and the Wealth Effect}\label{sec:crra}

For CRRA utility with $\gamma \neq 1$, the value function depends on the
wealth level. Since the wealth tax reduces wealth, it can in principle affect
portfolio choice through the wealth effect on risk aversion, particularly when
the investor optimises consumption jointly with portfolio allocation
\citep[the Merton problem;][]{Merton1969,Merton1971}. However, the CRRA structure ensures that the optimal
\emph{portfolio weights} are invariant to the wealth level, and hence to the
tax. This is the content of \Cref{prop:portfolio_gen}(b). For non-CRRA
preferences (e.g.\ CARA or habit formation), the wealth effect can break
portfolio invariance.

Two supplementary discussions are deferred to appendices: a detailed
comparison of the continuous-time and discrete-time formulations
(\Cref{app:ct_dt}) and a survey of the related literature
(\Cref{app:literature}).

% =========================================================================
\section{Beyond Neutrality: Three Channels of Non-Neutral Taxation}%
\label{sec:beyond_neutrality}
% =========================================================================

\subsection{Overview}\label{sec:beyond_overview}

The neutrality results in
Sections~\ref{sec:single_gbm}--\ref{sec:generalisation} hold under a specific
set of economic conditions, identified in \Cref{sec:conditions_economic}. Each
condition, if relaxed, opens a channel through which the proportional wealth tax
\emph{does} affect asset prices, portfolio choice, or both. We now formalise
three such channels that are of particular practical importance: (i)~book-value
taxation, (ii)~liquidity frictions, and (iii)~dividend extraction and
investment distortion.

These three channels share a common structural mechanism: each violates the
\emph{universality} condition---the requirement that all assets bear the same
effective tax rate on market value. When this condition holds, the wealth tax
operates as a multiplicative scalar $(1-\tw)^n$ that cancels in every ratio,
preserving relative returns, Sharpe ratios, and portfolio weights
(\Cref{sec:gen_scalar}). When universality fails, the effective tax rate
varies across assets, the multiplicative structure breaks, and the tax
distorts relative prices and allocations.

\Cref{tab:channels} summarises the three channels, the condition each relaxes,
and the direction of the pricing effect relative to the no-tax value~$V^0$.

\begin{table}[ht]
\centering
\caption{Three channels of non-neutral wealth taxation.}
\label{tab:channels}
\small
\begin{tabular}{@{}llll@{}}
\toprule
Channel & Condition relaxed & Direction & Key mechanism \\
\midrule
Book value (\Cref{sec:bv_taxation})
  & Universality (tax base)
  & $V > V^0$ when $\theta < 1$
  & Effective rate $\tw\theta_j$ varies \\
Liquidity (\Cref{sec:liq_frictions})
  & Frictionless markets
  & $V < V^0$
  & Forced selling at cost $c_j$ \\
Dividends (\Cref{sec:div_extraction})
  & Frictionless markets
  & $V < V^0$
  & Foregone investment to fund tax \\
\bottomrule
\end{tabular}
\end{table}

The book-value channel is fully formalised using the no-arbitrage pricing
framework of \Cref{sec:mm_perspective}. It yields closed-form results for both
one-period and multi-period settings, and identifies specific errors in the
existing Norwegian literature on wealth tax effects. The liquidity and dividend
channels are developed within the same framework, though with more stylised
assumptions reflecting the current state of formal knowledge.

% -------------------------------------------------------------------------
\subsection{Book-Value Taxation}\label{sec:bv_taxation}
% -------------------------------------------------------------------------

\subsubsection{Motivation and institutional background}%
\label{sec:bv_motivation}

In many countries that impose a wealth tax, the tax base deviates from market
value for certain asset classes. In Norway, listed shares are taxed at market
value, but unlisted shares are taxed at their book value (the firm's equity
per share as reported in the tax return), and real estate is taxed at an
assessed value that is typically below market value.%
\footnote{The Norwegian wealth tax currently uses approximately 80\% of
assessed value for primary residences and 100\% of assessed value for
secondary housing, with assessed values lagging market values. Unlisted
shares are valued at their book equity per the company's latest tax balance.
Listed shares have been valued at market since 2006.}

Define the \emph{book-to-market ratio}
\begin{equation}\label{eq:bv_theta}
  \theta_j \equiv \frac{B_j}{V_j},
\end{equation}
where $B_j$ is the book (or assessed) value per share and $V_j$ is the market
value. When the wealth tax $\tw$ is levied on book value rather than market
value, the effective tax rate on market value is $\tw\theta_j$, which varies
across assets. Universality is violated: the tax creates differential costs
across assets in exactly the way flagged in \Cref{sec:conditions_economic}.

This section formalises the pricing effect of book-value taxation within the
no-arbitrage framework of \Cref{sec:mm_perspective}. As established in
\Cref{sec:independence_capm}, the results follow from no-arbitrage alone;
where a general equilibrium extension requires additional structure (CAPM),
this is stated explicitly.

The starting point is the Remark on tax base specification in
\Cref{sec:mm_perspective}, which observed that a tax on book value (with
$\beta_{\text{tax}} \approx 0$) acts like riskless debt in the
\citet{Hamada1972} framework. We now show that this observation is correct for
the one-period tax obligation but requires refinement for the perpetual tax
claim.

\subsubsection{Three observations}\label{sec:bv_observations}

The formalisation rests on three observations about the nature of book-value
tax claims.

\medskip
\noindent\textbf{Observation~1} (Lagged book value is deterministic).
Under the Norwegian tax code, the wealth tax liability for year~$t$ is based
on the book value at the \emph{end of year~$t-1$}. At the time the investor
makes portfolio decisions for period~$t$, the tax base~$B_{t-1}$ is already
known. The one-period tax obligation $\tw B_{t-1}$ is therefore a
deterministic cash flow from the investor's perspective.

\medskip
\noindent\textbf{Observation~2} (Empirical stationarity of~$\theta$).
The book-to-market ratio~$\theta$ is empirically highly persistent.
\citet{PontiffSchall1998} report a monthly first-order autocorrelation
of~0.97 for the aggregate book-to-market ratio, and find that book values
and market values are cointegrated through future cash flows. The residual
income model of \citet{Ohlson1995} provides the theoretical underpinning:
the clean surplus relation links book value to market value, so $\theta$
reverts to a long-run mean determined by the firm's return on equity
relative to its cost of capital.

\medskip
\noindent\textbf{Observation~3} (Gordon steady state as special case).
If the firm is in a Gordon steady state with all earnings paid out
(no growth), then $\theta$ is exactly constant:
$\theta = k / \text{ROE}$, where ROE is the return on equity
and $k$ is the cost of equity. In this case, the
perpetual tax claim $\tw B_t = \tw \theta V_t$ is proportional to market
value with a \emph{constant} ratio~$\theta$, and the beta of the tax claim
equals the beta of the underlying asset: $\beta_{\text{tax}} = \beta_U$.
This contrasts with the one-period case where
$\beta_{\text{tax}} \approx 0$.

\subsubsection{One-period result}\label{sec:bv_oneperiod}

We begin with the one-period case, where the result is model-free.

Consider an asset with random end-of-period payoff~$x$ (inclusive of
dividends and terminal value), current market value~$V$, and no-tax
value~$V^0$. The one-period risk-free rate is~$\rf$. The investor holds the
asset for one period and pays wealth tax $\tw B$ on the known book
value~$B$ at period end.%
\footnote{We write $B$ without a time subscript because it is known at the
start of the period; it is the lagged book value from the previous year-end.}

\begin{theorem}[One-Period Book-Value Pricing]\label{thm:bv_oneperiod}
Under no-arbitrage, the market value of the asset under book-value wealth
taxation is
\begin{equation}\label{eq:bv_oneperiod}
  V = \frac{1}{1-\tw}\left(V^0 - \frac{\tw B}{1+\rf}\right),
\end{equation}
provided $\theta < 1 + \rf$.
\end{theorem}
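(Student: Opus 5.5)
The plan is to price the taxed investor's net cash flow against the taxed investor's own opportunity set, exactly as in the Modigliani--Miller argument of \Cref{sec:mm_perspective}. The key point there is that every alternative investment (risk-free asset, market portfolio, any replicating portfolio) is taxed on market value. Let $m$ denote a no-arbitrage stochastic discount factor for untaxed payoffs, so that $V^0 = E[m\,x]$ and $E[m] = 1/(1+\rf)$. By Observation~1 the tax base $B$ is known at date~$0$, so the investor who pays $V$ for the asset receives the end-of-period net payoff $x - \tw B$. This payoff is the sum of a risky component $x$ and a deterministic liability $-\tw B$.

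The central step is to show that the taxed investor values any payoff $y$ at $E[m\,y]/(1-\tw)$. I will establish this by replication, using \eqref{eq:mult_sep} and \eqref{eq:W1A_gen}. A portfolio of market-value-taxed securities that costs $C$ and has pre-tax payoff $Y$ delivers only $(1-\tw)Y$ after tax. Hence, to produce an after-tax payoff $y$, the investor must buy a portfolio with pre-tax payoff $y/(1-\tw)$. That portfolio costs $E[m\,y]/(1-\tw)$. Equivalently, the after-tax discount factor is $m/(1-\tw)$, which is consistent with $1+\rf^A = (1-\tw)(1+\rf)$ from \eqref{eq:mm_rf_after}.

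No-arbitrage between the book-value-taxed asset and this replicating portfolio then pins down the price:
\[
V = \frac{E[m\,(x - \tw B)]}{1-\tw} = \frac{1}{1-\tw}\Bigl(V^0 - \frac{\tw B}{1+\rf}\Bigr),
\]
using the linearity of $E[m\,\cdot\,]$ and $E[m]=1/(1+\rf)$. This is \eqref{eq:bv_oneperiod}. If the price were higher, the investor would sell the asset and hold the replicating portfolio; if lower, the reverse. Either way gives a riskless profit, so the price must hold.

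The step I expect to be the main obstacle is the role of the proviso $\theta < 1+\rf$, together with the fact that $B$ is fixed while $V$ is endogenous. My plan is to substitute $B = \theta V$ and solve the resulting fixed point:
\[
V\Bigl(1 - \tw\bigl(1 - \tfrac{\theta}{1+\rf}\bigr)\Bigr) = V^0 .
\]
From this, $V/V^0 = \bigl[1-\tw(1-\theta/(1+\rf))\bigr]^{-1}$. I would then argue that under $\theta<1+\rf$ the bracketed factor lies in $(1-\tw,1)$. That makes $V$ well defined, positive, and above $V^0$, which matches the direction reported in \Cref{tab:channels}. I would also check that the replicating portfolio involves no short position in a way that violates the assumed tax payment mechanism (A3). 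If the intended role of the condition instead turns out to be something else, for example ensuring that the net payoff $x-\tw B$ has positive value, I would adjust this last step accordingly; the pricing identity itself does not depend on it.
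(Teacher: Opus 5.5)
Your proof is correct and is essentially the paper's argument in stochastic-discount-factor form. The paper prices the net payoff $x-\tw B$ under $\mathbb{Q}$ and discounts at $1+\rf^A=(1-\tw)(1+\rf)$, which is exactly your after-tax discount factor $m/(1-\tw)$; your replication step only makes explicit why that rate applies. Your fixed-point reading of the proviso is, if anything, sharper than the paper's, which invokes $\theta<1+\rf$ only to ensure $V>0$ (that needs merely $\tw B<(1+\rf)V^0$), whereas $\theta<1+\rf$ is precisely the condition for $V>V^0$.
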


\begin{proof}[Proof sketch]
The after-tax cash flow is $x - \tw B$, where $\tw B$ is deterministic by
Observation~1. Under the risk-neutral measure~$\mathbb{Q}$ (which is
unaffected by the introduction of a deterministic liability), the
no-arbitrage price satisfies
\[
  V = \frac{1}{1 + \rf^A}
      \left(E^{\mathbb{Q}}[x] - \tw B\right),
\]
where $\rf^A = \rf - \tw(1+\rf)$ is the after-tax risk-free rate from the
main text. Since $E^{\mathbb{Q}}[x] = V^0(1+\rf)$ by definition of the
no-tax value, substitution and simplification yield~\eqref{eq:bv_oneperiod}.
The condition $\theta < 1 + \rf$ ensures $V > 0$; it is satisfied whenever
the book value is less than the future value of the market price, which
holds empirically for virtually all assets.
\end{proof}

\begin{remark}[Generality]
Like \Cref{prop:pricing_gen}, this result is distribution-free and holds
under any asset pricing model consistent with no-arbitrage. The only
additional requirement is that the tax base~$B$ is deterministic---which, by
Observation~1, is an institutional fact, not a modelling assumption.
\end{remark}

\begin{remark}[Direction of the effect]
When $\theta < 1$ (book value below market value, the empirically dominant
case), the taxed asset is worth \emph{more} than under market-value
taxation: the investor pays tax on a smaller base, so the effective tax
burden is lighter. In the limit $\theta \to 0$ (negligible book value), the
tax vanishes entirely and $V \to V^0 / (1-\tw)$---the full benefit of a
tax-exempt asset. When $\theta = 1$ (book equals market), the formula
reduces to $V = V^0$, recovering market-value neutrality.
\end{remark}

\subsubsection{Multi-period extension}\label{sec:bv_multiperiod}

The one-period result extends to a perpetual setting by combining
Observations~2 and~3 with the Gordon steady-state model. We consider a firm that
pays out all earnings in perpetuity, with cost of
equity~$k$ and a stationary book-to-market ratio~$\theta$. We develop two
cases: partial equilibrium (PE), where the discount rate is unaffected by
the tax, and general equilibrium (GE), where the market risk premium adjusts
because the market portfolio itself is subject to book-value taxation.

\begin{proposition}[Multi-Period Book-Value Pricing]\label{prop:bv_multiperiod}
Under no-arbitrage and a stationary book-to-market ratio~$\theta$:
\begin{enumerate}
  \item[\textup{(a)}] \textbf{Partial equilibrium.} If the discount rate~$k$
  is unaffected by the tax,
  \begin{equation}\label{eq:bv_perpetuity_pe}
    V_{\textup{PE}} = \frac{k}{k - \tw(1+\rf-\theta)}\; V^0.
  \end{equation}

  \item[\textup{(b)}] \textbf{General equilibrium.} If the market risk
  premium adjusts so that the after-tax required return reflects the
  systematic risk of the tax claim, and the asset has market beta~$\beta$,
  \begin{equation}\label{eq:bv_perpetuity_ge}
    V_{\textup{GE}} = \frac{k}
      {k - \tw(1-\beta)(1+\rf-\theta)}\; V^0.
  \end{equation}
\end{enumerate}
\end{proposition}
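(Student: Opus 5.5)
The plan is to obtain both parts from a single stationary one-period recursion, built as a multi-period analogue of \Cref{thm:bv_oneperiod}. First I rewrite the one-period result in flow form. Multiplying \eqref{eq:bv_oneperiod} by $(1-\tw)(1+\rf)$ gives
\[
V(1+\rf) \;=\; V^0(1+\rf) \;+\; \tw(1+\rf)V \;-\; \tw B .
\]
This reads as follows. The pre-tax required payoff $V^0(1+\rf)$ is augmented by the tax the investor would pay on a market-value-taxed alternative, namely $\tw(1+\rf)V$; this term is the after-tax opportunity set of \Cref{sec:mm_perspective}, with $1+\rf^A=(1-\tw)(1+\rf)$. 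It is then reduced by the tax actually levied on the asset, $\tw B$. So the non-neutrality comes only from the wedge between the market-value base of the opportunity set and the book-value base of the asset.

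For part (a) I transplant this accounting to the Gordon steady state of Observation~3. The firm pays out all earnings, so $E[x]=kV^0$. The continuation value is $V$ in every period, and $B=\theta V$ with $\theta$ constant by Observations~2--3. In partial equilibrium the risky asset's pre-tax discount rate stays $k$, and the opportunity-set adjustment is the same $\tw(1+\rf)V$ wedge as in the one-period case. The stationary pricing condition is therefore
\[
V(1+k) \;=\; E[x] + V + \tw(1+\rf)V - \tw\theta V .
\]
Cancelling $V$ and substituting $E[x]=kV^0$ gives $V\bigl(k-\tw(1+\rf-\theta)\bigr)=kV^0$, which is \eqref{eq:bv_perpetuity_pe}. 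As checks, at $\theta=1+\rf$ we get $V=V^0$, which matches \Cref{thm:bv_oneperiod}, and $\theta\to0$ gives the exempt-asset limit. I will also state the positivity condition $k>\tw(1+\rf-\theta)$ explicitly.

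For part (b) the point is that, by Observation~3, the perpetual tax claim $\tw\theta V_t$ has beta $\beta_U=\beta$ rather than $0$. In general equilibrium every asset in the market, including the market portfolio, is taxed on book value. The market risk premium therefore adjusts in the way described in \Cref{sec:capm_smf}, which follows the security market fan of \citet{EiksethLindset2009}. I decompose the net tax wedge $\tw(1+\rf-\theta)V$ into a riskless component and a component with market beta $\beta$. I then argue that the systematic component is priced away by the adjusted premium, with the equilibrium after-tax SML absorbing the fraction $\beta$ of the wedge. Only the fraction $(1-\beta)$ survives in the recursion. Replacing $\tw(1+\rf-\theta)$ by $\tw(1-\beta)(1+\rf-\theta)$ in the stationary equation and solving exactly as in (a) yields \eqref{eq:bv_perpetuity_ge}. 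Sanity checks: $\beta=1$ (the market itself) gives $V=V^0$, and $\beta=0$ recovers the partial-equilibrium formula, as it should for an asset with no systematic risk.

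I expect the main obstacle to be justifying precisely that a fraction $\beta$ of the wedge is absorbed in part (b). That requires writing out the CAPM market-clearing condition with heterogeneous effective rates $\tw\theta_j$. It also requires showing that, to first order in $\tw$ and under stationarity of $\theta$, the equilibrium change in the risk premium per unit beta exactly offsets $\beta\tw(1+\rf-\theta)$. I would first try to derive this by applying \eqref{eq:el_beta} to the market portfolio normalisation. I would then check that the linearisation is exact in the Gordon perpetuity rather than merely approximate.
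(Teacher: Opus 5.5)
Your part~(a) is the paper's own argument. The stationary recursion is exactly $V=E[x]/k^A$ with $k^A=k-\tw(1+\rf)+\tw\theta$, and your check at $\theta=1+\rf$ correctly identifies the neutrality point of \Cref{thm:bv_oneperiod}. You should, however, say why the transplanted wedge keeps $(1+\rf)$ rather than $(1+k)$. It encodes that only the intercept of the after-tax SML is taxed, $\rf^A=\rf-\tw(1+\rf)$, while the premium per unit of beta stays at its pre-tax value $\lambda$. With the main-text rate $k^A=(1-\tw)(1+k)-1$ you would get $1+k-\theta$ instead. Making this explicit is also what closes part~(b).

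The genuine gap is the step you flag in~(b), and the route you propose would not close it. With a common $\theta$, \eqref{eq:el_beta} returns $\beta_j^\tau=\beta_j$, because numerator and denominator both scale by $1-\tw\theta$. Moreover, the lagged tax $\tw\theta V$ is a deterministic shift of $-\tw\theta$ to each one-period return. Betas are therefore untouched, and the beta of the perpetual tax claim (Observation~3) never enters the calculation. Market clearing with an exogenous $\rf$ does not supply the factor either. In the one-period CARA benchmark, clearing reproduces \Cref{thm:bv_oneperiod} for every asset independently of beta, so even the market is priced above $V^0$ when $\theta<1+\rf$.

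What~(b) actually rests on is a normalisation. Write the after-tax SML as $E[R_j]-\tw\theta=\rf^A+\beta_j\lambda^A$. Part~(a) is the case $\lambda^A=\lambda$. Part~(b) is the requirement that the $\beta=1$ market itself be priced at $V^0$, i.e.\ $E[R_M]=k_M=\rf+\lambda$. This forces $\lambda^A=\lambda+\tw(1+\rf-\theta)$, and hence $E[R_j]=k_j-\tw(1-\beta_j)(1+\rf-\theta)$ exactly, with no linearisation. Dividing $E[x]=kV^0$ by this required return gives \eqref{eq:bv_perpetuity_ge}. This is the content of the paper's sketch, where the risk-free rate adjusts fully and the premium adjusts only for the $(1-\beta)$ part.

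The same derivation shows that $\theta$ must be common to the asset and the market. With heterogeneous ratios the required return is $k_j-\tw(1+\rf-\theta_j)+\beta_j\tw(1+\rf-\theta_M)$, so the stated formula holds only when $\theta_j=\theta_M$.
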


\begin{proof}[Derivation sketch]
The required return under book-value taxation satisfies
$k^A = k - \tw(1+\rf) + \tw\theta$, where the first two terms are the
standard after-tax discount rate from the main text and the last term
reflects the \emph{lower} effective tax rate due to the book-value base.
The PE pricing follows from $V = E[x]/k^A$ with $E[x] = kV^0$.

For the GE case, the market portfolio is itself subject to book-value
taxation. Decomposing the required return into risk-free and risk-premium
components, and noting that the risk-free rate adjusts fully (the risk-free
asset has $\theta = 1$ if taxed at market value) while the risk premium
adjusts only for the $(1-\beta)$ component that reflects the book-value
discount, yields~\eqref{eq:bv_perpetuity_ge}. This decomposition uses the
CAPM structure to separate the discount rate into its risk-free and
systematic risk components---the only point at which an equilibrium pricing
model enters.
\end{proof}

\begin{remark}[PE vs.\ GE]
The PE formula~\eqref{eq:bv_perpetuity_pe} is beta-independent: it uses only
the cost of equity~$k$ and the book-to-market ratio~$\theta$, and holds
under any pricing model. The GE formula~\eqref{eq:bv_perpetuity_ge}
introduces beta-dependence because the market risk premium itself adjusts.
For $\beta = 1$ (the market portfolio), the GE formula gives $V = V^0$: the
book-value benefit is fully absorbed into a lower equilibrium required return,
restoring neutrality for the market as a whole. For $\beta < 1$, the asset
retains a net benefit ($V > V^0$); for $\beta > 1$, the required return
falls by more than the tax benefit, and $V < V^0$. The underlying
equilibrium mechanism is the Security Market Fan of
\citet{EiksethLindset2009}: heterogeneous effective tax rates
$\tw\theta_j$ across assets reshape the SML from a single line into a fan,
producing the beta-dependent pricing; see \Cref{sec:capm_smf}.
\end{remark}

\begin{remark}[Connection to the tax base Remark]
The Remark on tax base specification in \Cref{sec:mm_perspective} observed
that a book-value tax claim with $\beta_{\text{tax}} \approx 0$ acts like
riskless debt. This is precisely the one-period result
(\Cref{thm:bv_oneperiod}), where the deterministic tax $\tw B$ is discounted
at the risk-free rate. The multi-period result refines this: when~$\theta$ is
stationary, the perpetual tax claim has $\beta_{\text{tax}} = \beta_U$
(Observation~3), and the leverage-like effect identified in the Remark
disappears for the capitalised claim. The transition from
$\beta_{\text{tax}} \approx 0$ (one-period) to $\beta_{\text{tax}} = \beta_U$
(perpetuity) is the key structural difference.
\end{remark}

\subsubsection{Comparison with Johnsen and Lensberg}\label{sec:bv_jl}

\citet{JohnsenLensberg2014} analyse the effect of book-value wealth taxation
on equity prices within a partial equilibrium CAPM framework. Their central
result is a pricing formula
\[
  V_{JL} = \frac{k - \tw}{k}\; V^0,
\]
which they use to argue that the wealth tax reduces asset values by
approximately 10\% (for Norwegian tax rates). This formula, already discussed
in \Cref{sec:mm_perspective}, contains two compounding errors.

\begin{corollary}[Identification of errors in JL]\label{cor:jl_errors}
The \citet{JohnsenLensberg2014} pricing formula contains two errors relative
to the no-arbitrage pricing derived in \Cref{thm:bv_oneperiod} and
\Cref{prop:bv_multiperiod}:
\begin{enumerate}
  \item[\textup{(i)}] \textbf{Pre-tax discount rate.} JL discount
  after-tax cash flows at the pre-tax rate~$k$, whereas no-arbitrage
  requires the after-tax rate $k^A = k - \tw(1+\rf) + \tw\theta$. This
  double-counts the tax: cash flows are reduced by the tax, but the
  discount rate does not reflect the corresponding reduction in opportunity
  cost.

  \item[\textup{(ii)}] \textbf{Risk of the tax claim.} JL implicitly treat
  the tax claim as having $\beta_{\textup{tax}} = 0$ (the one-period
  characterisation), but apply this to a perpetual setting. Under the
  Gordon steady state with stationary~$\theta$, the correct beta is
  $\beta_{\textup{tax}} = \beta_U$ (Observation~3), so the tax claim carries
  the same systematic risk as the underlying asset. A related critique
  of the JL framework appears in \citet{Sandvik2016}; the one-period
  beta preservation result of \citet{HansenSandvik2022} highlights the
  time-horizon dependence.
\end{enumerate}
These errors compound: (i) overstates the tax burden, and (ii) misallocates
the risk, producing the spurious prediction that the wealth tax reduces
asset values by approximately 10\%.
\end{corollary}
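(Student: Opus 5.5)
The plan is to prove the corollary by direct comparison. I will rewrite the \citet{JohnsenLensberg2014} formula $V_{JL} = (k-\tw)V^0/k$ in the same required-return form used in the derivation of \Cref{prop:bv_multiperiod}, namely $V = E[\text{after-tax cash flow}]/(\text{discount rate})$. Then I will read off the two points where it departs from the no-arbitrage benchmark. Concretely, $E[x] = kV^0$, and JL subtract a tax flow $\tw V^0$ from the numerator, giving $(k-\tw)V^0$. Their denominator is $k$, and I will isolate this choice as error~(i).

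For (i), I will invoke the no-arbitrage derivation behind \Cref{thm:bv_oneperiod} and \Cref{prop:bv_multiperiod}(a). The tax applies to every alternative in the investor's opportunity set, including the risk-free asset, so the risk-neutral discounting must use $1+\rf^A = (1-\tw)(1+\rf)$. In the perpetual book-value case this gives $k^A = k - \tw(1+\rf) + \tw\theta$. I will show that replacing $k$ by $k^A$ while keeping JL's after-tax cash flow moves the valuation from $V_{JL}$ toward the benchmark of \eqref{eq:bv_perpetuity_pe}. At $\theta = 1$ this collapses to the market-value neutrality $V = V^0$ of \eqref{eq:mm_pricing}. This shows that retaining $k$ double-counts the tax: it is removed from cash flows but not from the opportunity cost. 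The inequality $k^A < k$ whenever $\theta < 1+\rf$ gives $V_{JL} < V_{\textup{PE}}$, which is the claim that the tax burden is overstated.

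For (ii), I will use Observation~3. In a Gordon steady state, $\tw B_t = \tw\theta V_t$ with $\theta$ constant, so the perpetual tax claim is a fixed fraction of a $\beta_U$-risk value process. Its beta is therefore $\beta_U$, by the same linearity argument as the Hamada computation in \Cref{sec:mm_perspective}. JL's treatment corresponds to discounting the tax stream like riskless debt, which is the one-period characterisation of \Cref{thm:bv_oneperiod}. I will show that feeding $\beta_{\text{tax}} = 0$ into the Hamada formula produces an equity beta above $\beta_U$ and hence a higher equity discount rate. Since the true perpetual claim has $\beta_{\text{tax}} = \beta_U$, this leverage effect is spurious.

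Finally, I will argue that the two errors compound: each one, taken separately, pushes the valuation below the no-arbitrage value. I will then evaluate $(k-\tw)/k$ at representative Norwegian parameters to recover the roughly 10\% figure. The main obstacle is making (ii) precise, because JL's formula does not display a beta explicitly, so attributing $\beta_{\text{tax}}=0$ to them requires interpretation. I would handle this by reconstructing their derivation as an MM split in which the tax claim is valued as a perpetuity discounted at a rate not matched to its risk, in the spirit of the tax-base Remark. I would then show that the discrepancy between that valuation and the value obtained with a $\beta_U$-risk claim is exactly the residual gap remaining after correcting (i).
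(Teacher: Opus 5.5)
Your skeleton matches how the paper supports the corollary: read off the discount-rate difference, invoke Observation~3, and evaluate $\tw/k$. Two of your steps fail, however.

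The first is your correction for (i). You keep JL's numerator $(k-\tw)V^0$ and replace the denominator $k$ by $k^A=k-\tw(1+\rf)+\tw\theta$. But in the derivation of \Cref{prop:bv_multiperiod}(a), $k^A$ is applied to the \emph{pre-tax} expected payoff: $V=E[x]/k^A$ with $E[x]=kV^0$. The term $+\tw\theta$ is already the book-value tax, expressed as a rate on $V$. Equivalently, the after-tax flow $E[x]-\tw\theta V$ is discounted at $k-\tw(1+\rf)$. Your hybrid therefore deducts the tax twice, once in the numerator and once inside $k^A$. At the parameters of \Cref{tab:bv_numerical} it gives $9/0.0947\approx 95.0$ rather than $V_{\textup{PE}}\approx 105.6$. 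At $\theta=1$ it gives $(k-\tw)V^0/(k-\tw\rf)\approx 90.3$, so your claimed collapse to $V=V^0$ is false. This check cannot validate step (i) anyway. Even the correct formula (a) returns $kV^0/(k-\tw\rf)$ at $\theta=1$. The exact neutrality in \eqref{eq:mm_pricing} comes from the end-of-period tax on $x+V$ with $k^A=(1-\tw)(1+k)-1$. Your inequality $V_{JL}<V_{\textup{PE}}$ is true, but it follows from $V_{JL}<V^0<kV^0/k^A$, i.e.\ from pairing $k^A$ with $E[x]$, not from your hybrid.

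The second is the attribution you promise for (ii). You want to show that the gap remaining after correcting (i) equals the effect of pricing the tax claim at $\beta_{\text{tax}}=0$ instead of $\beta_U$. This cannot be delivered. The gap your hybrid leaves is the double count above, not a beta effect. Moreover, the displayed JL formula $V^0-\tw V^0/k$ capitalises the tax flow at $k=\rf+\beta_U\mu$, which is the $\beta_U$ rate. A riskless treatment would capitalise it at $\rf$ and give $V^0(1-\tw/\rf)\approx 66.7$. So $\beta_{\text{tax}}=0$ is not encoded in $(k-\tw)/k$, and your MM reconstruction would find the rate \emph{matched} to $\beta_U$, the opposite of what you need.

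The paper makes no such decomposition. It grounds (ii) in the tax-base Remark of \Cref{sec:mm_perspective}, which cites JL's riskless-debt, leverage-style treatment of the book-value tax. It then contrasts the one-period riskless characterisation of \Cref{thm:bv_oneperiod} with Observation~3. The compounding claim rests only on \Cref{tab:bv_numerical}: 90 versus 100.5--105.6, with $10\%=\tw/k$.

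So pair $k^A$ with $E[x]$ for (i), and present (ii) as this qualitative comparison rather than as an identity.
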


\subsubsection{Numerical illustration}\label{sec:bv_numerical}

\Cref{tab:bv_numerical} illustrates the pricing effect for a representative
parameterisation: $\rf = 0.03$, $k = 0.10$, $\tw = 0.01$,
$\theta = 0.50$, and $V^0 = 100$.

\begin{table}[ht]
\centering
\caption{Book-value wealth tax: pricing effects under alternative models.}
\label{tab:bv_numerical}
\small
\begin{tabular}{@{}lrl@{}}
\toprule
Model & $V$ & Comment \\
\midrule
No tax                         & 100.0 & Benchmark $V^0$ \\
Market-value tax (neutral)     & 100.0 & \Cref{prop:pricing_gen} \\
JL formula                     &  90.0 & $V_{JL} = (k-\tw)/k \cdot V^0$ \\
One-period (\Cref{thm:bv_oneperiod})
  & 100.5 & $(V^0 - \tw B/(1+\rf))/(1-\tw)$ \\
PE perpetuity (\Cref{prop:bv_multiperiod}a)
  & 105.6 & $kV^0/(k - \tw(1+\rf-\theta))$ \\
GE, $\beta = 0.5$             & 102.7 & \eqref{eq:bv_perpetuity_ge} \\
GE, $\beta = 1.0$             & 100.0 & Reduces to $V^0$ \\
GE, $\beta = 1.5$             &  97.4 & \eqref{eq:bv_perpetuity_ge} \\
\bottomrule
\end{tabular}
\end{table}

The sign reversal relative to JL is the most striking feature: JL predict a
10\% value \emph{decrease}, while the correct no-arbitrage pricing in partial
equilibrium yields a 0.5\% to 5.6\% value \emph{increase}. The direction
reverses because the book-value base ($\theta = 0.50$) creates a lighter
effective tax burden than market-value taxation, so assets taxed on book
value are worth more, not less, than under the neutral benchmark.

The magnitude depends on the horizon: the one-period effect is small (0.5\%)
because only the current-year tax obligation benefits from the book-value
discount. The perpetuity effect is larger (5.6\% in PE) because the entire
future stream of tax obligations is discounted, and each one benefits from
$\theta < 1$. The GE beta-dependence arises because the market risk premium
adjusts: for the market portfolio ($\beta = 1$), the book-value benefit is
fully absorbed into a lower required return, so $V = V^0$. For $\beta < 1$
the asset retains a net benefit ($V > V^0$), while for $\beta > 1$ the
required return falls by more than the tax benefit, and $V < V^0$.

% -------------------------------------------------------------------------
\subsection{Liquidity Frictions}\label{sec:liq_frictions}
% -------------------------------------------------------------------------

\subsubsection{Setup}\label{sec:liq_setup}

The main text assumes that the investor can liquidate shares to pay the
wealth tax without cost (\Cref{sec:conditions_economic}, ``Frictionless
rebalancing''). In practice, selling assets to pay a wealth tax incurs
transaction costs that vary across asset classes: listed equities face
bid-ask spreads and market impact costs; real estate faces agent fees,
transfer taxes, and search frictions; private equity positions may be
effectively non-tradeable for extended periods.

We model this by introducing a stochastic illiquidity cost~$c_j$ for
asset~$j$: when the investor sells a fraction of asset~$j$ to pay the wealth
tax, a fraction~$c_j$ of the proceeds is lost to transaction costs. The
effective tax rate on market value becomes $\tw / (1-c_j)$, because the
investor must sell more shares to generate the required after-cost proceeds.
Since~$c_j$ varies across assets and is stochastic, universality is
violated: the effective tax burden differs across assets, and the
multiplicative structure that drives neutrality breaks down.

The liquidity channel differs from the book-value channel in an important
respect: it operates even when the tax base is market value. Book-value
taxation creates non-neutrality through the \emph{definition} of the tax
base; liquidity frictions create non-neutrality through the \emph{mechanics}
of tax payment. Both channels violate universality, but through different
mechanisms.

\begin{remark}[Practical payment mechanics]\label{rem:payment_practice}
The proportional-dilution mechanism is a modelling abstraction that
represents an upper bound on the liquidity effect. In practice, the wealth
tax is rarely paid by selling shares in a single year-end transaction. In Norway,
the personal wealth tax (assessed on 31~December values) is collected
through quarterly advance payments (\emph{forskuddsskatt}) during the
assessment year, smoothing the liquidity demand over time. More importantly,
investors typically cover the tax through a hierarchy of payment sources:
(i)~other liquid assets or income (salary, interest, rental income),
(ii)~dividend payments from the company, and (iii)~share sales---with
outright forced liquidation as a last resort rather than the default
mechanism. Empirical evidence supports this hierarchy:
\citet{BerzinsBohrenStacescu2022} find that Norwegian private firms
increase dividend payouts in response to their owners' wealth tax
obligations, while \citet{Ring2020} documents precautionary saving behaviour
consistent with households maintaining liquid buffers to meet future tax
liabilities. \citet{AlstadsaeterBjornebyKopczukMarkussenRoed2022} observe
that countries implementing wealth taxes make ``practical compromises
regarding treatment and valuation of different asset categories to ease
assessment and liquidity difficulties.'' The forced-selling model should
therefore be interpreted as characterising the marginal cost of the
\emph{least liquid} assets in the portfolio---those for which the payment
hierarchy is exhausted---rather than the average cost across all holdings.
The dividend extraction channel (\Cref{sec:div_extraction}) formalises the
more common payment mechanism.
\end{remark}

\subsubsection{Multi-factor neutrality without friction}%
\label{sec:liq_neutral}

Before introducing frictions, we note that the neutrality results of
Sections~\ref{sec:single_gbm}--\ref{sec:generalisation} extend immediately
to any multi-factor pricing model. Whether expected returns are determined by
a single market factor (CAPM), three factors (Fama--French), or an arbitrary
$K$-factor model with factors $f_1, \ldots, f_K$ (which may include a
liquidity factor), the wealth tax operates as a multiplicative scalar that
reduces all factor premia uniformly:
\[
  E[R_j^A] - \rf^A
  = (1-\tw)\bigl(E[R_j^B] - \rf\bigr)
  = (1-\tw)\sum_{k=1}^K \beta_{j,k}\,\lambda_k,
\]
where $\lambda_k$ is the premium for factor~$k$. The factor betas
$\beta_{j,k}$ are unchanged, and the relative premia across assets are
preserved. Adding liquidity as a priced factor---as in the models of
\citet{AcharyaPedersen2005}, \citet{PastorStambaugh2003}, or
\citet{Amihud2002}---changes nothing about neutrality, as long as the tax
is paid without friction. The key insight is that neutrality is a property
of the \emph{tax mechanism} (multiplicative scaling), not of the asset
pricing model.

\subsubsection{Pricing with friction}\label{sec:liq_pricing}

Introduce a stochastic illiquidity cost~$c_j \in [0,1)$ that the investor
incurs when liquidating shares of asset~$j$ to pay the wealth tax. The
cost~$c_j$ may be correlated with the asset's payoff~$x_j$ and with
aggregate market conditions.

The after-tax cash flow from one share of asset~$j$ is
\[
  x_j - \tw V \cdot \frac{1}{1-c_j},
\]
where the factor $1/(1-c_j)$ reflects the additional shares that must be
sold to cover the transaction cost. Since $c_j$ is stochastic, the after-tax
factor $(1 - \tw/(1-c_j))$ is random, and the clean multiplicative structure
of the main text is lost.

\begin{proposition}[Pricing with Liquidity Frictions]\label{prop:liq_pricing}
Under no-arbitrage, the one-period price of asset~$j$ under wealth taxation
with stochastic illiquidity cost~$c_j$ satisfies
\begin{equation}\label{eq:liq_pricing}
  V \approx V^0
    - \frac{\tw}{1+\rf}\,V \cdot E^{\mathbb{Q}}[c_j]
    - \frac{\tw}{(1+\rf)^2}\,
      \Cov^{\mathbb{Q}}\!\bigl(x_j + V,\; c_j\bigr),
\end{equation}
where $E^{\mathbb{Q}}$ and $\Cov^{\mathbb{Q}}$ denote expectation and
covariance under the risk-neutral measure.
\end{proposition}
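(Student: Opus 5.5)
The plan is to treat the friction as a perturbation of the frictionless, market-value-taxed benchmark, where pricing neutrality (\Cref{prop:pricing_gen}) and the MM derivation in \Cref{sec:mm_perspective} give $V=V^0$ exactly. I will then expand the friction term to first order in $\tw$ and in $c_j$.

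\textbf{Step 1: pricing equation.} First I write the end-of-period position as the gross payoff $x_j+V$ (dividend plus continuation value). The tax is assessed on this market value, as in the MM setup, and the investor must sell $\tw(x_j+V)/(1-c_j)$ of it. Using $1/(1-c_j)=1+c_j+O(c_j^2)$, the after-tax payoff becomes
\[
  (1-\tw)(x_j+V)\;-\;\tw\,c_j\,(x_j+V)\;+\;O(\tw c_j^2).
\]
The first term is exactly the frictionless after-tax payoff. By the argument leading to \eqref{eq:mm_pricing}, pricing it at the after-tax rate returns $V^0$: the factor $(1-\tw)$ cancels between numerator and denominator.

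\textbf{Step 2: price the friction term and decompose.} The remaining term, $-\tw\,c_j(x_j+V)$, is a stochastic claim. I price it under $\mathbb{Q}$, which the introduction of the tax leaves unchanged, as in the proof of \Cref{thm:bv_oneperiod}. The discount factor is $1/(1+\rf)$, up to an error of order $\tw$ that multiplies a term already of order $\tw$ and is therefore dropped. This gives
\[
  V\approx V^0-\frac{\tw}{1+\rf}\,E^{\mathbb{Q}}\!\bigl[c_j(x_j+V)\bigr].
\]
Next I split the expectation as
\[
  E^{\mathbb{Q}}\!\bigl[c_j(x_j+V)\bigr]=E^{\mathbb{Q}}[c_j]\,E^{\mathbb{Q}}[x_j+V]+\Cov^{\mathbb{Q}}(x_j+V,c_j).
\]
To leading order $E^{\mathbb{Q}}[x_j+V]=(1+\rf)V$ by the martingale property of discounted prices. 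The first piece therefore yields the mean-cost term $\tw V\,E^{\mathbb{Q}}[c_j]/(1+\rf)$, up to first-order factors of $(1+\rf)$ that I will track explicitly. The second piece yields the covariance term of \eqref{eq:liq_pricing}.

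\textbf{Main obstacle.} I expect the hard part to be the bookkeeping of the discounting conventions so that the stated powers of $(1+\rf)$ come out. There are two issues. First, the mean term is stated with a single factor $1/(1+\rf)$, whereas Step 2 as written gives $\tw V E^{\mathbb{Q}}[c_j]$ with no such factor. Second, the covariance coefficient carries an extra $1/(1+\rf)$.

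For both, my first attempt is to normalise the cost per unit of beginning-of-period value, writing $c_j(x_j+V)/(1+\rf)$ as the cost per dollar invested. I will then check whether this reproduces $(1+\rf)^{-2}$ on the covariance term. If it does not, I will state precisely which convention for $c_j$ yields \eqref{eq:liq_pricing}. I will also note that the alternatives differ only by terms of order $\tw\rf\,c_j$, which are negligible within the approximation.

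Finally I record the interpretation. The mean cost lowers $V$ relative to $V^0$. If the cost is high precisely when payoffs are low (a negative covariance, as in liquidity crises), the covariance term pushes $V$ back up. If the cost rises with payoffs, it lowers $V$ further.
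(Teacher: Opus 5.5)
Your route is the paper's: linearise $1/(1-c_j)$, let $(1-\tw)$ cancel against $1+\rf^A=(1-\tw)(1+\rf)$ so that the frictionless piece returns $V^0$, and split the $\mathbb{Q}$-price of the friction piece into a level term and a covariance term. On one point you improve on the paper's sketch. It writes the tax term as $\tw V/(1-c_j)$ with $V$ the deterministic current price, so after linearisation its $\mathbb{Q}$-expectation is $\tw V(1+E^{\mathbb{Q}}[c_j])$ and no covariance can arise. Taxing the stochastic end-of-period value $x_j+V$, as you do, is what actually generates term (iii).

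Two steps, however, do not go through as written. In Step~1 you invoke the stationary fixed-point argument behind \eqref{eq:mm_pricing} while identifying the continuation value in $x_j+V$ with the unknown price $V$. The frictionless piece then prices to $(E^{\mathbb{Q}}[x_j]+V)/(1+\rf)=V^0+(V-V^0)/(1+\rf)$, not to $V^0$. The residual is of the same order $\tw c_j$ as the effect you are computing. Solving the fixed point gives $V-V^0\approx-\frac{\tw}{(1-\tw)\rf}E^{\mathbb{Q}}[c_j(x_j+V)]$, a perpetuity multiple of the one-period effect. This is the same horizon distinction as between \Cref{thm:bv_oneperiod} and \Cref{prop:bv_multiperiod}. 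For the one-period statement you must treat $x_j+V$ as an exogenous end-of-period payoff, as in \Cref{thm:bv_oneperiod}, with $V^0$ its $\mathbb{Q}$-discounted value. Step~1 is then just the $(1-\tw)$ cancellation.

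The second gap is the one you flag yourself. Under that reading your computation yields
\[
  V\approx V^0-\tw V\,E^{\mathbb{Q}}[c_j]-\frac{\tw}{1+\rf}\,\Cov^{\mathbb{Q}}(x_j+V,\,c_j),
\]
so each friction term is $(1+\rf)$ times the corresponding term of \eqref{eq:liq_pricing}. Your normalisation, a period-end cost of $\tw c_j(x_j+V)/(1+\rf)$, does reproduce \eqref{eq:liq_pricing} exactly. But it levies the friction on the end-of-period value discounted back one period, not on the value actually sold. It is a convention chosen to fit the target and must be stated as a modelling assumption, not presented as a derivation. Without it, you have proved \eqref{eq:liq_pricing} only to leading order in $\rf$ as well, where the displayed powers of $1+\rf$ carry no information. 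Your $O(\tw\rf c_j)$ remark is right, but it marks the limit of the result rather than completing the proof.

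The difficulty is structural. The level coefficient $\frac{\tw}{1+\rf}$ is exactly what a deterministic beginning-of-period base gives when discounted at the consistent rate $1+\rf-\tw$: then $V(1+\rf-\tw)=(1+\rf)V^0-\tw V(1+E^{\mathbb{Q}}[c_j])$, with $x_j$ the full period-end payoff. Such a base produces no covariance, however. A stochastic base produces the covariance but moves the level coefficient.

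Your sign reading of the covariance term is correct for the formula as displayed.
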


\begin{proof}[Derivation sketch]
The no-arbitrage price satisfies
$V = E^{\mathbb{Q}}[x_j - \tw V/(1-c_j)] / (1+\rf^A)$. Expanding
$1/(1-c_j) \approx 1 + c_j$ for small~$c_j$ and separating the
deterministic and stochastic components yields three terms: (i)~the no-tax
value~$V^0$ (from the main text neutrality result), (ii)~a level
effect proportional to~$E^{\mathbb{Q}}[c_j]$ representing the expected
illiquidity cost, and (iii)~a covariance effect capturing the risk premium
from the correlation between asset payoffs and illiquidity costs.
\end{proof}

\begin{remark}[Two effects]
The pricing impact decomposes into a \emph{level effect} and a
\emph{covariance effect}:
\begin{itemize}
  \item \textbf{Level effect} ($E^{\mathbb{Q}}[c_j]$): The expected
  illiquidity cost reduces the asset's value in proportion to the tax rate.
  This is a first-order effect that depends on the asset's average
  liquidity.

  \item \textbf{Covariance effect} ($\Cov^{\mathbb{Q}}(x_j + V, c_j)$):
  If the asset becomes more illiquid precisely when its payoff is low
  (positive covariance between~$c_j$ and losses), the tax-induced forced
  selling is most costly in bad states of the world. This creates a
  liquidity risk premium that is specific to the wealth tax mechanism.
\end{itemize}
\end{remark}

\begin{remark}[Structural insight]
The wealth tax converts a multiplicative (neutral) tax mechanism into a
multiplicative-plus-additive mechanism. In the frictionless case, the
after-tax factor $(1-\tw)$ is a constant multiplier that cancels in ratios.
With friction, the effective factor $(1 - \tw/(1-c_j))$ is stochastic, and
its covariance with asset payoffs introduces an additive,
asset-specific distortion that does not cancel.
\end{remark}

\subsubsection{Connection to the liquidity pricing literature}%
\label{sec:liq_ap}

The covariance effect in \eqref{eq:liq_pricing} maps directly onto the
three liquidity betas identified in the liquidity-adjusted CAPM of
\citet{AcharyaPedersen2005}. In their framework, expected returns reflect
not only the covariance of returns with the market, but also three
additional terms:
\begin{align}
  \beta_{c,c} &: \quad \Cov(c_j, c_M), \label{eq:ap_beta_cc} \\
  \beta_{c,r} &: \quad \Cov(c_j, R_M), \label{eq:ap_beta_cr} \\
  \beta_{r,c} &: \quad \Cov(R_j, c_M), \label{eq:ap_beta_rc}
\end{align}
where $c_M$ denotes aggregate market illiquidity and $R_M$ is the market
return. The first captures commonality in liquidity, the second captures
the tendency for assets to become illiquid when the market rises (or falls),
and the third captures the tendency for returns to be low when the market
is illiquid.

The wealth tax \emph{amplifies} all three channels. Forced selling to pay
the tax creates correlated liquidity demand across all taxed investors,
increasing commonality in illiquidity costs ($\beta_{c,c}$). If
tax-induced selling is concentrated in downturns (when portfolio values
are lower relative to tax liabilities), it increases the covariance between
illiquidity and market conditions ($\beta_{c,r}$ and~$\beta_{r,c}$). The
magnitude depends on the fraction of aggregate ownership subject to the
wealth tax and on the concentration of tax-motivated selling in time.

\begin{remark}
The liquidity channel operates even under market-value taxation. Unlike the
book-value channel, which requires a divergence between the tax base and
market value, the liquidity channel arises from the mechanics of tax
\emph{payment}: any tax that must be paid by selling assets incurs
transaction costs. The effect is therefore a general feature of wealth
taxation, not specific to any particular tax base specification.
\end{remark}

\citet{PastorStambaugh2003} estimate a tradeable liquidity factor with a
risk premium of approximately 7.5\% per year. \citet{Amihud2002} documents a
significant cross-sectional relationship between illiquidity (measured as
price impact per unit of trading volume) and expected returns. The wealth tax
mechanism described here provides a specific channel through which these
liquidity premia are amplified: the tax creates a periodic, predictable
source of forced selling that is correlated across investors and
concentrated in time.

% -------------------------------------------------------------------------
\subsection{Dividend Extraction and Investment Distortion}%
\label{sec:div_extraction}
% -------------------------------------------------------------------------

\subsubsection{Relation to Appendix~A}\label{sec:div_relation}

\Cref{app:dividends} analyses the financial mechanics of dividend
payment vs.\ share sales and shows that the \emph{method} of tax
payment is irrelevant for pricing neutrality (\Cref{app:div_pricing}). The
present section addresses a different question: when the wealth tax forces
the firm (or investor) to extract dividends that would otherwise have been
reinvested, the resulting loss of investment opportunities has a real cost
that breaks pricing neutrality. The non-neutrality arises not from the
financial mechanics of dividend payment, but from the \emph{foregone
investment} that the dividend extraction necessitates.

\subsubsection{Tax payment constraint}\label{sec:div_constraint}

Consider a firm with cost of equity~$k$, payout ratio~$\delta$, and
expected cash flow~$\bar{x} = k V^0$. Under book-value taxation, the
wealth tax liability per share is $\tw\theta V$ per period. If the
investor cannot (or prefers not to) sell shares to pay the tax, the
dividend must cover the tax liability, imposing a minimum payout
constraint:
\begin{equation}\label{eq:div_constraint}
  \delta \geq \frac{\tw\theta}{k}.
\end{equation}
This constraint binds when the firm would otherwise choose a lower payout
ratio---typically growth firms with high reinvestment needs---and when the
investor faces high costs of selling shares, as is the case for private
firms with illiquid equity.

For a concrete illustration: with $\tw = 0.01$, $\theta = 0.50$, and
$k = 0.10$, the minimum payout ratio is $\delta \geq 0.05$, or 5\% of
expected cash flows. For a growth firm that would optimally pay out only
2--3\% (retaining the rest for reinvestment), the wealth tax forces an
increase in the payout ratio that reduces available investment capital.

\subsubsection{Investment distortion}\label{sec:div_investment}

When the payout constraint~\eqref{eq:div_constraint} binds, the firm
foregoes investment opportunities whose internal rate of return exceeds the
cost of capital. The value loss depends on the profitability of the
foregone investment.

\begin{proposition}[Investment Distortion]\label{prop:div_distortion}
If the firm has investment opportunities with internal rate of
return~$\rho > k$, and the payout constraint~\eqref{eq:div_constraint}
binds, the value loss relative to no tax is
\begin{equation}\label{eq:div_distortion}
  \Delta V = \frac{\rho - k}{k}
    \left(\tw\theta\, V - \delta\,\bar{x}\right),
\end{equation}
where $\delta\,\bar{x}$ is the dividend the firm would pay absent the tax
constraint, and $\tw\theta\, V - \delta\,\bar{x}$ is the
additional payout forced by the tax.
\end{proposition}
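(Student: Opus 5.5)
The plan is to value the forced extra payout as a diverted investment flow and capitalise the net present value it would have earned. I will work in the Gordon steady state used for \Cref{prop:bv_multiperiod}. There the firm's expected cash flow is $\bar{x} = kV^0$, and each period a fraction $\delta$ is paid out while the rest is reinvested. When \eqref{eq:div_constraint} binds, the dividend must rise from $\delta\bar{x}$ to the tax liability $\tw\theta V$. The first step is therefore to identify the per-period diverted amount
\[
  I \equiv \tw\theta V - \delta\bar{x} \;>\; 0,
\]
which is exactly the capital withdrawn from the firm's investment programme because of the tax. Positivity is precisely the binding condition.

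The second step is to value one unit of foregone investment. A marginal project with internal rate of return $\rho$ yields a perpetual expected cash flow $\rho$ per unit invested. Since the project carries the firm's risk, it is discounted at the cost of equity $k$, giving a present value $\rho/k$ and an NPV of $\rho/k - 1 = (\rho-k)/k$ per unit. This is the standard perpetuity argument, and I would state it as a short lemma. The condition $\rho > k$ makes this NPV positive.

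The third step combines the two. Losing $I$ of investment in a period destroys value $\frac{\rho-k}{k}I$. The cash paid out is not itself lost to the shareholder: it goes to the government in place of share sales, and by the neutrality results (\Cref{prop:pricing_gen} and \Cref{app:div_pricing}) the payment method is price-irrelevant. So only the NPV forgone counts, which gives $\Delta V = \frac{\rho-k}{k}\bigl(\tw\theta V - \delta\bar{x}\bigr)$.

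The main obstacle is the time dimension. The diversion recurs every period, which would suggest a further capitalisation factor, and $V$ on the right-hand side is itself endogenous. I would handle this by reading \eqref{eq:div_distortion} as the per-period (first-order) value loss, evaluated at the current tax base $\tw\theta V$. I would note explicitly that treating $\theta$ as stationary (Observation~3) and ignoring the feedback of $\Delta V$ on $V$ is a first-order approximation in $\tw$. If a fully capitalised version were wanted, I would sum the recurring losses under Gordon discounting. I expect that sum to scale the expression by a growth-adjusted annuity factor, and I would flag it as a refinement rather than part of the stated proposition.
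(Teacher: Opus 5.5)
Your proposal matches the paper's derivation: treat the forced extra payout $\tw\theta V - \delta\bar{x}$ as displaced investment, value each unit at the perpetuity NPV loss $(\rho-k)/k$ (a project earning $\rho$, discounted at $k$), and multiply. Your caveat about recurrence is a fair clarification. If the constraint binds every period, the capitalised loss would carry an extra annuity factor, and the paper's sketch implicitly adopts the same per-period reading you propose.
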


\begin{proof}[Derivation sketch]
The forced additional payout $\tw\theta\, V - \delta\,\bar{x}$ is
redirected from investment (earning~$\rho$) to the tax authority. Each unit
of foregone investment destroys present value of $(\rho - k)/k$ in
perpetuity. The total value loss is the product of the unit cost and the
volume of displaced investment.
\end{proof}

\begin{remark}
The distortion is larger when (i)~the firm has high growth opportunities
($\rho \gg k$), (ii)~the book-to-market ratio~$\theta$ is high (increasing
the tax burden on book value), and (iii)~the firm's optimal payout
ratio~$\delta$ is low (so the constraint binds more tightly). The
distortion vanishes for firms whose optimal payout already exceeds the tax
threshold---typically mature firms with limited growth opportunities.
\end{remark}

\subsubsection{Optimal payment mechanism}\label{sec:div_optimal}

The investor faces a trade-off between two mechanisms for paying the
wealth tax: selling shares at illiquidity cost~$c_j$, or extracting
dividends at investment distortion cost $(\rho - k)/k$. The optimal
mechanism minimises the total cost:
\begin{equation}\label{eq:div_optimal}
  \text{Cost}_{\text{sell}} = c_j \cdot \tw\theta\, V, \qquad
  \text{Cost}_{\text{dividend}} = \frac{\rho - k}{k}
    \cdot \tw\theta\, V.
\end{equation}
Selling is preferred when $c_j < (\rho-k)/k$ (low illiquidity cost
relative to growth opportunities), and dividend extraction is preferred
when $c_j > (\rho-k)/k$ (high illiquidity cost, limited growth).

This creates a natural partition across asset classes:
\begin{itemize}
  \item \textbf{Listed, mature firms:} Low~$c_j$, low~$\rho$. Sell shares
  to pay tax; minimal distortion.

  \item \textbf{Listed, growth firms:} Low~$c_j$, high~$\rho$. Sell shares;
  moderate distortion from market impact.

  \item \textbf{Private, mature firms:} High~$c_j$, low~$\rho$. Extract
  dividends; moderate distortion.

  \item \textbf{Private, growth firms:} High~$c_j$, high~$\rho$. Both
  mechanisms are costly; maximal distortion. This is the asset class most
  affected by the wealth tax.
\end{itemize}

\citet{BerzinsBohrenStacescu2022} provide direct evidence for this
partition using Norwegian firm-level data. They find that the wealth tax
leads to significantly higher dividend payouts in private firms, with the
effect concentrated among firms with controlling shareholders facing binding
tax obligations. The associated reduction in investment is economically
large: firms with higher wealth tax exposure invest less and grow more
slowly, consistent with the dividend extraction channel formalised here.

% -------------------------------------------------------------------------
\subsection{Combined Effects and Implications}\label{sec:combined_effects}
% -------------------------------------------------------------------------

\subsubsection{Interaction of the three channels}\label{sec:combined_interaction}

The three channels identified above have opposing effects on asset values.
Book-value taxation raises values ($V > V^0$) when the tax base is below
market value ($\theta < 1$), because the effective tax burden is lighter.
Liquidity frictions and dividend extraction both lower values ($V < V^0$),
because they impose real costs on the tax payment process.

The net effect is asset-class dependent. \Cref{tab:combined} summarises the
qualitative direction for the principal asset classes affected by the
Norwegian wealth tax.

\begin{table}[ht]
\centering
\caption{Net pricing effect across asset classes (qualitative).}
\label{tab:combined}
\small
\begin{tabular}{@{}llllll@{}}
\toprule
Asset class & $\theta$ & $c_j$ & Book value & Liquidity & Dividend \\
\midrule
Listed, mature       & 0.5--1.0 & Low  & $+$small & $-$small & $-$negligible \\
Listed, growth       & 0.2--0.5 & Low  & $+$moderate & $-$small & $-$small \\
Private, mature      & 0.5--1.0 & High & $+$small & $-$moderate & $-$moderate \\
Private, growth      & 0.2--0.5 & High & $+$moderate & $-$large & $-$large \\
Real estate          & 0.3--0.8 & High & $+$moderate & $-$moderate & N/A \\
\bottomrule
\end{tabular}
\end{table}

For listed assets with low illiquidity costs, the book-value discount
($\theta < 1$) dominates, and the net effect is likely a modest value
\emph{increase}. For private firms with high illiquidity costs and
significant growth opportunities, the liquidity and dividend channels
dominate, and the net effect is a value \emph{decrease}---potentially
substantial.

\subsubsection{Empirical implications}\label{sec:combined_empirical}

The framework developed in this section explains several empirical
observations that are difficult to reconcile with either pure neutrality or
the predictions of \citet{JohnsenLensberg2014}:

\emph{Real effects concentrated in private firms.}
\citet{BerzinsBohrenStacescu2022} document that the wealth tax leads to
higher dividends and lower investment in private firms, with negligible
effects on listed firms. This is consistent with the asset-class partition
above: listed firms have low illiquidity costs and can sell shares to pay
the tax, while private firms face high~$c_j$ and must extract dividends.

\emph{Book-value discount as partial offset.}
The Norwegian practice of valuing unlisted shares at book value---often 50\%
or less of market value---mitigates the liquidity and dividend costs by
reducing the effective tax rate. The book-value channel provides a
\emph{partial offset} to the real costs imposed by the other two channels.
This explains why the observed effects, while significant, are smaller than
a na\"ive calculation based on the statutory wealth tax rate would suggest.

\emph{Precautionary saving.}
\citet{Ring2020} finds evidence that the wealth tax induces precautionary
saving, with households maintaining higher liquid asset balances to ensure
they can meet future tax obligations. This is consistent with the liquidity
channel: the expected cost of forced selling creates an incentive to hold
more liquid assets, even at the expense of lower expected returns.

\subsubsection{Policy implications}\label{sec:combined_policy}

The analysis yields several insights for tax policy design. First, the
welfare cost of a wealth tax is not uniform across asset classes. The
distortionary effects are concentrated among illiquid assets and growth
firms, while liquid, mature assets are largely unaffected.

Second, a market-value tax base is more neutral than a book-value base,
in the specific sense that it eliminates the asset-specific variation in
effective tax rates that the book-value channel creates. However, this does
not mean that market-value taxation eliminates all non-neutrality: the
liquidity and dividend channels operate regardless of the tax base
specification.

Third, the Norwegian practice of valuation discounts for unlisted assets
and real estate has the effect of partially compensating for the real costs
imposed by illiquidity and dividend extraction. While this compensation is
not formally calibrated to the liquidity costs, it operates in the right
direction: assets with the highest illiquidity costs (private firms, real
estate) receive the largest valuation discounts ($\theta \ll 1$).

Fourth, the interaction between the channels suggests that a well-designed
wealth tax would explicitly account for illiquidity costs in setting
valuation discounts, rather than relying on historical book values that may
or may not approximate the relevant cost. This would require estimating
asset-specific illiquidity costs~$c_j$---a substantial empirical
undertaking, but one that would improve the alignment between the effective
tax burden and the actual cost of compliance.

% =========================================================================
\section{Conclusion and Further Work}\label{sec:conclusion}
% =========================================================================

This paper establishes that a proportional wealth tax levied at a uniform
rate on the market value of all assets is neutral with respect to four
dimensions of the investor's problem: the risk-reward profile of wealth
(CV invariance), the optimal portfolio weights (including the tangency
portfolio), the geometric relationship between the tax and portfolio choice
(orthogonality), and the per-share asset price (pricing neutrality). These
results are derived first under geometric Brownian motion and then
generalised to any return distribution in the location-scale family; the
two distributional assumptions that matter---finite second moments and
proportionality of the tax---are minimal.

The neutrality rests on two conditions: universal taxation at market
value, and frictionless markets. The payment mechanism---whether the tax
is funded from dividends, share sales, or other taxable wealth---does
not affect the result, provided that reinvested dividends remain within
the tax base (Appendix~\ref{app:dividends}). A complementary analysis under the CAPM
(\Cref{sec:capm_specialcase}) confirms that after-tax betas equal pre-tax
betas, the security market line contracts uniformly, and---under CRRA
preferences---general equilibrium returns and prices are identical to the
no-tax benchmark. This analysis also identifies a pricing error in
\citet{Fama2021}, who adds the wealth tax to the cost of capital without
recognising that the entire opportunity set, including the risk-free rate,
is contracted by the same factor.

When these conditions fail, three channels of non-neutrality emerge
with opposing price effects: book-value taxation raises valuations for
assets with $\theta < 1$; liquidity frictions and dividend extraction lower
them. The net effect is asset-class specific and depends on institutional
details---book-to-market ratios, illiquidity costs, and the profitability
of foregone investment.

\medskip
\noindent\textbf{Directions for further work.}
Several extensions would strengthen and complement the analysis.

On the \emph{theoretical} side, the model could be embedded in a
multi-period consumption-saving framework with endogenous labour supply,
where the wealth tax interacts with the intertemporal allocation of
resources. A general equilibrium analysis with heterogeneous agents facing
different effective tax rates---as arises under progressive taxation or
asset-class exemptions---would connect the Security Market Fan mechanism of
\Cref{sec:capm_smf} to equilibrium asset pricing with realistic tax
codes. The interaction between the wealth tax and income or capital gains
taxes is unexplored in our framework and relevant for optimal tax design.

On the \emph{empirical} side, four avenues stand out. First, calibrating
asset-class-specific illiquidity costs~$c_j$ (listed equities, private
firms, real estate) would allow a quantitative assessment of the liquidity
channel. Second, Norwegian registry data, which link individual portfolio
holdings to tax records, could be used to test whether portfolio weights
are indeed invariant to the wealth tax rate---exploiting cross-sectional
variation in effective rates created by valuation discounts. Third, the
book-value channel can be estimated using variation in $\theta$ across
Norwegian firms combined with the pricing formulas of
\Cref{prop:bv_multiperiod}. Fourth, the dividend-extraction channel
predicts that firms whose owners face binding payout constraints exhibit
higher payout ratios around tax dates---a prediction testable with payout
data.

\medskip
\noindent\textbf{A note on structure.}
To streamline the main argument, three self-contained discussions have been
placed in appendices: the detailed classification of distributional
assumptions (\Cref{app:distributional}), the continuous-time versus
discrete-time comparison (\Cref{app:ct_dt}), and the survey of related
literature (\Cref{app:literature}). These can be read independently without
disrupting the main narrative.

% =========================================================================
% Acknowledgements
% =========================================================================

\subsection*{Acknowledgements}
The author acknowledges the use of Claude (Anthropic) for assistance with
literature review, \LaTeX{} typesetting, mathematical exposition, and
editorial refinement, and Lemma (Axiomatic AI) for review and proof
checking. All substantive arguments, economic reasoning, and conclusions
are the author's own.

% =========================================================================
% References
% =========================================================================

\bibliographystyle{plainnat}
%\bibliography{../references/master}

% =========================================================================
\appendix

\section{Alternative Tax Payment Mechanisms: Dividends vs.\ Share
Sales}\label{app:dividends}

The main results (Sections~\ref{sec:single_gbm}--\ref{sec:generalisation})
assume that the wealth tax is paid by selling a fraction $\tw$ of the
investor's position at each period end---the proportional-dilution mechanism.
This appendix considers an alternative: the wealth tax is paid from dividend
income rather than from share sales. We retain all other assumptions of the main text,
including the absence of income and capital gains taxes.

\subsection{Setup}\label{app:div_setup}

At each period end $i$, the asset pays a dividend $D_i$ per share. The dividend
yield is $\delta_i \equiv D_i / P_i$. The wealth tax liability is
$\tw N_{i-1} P_i$, and the investor uses dividend income $D_i N_{i-1}$ to pay
as much of this liability as possible before resorting to share sales.

Three cases arise, depending on the relationship between the dividend yield and
the tax rate.

\subsection{Case 1: Dividends Always Sufficient
  ($\delta_i \geq \tw$ with Certainty)}\label{app:div_case1}

If the dividend yield exceeds the tax rate in every period, the tax is fully
paid from dividends and no shares are sold:
\begin{equation}
  N_n = N_0 \quad \text{(constant)},
  \qquad
  W_n^A = N_0 P_n = W_n^B.
\end{equation}
The equity wealth of the taxed investor is \textbf{identical} to that of the
untaxed investor. All four propositions hold trivially---not because of
multiplicative separability, but because the tax does not touch the capital
base at all.

The economic content is different from the proportional-dilution case. Under
proportional dilution, the tax erodes the number of shares but the investor
consumes the full dividend $D_i$ per share. Under dividend payment, the
investor retains all shares but consumes only $(D_i - \tw P_i)$ per share. The
tax has effectively become a \textbf{consumption tax on dividend income}: it
reduces the investor's standard of living without affecting the trajectory of
equity wealth.

This distinction matters for welfare analysis (the investor's lifetime utility
from consumption is lower) but not for the portfolio choice, risk-reward, or
pricing results, which concern equity wealth and asset returns.

\subsection{Case 2: Dividends Insufficient, Constant Yield
  ($\delta < \tw$)}\label{app:div_case2}

If the dividend yield is constant at $\delta$ (dropping the time subscript) and
below the tax rate, all dividends go to tax and the shortfall
$(\tw - \delta) P_i$ per share is covered by selling
shares. The share count evolves as:
\begin{equation}
  N_i = N_{i-1} \bigl(1 - (\tw - \delta)\bigr),
  \qquad
  N_n = N_0 \bigl(1 - (\tw - \delta)\bigr)^n
\end{equation}
and equity wealth is:
\begin{equation}
  W_n^A = N_0 \bigl(1 - (\tw - \delta)\bigr)^n P_n
        = \bigl(1 - (\tw - \delta)\bigr)^n \cdot W_n^B.
\end{equation}
This is exactly the proportional-dilution model of
Sections~\ref{sec:single_gbm}--\ref{sec:generalisation} with the
\textbf{effective tax rate}
$\tw^{\mathrm{eff}} = \tw - \delta$ replacing $\tw$. The multiplicative
separability is preserved, and all four propositions hold with
$\tw^{\mathrm{eff}}$ in place of~$\tw$. The dividend acts as a partial shield
against capital erosion: the higher the yield, the smaller the effective
dilution rate.

In the limiting case $\delta = \tw$, the effective rate is zero and we recover
Case~1. In the limiting case $\delta = 0$ (no dividends), we recover the
original proportional-dilution model.

\subsection{Case 3: Stochastic Dividend Yield}\label{app:div_case3}

If the dividend yield $\delta_i = D_i / P_i$ varies stochastically across
periods, the effective liquidation rate at each period is
$\max(\tw - \delta_i, \, 0)$, and the share count evolves as:
\begin{equation}\label{eq:Nn_stochastic}
  N_n = N_0 \prod_{i=1}^{n} \bigl(1 - \max(\tw - \delta_i, \, 0)\bigr).
\end{equation}
The product $\prod_i (1 - \max(\tw - \delta_i, 0))$ is now a
\textbf{stochastic} quantity---it depends on the realised path of dividend
yields, which are themselves functions of the price process. The deterministic
multiplicative scalar $(1 - \tw)^n$ of the proportional-dilution model is replaced
by a random variable that is correlated with the cumulative return
$G^{(n)} = P_n / P_0$.

This breaks the multiplicative separability $W_n^A = c \cdot W_n^B$ with
deterministic~$c$, because the tax factor and the return are no longer
independent. Specifically:
\begin{equation}
  W_n^A = N_0 P_n \prod_{i=1}^{n}
    \bigl(1 - \max(\tw - \delta_i, \, 0)\bigr).
\end{equation}
The product
$P_n \cdot \prod_i (1 - \max(\tw - \delta_i, 0))$ is a product of correlated
random variables, and in general:
\begin{align}
  E[W_n^A] &\neq \Bigl(\prod_{i=1}^n
    E\!\bigl[1 - \max(\tw - \delta_i, 0)\bigr]\Bigr) \cdot E[W_n^B], \\
  \SD(W_n^A) &\neq \Bigl(\prod_{i=1}^n
    E\!\bigl[1 - \max(\tw - \delta_i, 0)\bigr]\Bigr) \cdot \SD(W_n^B).
\end{align}
The coefficient of variation of $W_n^A$ is therefore \textbf{not} in general
equal to that of $W_n^B$, and \Cref{prop:cv_gen} (CV invariance) may fail.
Similarly, the portfolio choice results
(\Cref{prop:portfolio_gen,prop:orthogonality_gen}) may be affected, because the
stochastic tax factor introduces an additional source of portfolio-level risk
that depends on the interaction between dividend yields and asset returns.

\subsection{Pricing Neutrality Under All Cases}\label{app:div_pricing}

\Cref{prop:pricing_gen} (pricing neutrality) is robust to the payment
mechanism. The value of a share to the investor depends on the discounted
stream of after-tax cash flows. Regardless of whether the tax is paid from
dividends or from share sales, the total tax liability is
$\tw \times \text{market value}$---the method of payment is a financing
decision, not a valuation one. In partial equilibrium, the taxed investor's
discount rate adjusts to reflect the tax, as shown by
\citet{BjerksundSchjelderup2022}, and the NPV of the share remains equal to
the market price for both investors.

More concretely: if the investor pays the tax from dividends, the after-tax
dividend per share is $(D_i - \tw P_i)$ instead of $D_i$, but the discount
rate also falls from $\rf$ to $\rf^A = \rf - \tw(1+\rf)$. These
effects cancel in the discounted cash flow calculation, preserving pricing
neutrality. Note, however, that this analysis concerns the \emph{financial
mechanics} of tax payment. When the tax forces dividend extraction that
displaces profitable investment, a \emph{real} cost arises that breaks
pricing neutrality; see \Cref{sec:div_extraction}.

\subsection{Summary}\label{app:div_summary}

\begin{table}[ht]
\centering
\caption{Effect of tax payment mechanism on results.}
\label{tab:payment}
\small
\begin{tabular}{@{}llccc@{}}
\toprule
Payment mechanism & $N_n$ & Mult.\ sep. & Props $1'$--$3'$ & Prop $4'$ \\
\midrule
Proportional dilution (share sales)
  & $N_0(1-\tw)^n$, det. & Yes & Hold & Holds \\
Dividends suff.\ ($\delta \geq \tw$)
  & $N_0$, const. & Trivially & Trivially & Holds \\
Div.\ insuff., const.\ $\delta < \tw$
  & $N_0(1-\tw^{\mathrm{eff}})^n$, det.
  & Yes & Hold (adj.) & Holds \\
Stochastic div.\ yield
  & Path-dep., stoch. & \textbf{Breaks} & \textbf{May fail} & Holds \\
\bottomrule
\end{tabular}
\end{table}

The proportional-dilution assumption of the main text is not merely a modelling
convenience. It ensures that the tax factor $(1 - \tw)^n$ is deterministic and
independent of the return realisation, which is the structural property that
drives Propositions~$1'$--$3'$. Alternative payment mechanisms preserve the
results if and only if the effective dilution rate remains deterministic.
Among the cases considered, this holds when the dividend yield is either zero
(original model), constant, or always above the tax rate. When the dividend
yield is stochastic and sometimes insufficient, the interaction between the tax
payment and the return process introduces a new source of risk that breaks the
multiplicative structure.

% =========================================================================

\section{The Role of Distributional Assumptions}\label{app:distributional}

The location-scale family serves a specific and limited role in the
generalisation. It is worth being precise about what it contributes and where
it is not needed.

\textbf{What the location-scale family provides.} The assumption contributes
two things. First, \emph{two-moment sufficiency}: \citet{Meyer1987} establishes
that if all alternatives belong to the same location-scale family, then
expected utility can be expressed as
$E[U(X)] = V(\mu_X, \sigma_X)$, justifying the mean-variance framework
without assuming quadratic utility; see also \citet[Ch.~4]{Ingersoll1987}.
Second, \emph{closure under linear combination}: in the multivariate case
(elliptical distributions), any portfolio return inherits the distributional
family of the individual asset returns
\citep[Theorem~2]{HamadaValdez2008}, so that
$R_P = \w^\top \R$ has the same shape parameter as~$\R$.

These two services are needed for Propositions~$2'$ and~$3'$, which require
the $(\sigma, \mu)$ representation of the efficient frontier and the geometric
contraction argument. They are \emph{not} needed for CV invariance
(Proposition~$1'$), which follows from scalar multiplication alone; for
tangency portfolio invariance (Proposition~$2'$a), which is an algebraic
identity; or for pricing neutrality (Proposition~$4'$), which is entirely
distribution-free.

\textbf{The elliptical hierarchy.} The relevant distributional classes form a
hierarchy:

\begin{table}[ht]
\centering
\caption{Distributional classes and their properties.}
\label{tab:elliptical}
\small
\begin{tabular}{@{}llcc@{}}
\toprule
Class & Univariate examples & Multivariate closure & Portfolio separation \\
\midrule
Normal / GBM       & Normal, lognormal         & Yes & Yes \\
Elliptical         & Student-$t$, logistic, Laplace & Yes & Yes \\
Location-scale     & All symmetric unimodal    & Requires elliptical & Requires elliptical \\
Finite 2nd moments & Any with $E[R^2] < \infty$ & Not necessarily & SR invariance only \\
\bottomrule
\end{tabular}

\smallskip
\noindent\footnotesize
Portfolio separation for elliptical distributions follows
\citet{OwenRabinovitch1983}.
\end{table}

Our results sit most naturally in the elliptical class, which is the broadest
family for which the full portfolio theory apparatus (mean-variance optimality,
two-fund separation) is known to hold.

\textbf{Summary.} \Cref{tab:assumptions} classifies the assumptions by their
role in each result.

\begin{table}[ht]
\centering
\caption{Assumptions required for each result.}
\label{tab:assumptions}
\small
\begin{tabular}{@{}lcccc@{}}
\toprule
Result & A1 (moments) & A2 (loc-scale) & A3 (prop.\ tax)
  & A4 (part.\ equil.) \\
\midrule
Prop $1'$: CV invariance
  & Required & Not required & \textbf{Essential} & Not required \\
Prop $2'$a: Tangency inv.
  & Required & Not required & \textbf{Essential} & Not required \\
Prop $2'$b: Full weight inv.
  & Required & MV only & \textbf{Essential} & Not required \\
Prop $3'$: Orthogonality
  & Required & Required & \textbf{Essential} & Not required \\
Prop $4'$: Pricing neut.
  & Not req. & Not required & \textbf{Essential} & \textbf{Essential} \\
\bottomrule
\end{tabular}

\smallskip
\noindent\footnotesize
``MV only'' means A2 is needed for mean-variance preferences but not for CRRA.
\end{table}

The single assumption that is essential to all four results is
\textbf{proportionality}---the tax is a fixed fraction of market value, applied
uniformly to all assets. This is what generates the multiplicative separability
$W_n^A = (1 - \tw)^n W_n^B$, which is the structural foundation for
everything that follows. The distributional assumption, by contrast, is a
regularity condition that ensures the relevant moments and geometric objects
are well-defined. It plays no substantive economic role.

% =========================================================================

\section{Continuous Time vs.\ Discrete Time}\label{app:ct_dt}

The continuous-time GBM formulation
(Sections~\ref{sec:single_gbm}--\ref{sec:pricing_gbm}) and the discrete-time
generalisation (\Cref{sec:generalisation}) yield the same qualitative results
but differ in detail:

\begin{table}[ht]
\centering
\caption{Continuous-time vs.\ discrete-time comparison.}
\label{tab:ct_dt}
\small
\begin{tabular}{@{}lll@{}}
\toprule
Feature & Continuous time (GBM) & Discrete time (general) \\
\midrule
Tax on drift & Additive: $\mu_W = \mu_P - \tw$
  & Multiplicative: $\mu_W = (1-\tw)(1+\mu_P) - 1$ \\
Tax on volatility & None: $\sigma_W = \sigma_P$
  & Scaling: $\sigma_W = (1-\tw)\sigma_P$ \\
Sharpe ratio & Invariant & Invariant \\
Tangency portfolio & Invariant & Invariant \\
Full weight vector & Invariant for all $f(\mu_W,\sigma_W)$
  & CRRA: invariant; MV: scales by $\frac{1}{1-\tw}$ \\
Geometry in $(\sigma,\mu)$ & Vertical translation & Homothetic contraction \\
\bottomrule
\end{tabular}
\end{table}

The continuous-time results are the natural limit of the discrete-time results
as the period length shrinks.

% =========================================================================

\section{Relation to the Existing Literature}\label{app:literature}

\textbf{\citet{BjerksundSchjelderup2022}} derive \Cref{prop:pricing_gen}
(pricing neutrality) in a discrete-time DCF framework without specifying a
return distribution. Their result that
$\mathrm{NPV}_0^D = \mathrm{NPV}_0^F = c_0/r$ is the pricing counterpart of
our framework, and rests on no-arbitrage rather than on any specific asset
pricing model (see \Cref{sec:independence_capm}). Our contribution is to embed
this in a unified treatment that also covers the portfolio choice and
risk-reward dimensions. A technical difference concerns the after-tax discount
rate.  \citeauthor{BjerksundSchjelderup2022} assume a beginning-of-period tax
base, for which the after-tax cost of capital is $k - \tw$
\citep{Kruschwitz2023}.  Our model assesses the tax at period end on current
market values, yielding $k^A = (1-\tw)(1+k)-1 = k - \tw - \tw k$
(\Cref{eq:mm_kA}).  The two expressions differ by the cross term~$\tw k$,
which is small for realistic parameters ($\tw = 1\%$, $k = 8\%$ gives
$\tw k = 0.08\%$) but is needed for no-arbitrage under end-of-period
assessment.  In both cases, the pricing neutrality result $V = V^0$
obtains because the tax reduces the numerator and denominator of the
valuation in the same proportion; the timing convention determines only the
form of the discount rate, not the conclusion.

\textbf{\citet{EiksethLindset2009}} develop a CAPM-like framework with
heterogeneous asset taxes across investors. They derive an after-tax beta equal
to the pre-tax beta multiplied by an asset-specific tax adjustment, and show
that the Security Market Line becomes a Security Market Fan. Their setting is
more general than ours (different tax rates on different assets), but in the
special case of a uniform proportional tax, their results are consistent with
the portfolio invariance and orthogonality we derive here.
\Cref{sec:capm_smf} shows that book-value taxation
(\Cref{sec:bv_taxation}) creates exactly the heterogeneous effective rates
that produce the Security Market Fan, providing the equilibrium mechanism
behind the beta-dependent GE pricing formula in
\Cref{prop:bv_multiperiod}(b).

\textbf{\citet{Fama2021}} argues that wealth taxes lower asset prices by
raising the pre-tax required return: in a Gordon perpetuity,
$P = D/(r + \tw) < D/r$. His analysis adds the wealth tax to the cost of
capital but does not adjust the discount rate for the fact that the entire
opportunity set---including the risk-free asset---is also taxed.
\Cref{sec:fama_error} shows that when the wealth tax applies universally, the
correct after-tax discount rate is $k^A = (1-\tw)(1+k) - 1 < k$, and the
price effect vanishes ($V = V^0$). Fama's result holds only when the marginal
investor has access to an untaxed outside option, making it a
partial-equilibrium statement despite its general-equilibrium framing.

\textbf{\citet{JohnsenLensberg2014}} model the wealth tax as a separate claim
on the firm using a Modigliani-Miller framework. For listed firms taxed on
market value, they derive $V = V^0(1 - \tw/k)$, where~$k$ is the pre-tax
cost of capital. As discussed in \Cref{sec:mm_perspective}, this formula
reduces the cash flows by the tax but retains the pre-tax discount rate;
using the after-tax discount rate recovers $V = V^0$, consistent with our
pricing neutrality result. Their analysis of non-listed firms, where the tax
base is book value rather than market value, identifies a leverage-like
increase in equity beta that is specific to the Norwegian institutional
setting and lies outside our framework of proportional taxation on market
value.

\textbf{\citet{Stowe2021}} analyses the capitalisation of wealth taxes into
stock and bond valuations, showing that taxed assets will have lower valuations
and higher required rates of return. His focus on valuation impacts across
asset classes complements our partial equilibrium analysis of pricing
neutrality between taxed and untaxed investors.

\textbf{\citet{HamadaValdez2008}} prove that the CAPM holds when returns
follow multivariate elliptical distributions, using a generalised Stein's
lemma. Their result shows that the elliptical class---within which our
Propositions~$2'$ and~$3'$ sit most naturally---is compatible with the full
portfolio theory apparatus. As discussed in \Cref{sec:independence_capm}, our
results do not \emph{require} CAPM; the Hamada--Valdez result establishes that
working within the elliptical class is consistent with it.

\textbf{\citet{Sandmo1985}} and \textbf{\citet{Stiglitz1969}} analyse the
effects of taxation on risk-taking in a broader setting that includes income
taxes, capital gains taxes, and their interaction with portfolio choice. The
orthogonality result (\Cref{prop:orthogonality_gen}) is specific to the
proportional wealth tax---income taxes operate on returns, not on the quantity
of assets, and therefore interact with portfolio choice in fundamentally
different ways.

The GBM framework and optimal portfolio choice follow
\textbf{\citet{Merton1969,Merton1971}}. The mean-variance portfolio theory
follows \textbf{\citet{Markowitz1952}}. The interaction of taxation with
portfolio choice has been studied by, among others,
\textbf{\citet{Constantinides1983}} and
\textbf{\citet{DammonSpattZhang2001}}, though the specific results on
proportional wealth taxation and orthogonality presented here do not appear to
be widely documented in the existing literature.

On the empirical side, \textbf{\citet{BjornebyEtAl2023}} find a positive
causal relationship between wealth tax liability and employment in Norwegian
closely held firms.  Their proposed mechanism---that intangible assets in
non-traded firms are effectively tax-exempt, incentivising owners to invest in
human capital within their businesses rather than in taxed financial
assets---is consistent with the non-uniform assessment channel discussed in
\Cref{sec:bv_taxation} and provides empirical support for the neutrality
benchmark: when effective tax rates differ across assets, investors reallocate
toward the less-taxed margin rather than reducing economic activity.

% =========================================================================
\end{document}